\newtheorem{lemma}{Lemma}
\newtheorem{proposition}{Proposition}
\newtheorem{remark}{Remark}
\newtheorem{example}{Example}
\def\ulamek#1#2{\mbox{\normalfont$\frac{#1}{#2}$}}
\DeclareMathOperator{\okr}{{\stackrel{{\scriptscriptstyle{\mathsf{def}}}}{=}}}
\DeclareMathOperator{\D}{d\!}
\DeclareMathOperator{\E}{e} 
\DeclareMathOperator{\I}{i}
   \DeclareMathOperator{\RE}{\mathfrak{Re}}
   \DeclareMathOperator{\IM}{\mathfrak{Im}}\def\ulamek#1#2{\mbox{\normalfont$\frac{#1}{#2}$}}
\numberwithin{equation}{section}
\begin{document}

\title{The Havriliak-Negami and Jurlewicz-Weron-Stanislavsky relaxation models revisited: memory functions based study}

\author{K. G\'{o}rska} 
\author{A. Horzela}
\email{katarzyna.gorska@ifj.edu.pl}
\email{andrzej.horzela@ifj.edu.pl}
\affiliation
{H. Niewodnicza\'{n}ski Institute of Nuclear Physics, Polish Academy of Sciences, 
ul.Eljasza-Radzikowskiego 152, PL 31342 Krak\'{o}w, Poland \vspace{2mm}}

\author{K. A. Penson}
\email{karol.penson@sorbonne-universite.fr}
\affiliation
{$^b$ Sorbonne Universit\'{e}, Campus Pierre et Marie Curie (Paris VI), CNRS UMR 7600,
Laboratoire de Physique Th\'{e}orique de la Mati\`{e}re Condens\'{e}e (LPTMC),
Tour 13, 5i\`{e}me \'{e}t., B.C. 121, 4 pl. Jussieu, F 75252 Paris Cedex 05, France\vspace{2mm}}

\begin{abstract}
We provide a review of theoretical results concerning the Havriliak-Negami (HN) and the Jurlewicz-Weron-Stanislavsky (JWS) dielectric relaxation models. We derive explicit forms of functions characterizing relaxation phenomena in the time domain - the relaxation, response and probability distribution functions. We also explain how to construct and solve relevant evolution equations within these models. These equations are usually solved by using the Schwinger parametrization and the integral transforms. Instead, in this work we replace it by the powerful Efros theorem.  That allows one to relate physically admissible solutions to the memory-dependent evolution equations with phenomenologically known spectral functions and, from the other side, with the subordination mechanism emerging from a stochastic analysis of processes  underpinning considered relaxation phenomena. Our approach is based on a systematic analysis of the memory-dependent evolution equations. It exploits methods of integral transforms, operational calculus and special functions theory with the completely monotone and  Bernstein functions. Merging analytic and stochastic methods enables us to give a complete classification of the standard functions used to describe the large class of the relaxation phenomena and to explain their properties.          
\end{abstract}


\maketitle

\section{Introduction}\label{SEC1}

Researchers working in the fields of natural and engineering sciences are used to understand relaxation phenomena as processes which describe effects related to the delay between the application of an external stress to a system and its response. Phenomena which exhibit such behavior are observed practically everywhere in our surroundings: decay of induced dielectric polarization or magnetization usually does not follow switching off the external electromagnetic field instantaneously and various luminescence phenomena persist in the absence of illumination. Similarly,  deformed viscoelastic materials return to their original shapes with shorter or longer delay, kinetics of chemical reactions in many cases is retarded with respect to changes of external conditions, and so on. The simplest mathematical model used to describe relaxation is the exponential, or the Debye \cite{F1}, law which provides us with the time decay rule proportional to the exponential function $\exp{(-\lambda t})$, $\lambda > 0$. The Debye law works well for different physical systems but suffers more and more difficulties or even fails with the growing complexity of the system under investigation. The first more systematic observations of non-exponential relaxations and/or decays  were performed more than one hundred seventy years ago, in the middle of the XIX-th century, by R. Kohlrausch \cite{Kohlrausch1854} who carried out a series of relaxation experiments. At first he was focused on the mechanical creep but soon after he became interested in electrical phenomena and successfully measured the residual discharge current in the Leyden jar. Studying his own experimental results Kohlrausch found that the data were much better described by the stretched exponent $\exp{(-\lambda t^{\alpha})}$, $\alpha\in(0,1)$, which means that the relaxation is slower than predicted by the Debye law.  During the next few decades, deviations from the Debye law were observed also in relaxation phenomena going far beyond mechanics of viscoelastic media and effects occurring in simple resistor-capacitor circuits. We mention the luminescence phenomena  \cite{BarberanSantos05,BarberanSantos08} or the response of dielectric material to the step input of a direct current voltage as well. Mathematical models tested to explain experimental results were, as a rule, based on the inverse power-like decay model. Examples include the Curie-von Schweidler law of current decay or Nutting's equation for the stress-strain relation, both usually treated as purely phenomenological input proposed without any further justification. This phenomenology dominated period of relaxation studies ended in the context of physics of dielectrics around the 60s and 70s of the previous century. With the progress of experimental techniques, new materials were more and more common applied and investigated. Seminal results were obtained by A. K. Jonscher via analysis of a vast majority of available dielectric relaxation data. He found regularities hidden behind them \cite{Jonscher77, Jonscher83, Jonscher96}. Jonscher's discoveries, unified in the so-called {\em Universal Relaxation Law} (URL), boil down to the statement that the asymptotics of relaxation functions, both in the frequency and the time domains, are governed by fractional power-like functions. Also at that time, more than a hundred years after Kohlrausch's experiments, the stretched exponential function got back in the game and found wider interest among physicists thanks to the work of G. Williams and D. C. Watts who applied the stretched exponential to fit the dielectric relaxation data in the time domain \cite{WW70}. It should be mentioned that despite its popularity among experimentalists, the use of the stretched exponent (now called also the Kohlrausch-Williams-Watts (KWW) function) has remained formal as it has never found a deeper theoretical explanation. Moreover, its use leads to an embarrassing situation: the KWW function, if transformed from the time to the frequency domain, is expressed by rather involved special functions, see Sect. \ref{sec3}.  The latter are defined either by special contour integrals in the complex plane or through slowly convergent power series. This makes it difficult to read out correctly the dependence of dielectric permittivities on the frequency of applied harmonic electric field, even if these relations, nick-named spectral functions, are not only well-known phenomenologically from the broadband dielectric spectroscopy experiments, but do appear to be simple rational functions \cite{CJFBottcher78}. For that reason, the KWW model is often replaced by the Cole-Cole (CC) \cite{CJFBottcher78, RHilfer02, RGarrappa16}, Cole-Davidson (CD) \cite{CJFBottcher78, RHilfer02, RGarrappa16}, Havriliak-Negami (HN) \cite{CJFBottcher78, RHilfer02, RGarrappa16}, and Jurlewicz-Weron-Stanislavsky (JWS) \cite{AStanislavsky10, KWeron10, RGarrappa16} patterns. Their spectral functions are much simpler than the KWW model but the opposite situation appears for the time dependence functions.

Contemporary ongoing experiments aimed at collecting results essential for a better understanding of dielectric relaxation phenomena may be classified in a twofold way. The data which are measured in a suitable experimental setup concern either the time dependence of physically meaningful properties characterizing the system (e.g., the number of previously excited dipoles which survive some amount of  time after sudden switching off the polarizing field or retreating mechanical deformation of the sample), or the frequency-dependent response of the system perturbed by a step-like or alternating (usually harmonic) external electric field which plays the role of time-dependent external stress. In the first case, the results are traditionally encoded in the time-dependent relaxation $n(t)$ (or the response $\phi(t)=-\D n(t)/\D t$) function, while in the second case, customarily used objects are complex-valued spectral functions $\widehat{\phi}(\I\!\omega)$. In dielectric physics, the spectral functions $\widehat{\phi}(\I\!\omega)$ mimic normalized complex dielectric permittivity $[\widehat{\varepsilon^{\star}}(\omega)-\varepsilon_{\infty}]/[\varepsilon_{0}-\varepsilon_{\infty}]$, whose real and imaginary parts are responsible for diffractive and absorptive properties of the medium, and the material constants $\varepsilon_{0}$ and $\varepsilon_{\infty}$ denote the low (static) and high-frequency values of the dielectric permittivity. The time and frequency approaches are mutually connected by the Laplace transform \cite{F2} $f(t)\div \widehat{f}(z) = \int_{0}^{\infty}\exp{(-zt)}f(t)\D t = {\mathscr L}[f(t),z]$ which boils down to the following relations between the spectral, response and relaxation functions:
\begin{equation}\label{23/05_1}
\widehat{\phi}(\I\!\omega) = \mathscr{L}[\phi(t); \I\!\omega] = 1-\I\!\omega\,\widehat{n}(\I\!\omega),
\end{equation}                  
where we used the fact that
\begin{equation}\label{16/06-20}
\phi(t) = - \frac{\D n(t)}{\D t} \quad \text{or, equivalently,} \quad n(t) = 1 - \int_{0}^{t} \phi(u) \D u.
\end{equation}
Equation \eqref{23/05_1} is crucial for our further consideration. It gives an operational rule which constitutes a bridge between the time and frequency descriptions of the relaxation phenomena and opens possibilities to compare them. This applies also if the data come from independent measurements performed in the time and frequency domains on the same, or twin-like prepared, sample. 

We shall now give more details on the phenomenological models specified by CC, CD, HN, and JWS earlier. In the frequency domain the phenomenological pattern relevant to the Debye model is equal to
\begin{equation}\label{9/02/23-1}
\widehat{\phi}_{D}(\I\!\omega) = (1 + \I\!\omega\tau)^{-1}
\end{equation}
and non-Debye relaxations are modelized through:
\begin{align}\label{23/05_2} 
\begin{split}
\widehat{\phi}_{CC}(\I\!\omega)& =[1+(\I\!\omega\tau)^{\alpha}]^{-1}, \\ \widehat{\phi}_{C\!D}(\I\!\omega) &=[1+(\I\!\omega\tau)]^{-\beta},\\ \widehat{\phi}_{H\!N}(\I\!\omega)& =[1+(\I\!\omega\tau)^{\alpha}]^{-\beta}, \\ \widehat{\phi}_{J\!W\!S}(\I\!\omega) &=1-[1+(\I\!\omega\tau)^{-\alpha}]^{-\beta}.
\end{split}
\end{align}
Real numbers $\alpha, \beta\in(0, 1]$ are called the width and the symmetry parameters, respectively, and each time they are fitted to the experimental data together with the material-dependent characteristic time scale $\tau$. It is easy to see that Eqs. \eqref{23/05_2} extend the Debye rule and that among them the HN and JWS patterns are the most general involving the CC and CD as their special cases \cite{CJFBottcher78, RGarrappa16}. The Debye and CC models emerge either from $\widehat{\phi}_{H\!N}(\I\!\omega)$ and $\widehat{\phi}_{J\!W\!S}(\I\!\omega)$ for $\alpha = \beta = 1$ or for $\beta = 1$ and $\alpha \in(0, 1)$, respectively. The CD model comes out from the HN pattern for $\alpha = 1$ and $\beta\in(0, 1)$. On the other hand, measurements done in the time domain provide us with the data for the relaxation function which are usually fitted by the standard KWW function 
\begin{equation}\label{23/05_3}
n_{KWW}(t) = \exp{[-(t/\tau^{\prime})^{\alpha^{\prime}}]}
\end{equation}
where parameters $\tau^{\prime}$ and $\alpha^{\prime}\in(0,1]$ are unrelated to those of Eqs. \eqref{23/05_2}. Moreover, none of the formulae listed in Eq. \eqref{23/05_2} are the Laplace transform of Eq. \eqref{23/05_3}. Efforts to shed more light on this troublesome situation have been the subject of both experimental and theoretical research accompanying the issues of dielectric physics \cite{Alvarez91,Alvarez93,KGorska21c,HandH96,RHilfer02,RHilfer02a}.   

Focusing attention on the HN pattern we remind that it was introduced in \cite{SHavriliak67} to explain the asymmetry of the Cole-Cole plot, i.e., the Argand-like diagram in the $\widehat{\varepsilon'}(\omega) = \RE[{\widehat{\varepsilon^{\star}}(\omega)}]$, $\widehat{\varepsilon''}(\omega) = \IM[{\widehat{\varepsilon^{\star}}(\omega)}]$ plane, observed in dielectric relaxation of polymers. Shortly after Havriliak and Negami's discovery, the HN pattern gained popularity because, although involving only three parameters, it appeared to be so flexible and universal that within a few years it became a standard model used to describe data obtained for a large range of relaxation and viscoelastic phenomena. Consequently, the calculation of its time domain counterpart, leading  either to the response or to the relaxation function, became a serious challenger to routine  parameterizations of their time evolution, based on the  KWW functions or finite sums of exponential decays.  This has motivated many authors to investigate the time dependence of the HN model and explore it using different mathematical tools. They include the analytical methods like the integral transforms theory and fractional calculus as well as an the approach worked out in the probability and stochastic processes theory and known  as subordination methods. We shall show that these seemingly distant approaches merge entirely into the framework of memory-dependent kinetics.  

The general expressions for the time-dependent counterparts of Eqs. \eqref{23/05_2} (as well as the frequency-dependent analogue of Eq. \eqref{23/05_3}) were found analytically  20 years ago by  R. Hilfer \cite{RHilfer02, RHilfer02a} who, for arbitrary real values of $\alpha$ and $\beta$, expressed { functions relevant for dielectric relaxation in terms of the Fox H-functions. Note that the applicability of this class of functions was used 10 years ealier for studies of other relaxation phenomena, especially in the research devoted to rheological models of viscoelastic materials \cite{Gloeckle91,Gloeckle93,RMetzler95,RMetzler03,TNonnenmacher91,HSchiessel95,NWTschoegl}}. In applications essential for the relaxation phenomena, the Fox H-functions are replaced by much better-known functions  of the Mittag-Leffler family \cite{RGarrappa16} and the Meijer G-function. In fact, its well-known representative, namely the standard Mittag-Leffler function, naturally enters the time relaxation for the CC model.  Following and exploring this direction of theoretical investigations we shall show how to express the functions relevant to the HN and JWS relaxation models, like the response or the relaxation functions and the probability densities, in terms of the Meijer G-function and the generalized Mittag-Leffler functions. These families of functions are well-implemented in modern computer algebra systems (CAS) and may be effectively studied using methods of classical analysis much more accessible to a larger community. The crucial step to achieve this purpose is to replace the irrational values of the parameters sitting in the Fox H-functions with rational numbers. This simple, but physically well-justified \cite{F3} approximation, enables us, as the first step, to represent results in terms of the Meijer G-functions, and in the second step in terms of the generalized hypergeometric functions which specialize to the generalized, or multi-parameter, Mittag-Leffler functions.

\section{The time domain description: general properties of the relaxation functions}\label{sec2}

Assume that the object of our investigation is the time-dependent quantity $\mathcal{J}(t)$, like the relaxation or the response function is,  which reflects the time behavior of some, suitably chosen, property $\mathcal{P}(t)$ of the system. For example, $\mathcal{P}(t)$ can be the number of its constituents which at an instant of time $t$ are still excited (have survived the decay), or the depolarization current, or the intensity of luminescence light, etc. The normalized form of ${\mathcal J}(t/\tau)$, with the characteristic time scale $\tau$ put in, reads:
\begin{equation}\label{1/05}
{\mathcal J}(t/\tau)=\frac{\mathcal{P}(t/\tau)-\mathcal{P}_{\infty}}{\mathcal{P}_{0}-\mathcal{P}_{\infty}}
\end{equation}
where $\mathcal{P}_{0}$ and $\mathcal{P}_{\infty}$ denote the values of $\mathcal{P}(t)$ for $t=0$ and $t=\infty$, respectively. Parameterization given by Eq. \eqref{1/05} implies by construction that ${\mathcal J}(0)=1$ and
 ${\mathcal J}(\infty)=0$. We also assume that $\mathcal{J}(t)$ is experimentally measurable (at least in the sense of {\em Gedankenexperiment}) and that in the system, there exist no internal factors which can revert the decay, i.e., $\mathcal{J}(t)$ never increases. Such requirements are too vague to specify $\mathcal{J}(t/\tau)$ satisfactorily and additional assumptions are needed to make any prospective mathematical modeling of $\mathcal{J}(t/\tau)$. Two commonly used suggestions are:
\begin{proposition}
{\rm
\begin{equation}\label{2/05}
{\mathcal J}(t/\tau) = \exp\left(\!-\int_{0}^{t}r(t^{\prime}/\tau)\D t^{\prime}\!\right),
\end{equation}
with
\begin{equation}\label{3/05}
r(t/\tau)=-\tau\frac{\D\, \ln {\mathcal J}(\xi)}{\D\xi}\vert_{\xi=t/\tau},
\end{equation}
which generalizes the standard exponential Debye decay law, $\D n(t)/\D t=-\lambda n(t)$, by extending it to the case when the relaxation/decay rate $\lambda$ becomes the time-dependent quantity, $\lambda\rightarrow r(t)$ instead of being a constant.}
\end{proposition}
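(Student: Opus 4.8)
The plan is to read the above proposition as a representation statement: every admissible relaxation function ${\mathcal J}$ — by assumption positive, continuously differentiable on $[0,\infty)$, non-increasing, and normalized by ${\mathcal J}(0)=1$ — can be written in the form \eqref{2/05} with the time-dependent rate $r$ uniquely fixed by \eqref{3/05}, and this form degenerates to the Debye law precisely when $r$ is constant. Accordingly I would treat \eqref{3/05} as the \emph{definition} of $r$, establish \eqref{2/05} by a single use of the fundamental theorem of calculus, and then record the two structural facts — the sign of $r$ and the Debye limit — that justify the closing clause of the proposition.

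First I would check that $r$ is well defined. Under the stated hypotheses $\xi\mapsto\ln{\mathcal J}(\xi)$ is a $C^{1}$ function with $\ln{\mathcal J}(0)=\ln 1=0$, so the right-hand side of \eqref{3/05} makes sense and $r$ is continuous. Two borderline situations are worth a remark: if ${\mathcal J}$ is only assumed monotone, then $\ln{\mathcal J}$ is merely locally absolutely continuous and the derivative in \eqref{3/05} is taken almost everywhere; if ${\mathcal J}$ vanishes at some finite time $t_{\star}$, then \eqref{3/05} is read as a one-sided limit and $r$ develops a (possibly non-integrable) singularity at $t_{\star}$. These are the only points where any genuine care is needed.

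The core computation is then immediate. Substituting \eqref{3/05} into the exponent of \eqref{2/05} and rescaling the integration variable to $\xi=t'/\tau$, the factors of $\tau$ cancel and, by the fundamental theorem of calculus, the exponent collapses to $\ln{\mathcal J}(t/\tau)-\ln{\mathcal J}(0)=\ln{\mathcal J}(t/\tau)$; exponentiating both sides reproduces \eqref{2/05}. Running the same chain backwards — differentiate $\ln{\mathcal J}(t/\tau)=-\int_{0}^{t}r(t'/\tau)\,\D t'$ in $t$ — shows that, once the exponential shape \eqref{2/05} is imposed, $r$ is forced to coincide with the expression in \eqref{3/05}, so the pair \eqref{2/05}--\eqref{3/05} is self-consistent and the rate is unique.

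Finally I would extract the structural content. Since ${\mathcal J}$ is non-increasing, $\D\ln{\mathcal J}(\xi)/\D\xi\le 0$, hence $r\ge 0$: the generalized rate inherits the positivity of the Debye rate $\lambda$, and it is precisely the non-negativity of $r$ that encodes the assumption that the decay is never reverted. Setting the constant $r\equiv\lambda$ in \eqref{2/05} gives ${\mathcal J}(t/\tau)=\exp(-\lambda t)$, which is the normalized solution of $\D n/\D t=-\lambda n$ with $n(0)=1$; this is exactly the $\lambda\mapsto r(t)$ replacement announced in the proposition. I expect no analytic obstacle — the argument is the fundamental theorem of calculus dressed with the substitution $t'\mapsto t'/\tau$ — so the only point deserving attention is the definitional one addressed above: choosing a regularity class for ${\mathcal J}$ wide enough to accommodate the HN, JWS, CC and CD patterns yet still compatible with \eqref{3/05}.
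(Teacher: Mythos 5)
The paper itself offers no proof of this ``proposition'': it is presented as one of two commonly used phenomenological suggestions (essentially the definition of a time-dependent decay rate), and is in fact criticized a few paragraphs later as lacking mathematical precision. Your reading of it as a representation statement, with \eqref{3/05} taken as the definition of $r$ and \eqref{2/05} recovered from ${\mathcal J}(0)=1$ by the fundamental theorem of calculus, together with the observations that monotonicity of ${\mathcal J}$ forces $r\geq 0$ and that constant $r\equiv\lambda$ reproduces the Debye law, is exactly the content the authors intend, and your cautionary remarks about absolute continuity and a possible singularity of $r$ anticipate the paper's own later objection that $r(t)$ computed from solvable fractional models blows up as $t\to 0$.

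One step, however, is asserted rather than checked: the claim that ``the factors of $\tau$ cancel.'' With \eqref{3/05} exactly as printed, $r(u)=-\tau\,\frac{\D}{\D u}\ln{\mathcal J}(u)$, so substituting $u=t'/\tau$, $\D t'=\tau\,\D u$ gives
\begin{equation*}
-\int_{0}^{t} r(t'/\tau)\,\D t' \;=\; \tau^{2}\bigl[\ln{\mathcal J}(t/\tau)-\ln{\mathcal J}(0)\bigr]\;=\;\tau^{2}\,\ln{\mathcal J}(t/\tau),
\end{equation*}
so the right-hand side of \eqref{2/05} would be $[{\mathcal J}(t/\tau)]^{\tau^{2}}$, not ${\mathcal J}(t/\tau)$; the two factors of $\tau$ multiply instead of cancelling. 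Consistency (and the requirement that $r$ be a rate of dimension inverse time, so that the exponent is dimensionless) forces the prefactor in \eqref{3/05} to be $\tau^{-1}$ rather than $\tau$ — evidently a typo in the paper, since with $r(t/\tau)=-\tau^{-1}\frac{\D\ln{\mathcal J}(\xi)}{\D\xi}\vert_{\xi=t/\tau}$ your computation goes through verbatim and also gives the uniqueness of $r$ by differentiating back. Your argument is therefore correct for the intended statement, but you should either carry the $\tau$ factors explicitly and flag the misprint, or state at the outset that you prove the dimensionally consistent version; as written, the claimed cancellation is false for the formula you quote.
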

\begin{proposition}
{\rm 
 \begin{align}\label{4/05}
 \begin{split}
 {\mathcal J}(t/\tau) & = \int_{0}^{\infty} p(x)\exp{(-xt/\tau)}\D x \\
 & = \tau\int_{0}^{\infty} p(\tau\xi)\exp{(-\xi t)}\D\xi 
 \end{split}
 \end{align}
 is interpreted as a continuous weighted  sum of the Debye decays, provided the additional condition that $p(x)$ is the probability distribution function (PDF), i.e., it is non-negative and normalizable. If we focus attention on the relaxation phenomena and identify ${\mathcal J}(t/\tau)$ either with the relaxation $n(t)$, or the response function $\phi(t)$, then both these functions may be represented as the Laplace integrals \cite{F4}
 \begin{equation}\label{18/06-10}
 n(t) = \int_{0}^{\infty} \E^{- \frac{t}{\tau} \xi} g(\xi) \D\xi
 \end{equation}
 and
 \begin{equation}\label{18/06-11}
 \phi(t) = -\frac{\D n(t)}{\D t} = \int_{0}^{\infty}  \E^{- \frac{t}{\tau} \xi}\, \frac{\xi}{\tau}\, g(\xi) \D\xi.
 \end{equation}
 }
 \end{proposition}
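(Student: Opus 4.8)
The plan is to obtain the Laplace representations \eqref{18/06-10}--\eqref{18/06-11} from the ansatz \eqref{4/05} together with \eqref{16/06-20}, and to ground \eqref{4/05} itself in the Bernstein--Widder characterisation of completely monotone functions. First I would observe that the bare requirements imposed on $\mathcal{J}(t/\tau)$ in Sect.~\ref{sec2} — the normalisation $\mathcal{J}(0)=1$, $\mathcal{J}(\infty)=0$, and the absence of any internal mechanism reverting the decay — become a workable hypothesis once one strengthens the last condition to: every rate extracted by repeated logarithmic differentiation stays non-negative, i.e.\ $t\mapsto\mathcal{J}(t/\tau)$ is completely monotone on $(0,\infty)$. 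By the Bernstein--Widder theorem such a function is the Laplace transform of a unique non-negative Borel measure on $[0,\infty)$; writing that measure as $p(x)\,\D x$ (with $p$ allowed to be singular, as for the pure Debye case) yields the first line of \eqref{4/05}, and evaluating at $t=0$ forces $\int_{0}^{\infty}p(x)\,\D x=1$, so $p$ is a bona fide PDF. Since each $\exp(-xt/\tau)$ is a Debye law with rate $x/\tau$, this is precisely the statement that $\mathcal{J}(t/\tau)$ is a continuous, probability-weighted superposition of Debye decays.

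The second line of \eqref{4/05} is the change of variable $x=\tau\xi$, $\D x=\tau\,\D\xi$. Specialising $\mathcal{J}(t/\tau)$ to the relaxation function $n(t)$ and absorbing the Jacobian into a new density $g(\xi)=\tau\,p(\tau\xi)$ — which is again non-negative with $\int_{0}^{\infty}g(\xi)\,\D\xi=1$ — gives \eqref{18/06-10}. Equation \eqref{18/06-11} then follows from \eqref{16/06-20} by differentiating \eqref{18/06-10} under the integral sign: because $-\frac{\D}{\D t}\,\E^{-t\xi/\tau}=\frac{\xi}{\tau}\,\E^{-t\xi/\tau}$, one reads off $\phi(t)=-\D n(t)/\D t=\int_{0}^{\infty}\E^{-t\xi/\tau}\,\frac{\xi}{\tau}\,g(\xi)\,\D\xi$.

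The only analytic subtlety is the interchange of $\D/\D t$ with the integral, which is legitimate provided $\xi\,g(\xi)$ is integrable against $\E^{-t\xi/\tau}$ uniformly for $t$ in compact subsets of $(0,\infty)$; this is automatic whenever $g$ is locally integrable and $n(t)$ is finite for every $t>0$, since the exponential suppresses any polynomial growth of $g$ at infinity, but it deserves to be recorded. I expect the genuine obstacle to be conceptual, not computational: the identification of the class of \emph{physically admissible} relaxation functions with the class of completely monotone functions cannot be derived from the vague assumptions preceding the proposition and must be adopted as a modelling postulate — equivalently, read off a posteriori from the fact that the phenomenological spectral functions in \eqref{23/05_2} are (for the admissible parameter ranges) Laplace transforms of non-negative densities. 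It is exactly this completely monotone structure, together with the Bernstein functions governing the associated memory kernel, that the later sections will exploit when treating the HN and JWS models.
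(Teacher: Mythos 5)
Your route is essentially the paper's own: the paper offers no formal proof, presenting Eq. \eqref{4/05} as a modelling ansatz and justifying it immediately afterwards exactly as you do — via the Bernstein theorem, with complete monotonicity of ${\mathcal J}$ adopted as a postulate rather than derived from the vague physical assumptions — after which Eqs. \eqref{18/06-10}--\eqref{18/06-11} follow by identifying ${\mathcal J}=n$ and differentiating under the integral using Eq. \eqref{16/06-20}. One bookkeeping slip is worth fixing: with the exponent $\E^{-t\xi/\tau}$ appearing in Eq. \eqref{18/06-10} the identification is simply $g(\xi)=p(\xi)$ (the first line of Eq. \eqref{4/05} with $x$ renamed $\xi$), whereas your rescaled density $g(\xi)=\tau\,p(\tau\xi)$ belongs with the exponent $\E^{-\xi t}$ of the second line; pairing $\tau\,p(\tau\xi)$ with $\E^{-t\xi/\tau}$ would introduce a spurious extra factor $\tau$ in the time scale.
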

\medskip
\noindent 
If we have enough information on the time dependence of $n(t)$ or $\phi(t)$, e.g., know that their time decay is governed by some KWW-like function, then Eqs. \eqref{18/06-10} and \eqref{18/06-11} may reduce to the well-defined problems of classical mathematical analysis whose solutions allow us to determine the function $g(\xi)$, see e.g. \cite{HPollard46}. 
 
The above propositions are strongly supported by simple physical intuition and, at first glance, look very attractive as a way to solve the problem. Indeed, imagine that the constituents of the relaxing system are under consideration, e.g., polarized dipoles, decay according to the Debye law but each constituent of the system does that with a different decay rate $\lambda_{k}$; we treat the latter quantity as a non-negative discrete or continuous random variable whose distribution $g(\lambda)$ we somewhat know. We also assume that the randomization of elementary decays does not influence the Debye pattern dominating the process. Under such formulated conditions of independence  the use of a joint probability formula for the fraction of constituents which survive the decay gives 
\begin{equation}\label{7/02-5}
\frac{n(t)}{n_{0}} = \int_{0}^{\infty}e^{-\lambda  t } g(\lambda) \D\,\lambda, 
\end{equation}
i.e., we have arrived at an expression which has an obtrusive interpretation to be treated as a weighted sum of exponential decays. Analogous reasoning is the basis of the majority of routine analyses whose goal is to represent decay curves describing experimental data as a weighted finite (or infinite) sum of exponentials \cite{RSAnderssen11, Bertelsen74, Feldman02, Johnston06, Nigmatullin12}. However, despite their attractive simplicity and possible utility, the statements cited above are too naive. Equations \eqref{3/05} and \eqref{4/05} suffer from essential physical shortcomings and lack of mathematical precision. Equation \eqref{3/05}, if treated as a source of experimental information concerning $r(t)$, is difficult to be verified both for short and long times. Another objection is that $r(t)$ calculated from exactly solvable fractional models of relaxation becomes singular for $t\to 0$ and thus unphysical, e.g.,  \cite{KGorska21CNSNCa}. The main criticism against Eq. \eqref{4/05} comes from the objection that the interpretation of factors $\exp{(-\xi t)}$ as the Debye decays, each characterized by its own decay rate $\xi$, is formal and difficult to be explained. Does it mean that decay rates appearing in such a way are distributed according to some mysterious function $p(\tau\xi)$? What is the origin of the time scale $\tau$ introduced \textit{ad hoc} because of dimensional reasons but finally acquiring universal meaning? Nevertheless, so far proposed Eqs. \eqref{3/05} and \eqref{4/05}, if understood in mathematically and physically correct ways, provide us with useful guideposts which show how to choose, arrange and push forward theoretical methods whose development will lead to a better understanding of relaxation phenomena. 
 
The mathematical condition which guarantees that Eq. \eqref{4/05} makes sense, comes from classical mathematical analysis - according to the Bernstein theorem (for a brief tutorial on it, on completely monotone (CM) and on Bernstein (B) functions see Sect. \ref{subsec3.3}) the sufficient and necessary condition to represent the function $f(t)$ through the Laplace integral of the measure $g(x)$
\begin{equation}\label{5/05}
f(t)=\int_{0}^{\infty} \exp{(-xt)}g(x)\D x
\end{equation}
is that $f(t)$ is CM. In Eq. \eqref{5/05} we deal with $t\in\mathbb{R}_{+}$. It means that we work in the realm of real-valued Laplace integral whose continuation to the complex-valued Laplace transform needs care. Merging Eqs. \eqref{3/05} and \eqref{4/05} and requiring complete monotonicity of ${\mathcal J}(t/\tau)$ represented by Eq. \eqref{2/05}, one learns that the minus exponent, which is in its right-hand side (RHS), i.e., $\int_{0}^{t}r(t/\tau)\D t$, must belong to the class of Bernstein functions,  $\int_{0}^{t}r(t/\tau)\D t\in{\rm{B}}$. Thus, $r(t)$ must be the CM and, consequently, must not be singular. As mentioned earlier, solvable models, e.g., the CC, show that it is not the case, which excludes them from further considerations. The distinguished role of the CM functions in the theory of relaxation phenomena has been noted a long time ago, starting with early observations concerning viscoelasticity \cite{Day70} and rheology \cite{Bagley83}. Additional links between the CM properties and relaxation phenomena have been provided by the studies of Green's functions \cite{Anh03} and of the extended family of the Mittag-Leffler-type functions \cite{ECapelas11, RGarrappa16, KGorska21b, Hanyga1, FMainardi15, Tomovski14}, describing of the non-Debye relaxation.

Besides their well-established place in the classical mathematical analysis \cite{Berg}, the CM and B functions are objects which play an important role also in the probability and the theory of stochastic processes \cite{RLSchilling10}. Thus, their appearance in the formalism signalizes existence of non-accidental relations between probabilistic tools and analytical methods of fractional and operational calculi, if applied to the relaxation theory. Requiring probabilistic interpretation of Eq. \eqref{4/05}, besides of the indication that ${\mathcal J}(t)$ should be CM, leads to another important result. If we calculate the  Laplace transform of ${\mathcal J}(t)$ with respect to $t$, $\widehat{{\mathcal J}}(z)={\mathscr L}[{\mathcal J}(t),z]$, $z\in\mathbb{C}$, then, after changing the order of integration, we get
\begin{equation}\label{6/05}
 \widehat{{\cal J}}(z) = \int_{0}^{\infty}\frac{p(t)}{t+z}\D t.
 \end{equation}
Under the condition that $p(t)$ is integrable and non-negative for all $t\in{\mathbb{R}_{+}}$, it means that $\widehat{{\mathcal J}}(z)$ belongs to the class of Stieltjes functions (S) which obeys well-known analytical properties, play important role in the Laplace transform theory \cite{Widder} and, as we will see, turns out to be of a crucial importance in our further analysis. Notice that to get Eq. \eqref{6/05}, we assume $\RE{(z)}\in{\mathbb{R}}_{+}$. From the physical point of view, this assumption is too restrictive - to make our scheme compatible with physical data incorporated in Eq. \eqref{23/05_2}, we do need to include also $\RE{(z)}=0$. The justification of this statement comes from the fact that the Laplace transform Eq. \eqref{6/05} for $z=\I\!\omega$ may be considered as  the Fourier transform of a function vanishing on the negative semiaxis, i.e.,  the Fourier transform of a function $\Theta(t)f(t)$ being undefined for $t=0$. The fundamental physical implication of the Fourier transform, namely the time-frequency correspondence, implies that $\widehat{{\mathcal J}}(z)\vert_{z=\I\!\omega}$  in such a way understood  is closely related to the earlier introduced spectral functions being objects of direct physical meaning, as  reconstructed from experimental data provided by methods of the broadband dielectric spectroscopy. But the spectral functions should be (and they are) defined also for $\omega=0$, in the case of a static field. Mathematically, it means that we have to point out the value $\widehat{{\mathcal J}}(0)$ to redefine the singularity of $\int_{0}^{\infty}\frac{p(t)}{t}\D t$. Here the Plemelj-Sokhotski formula \cite{F5} enters the game
 \begin{equation}\nonumber
 \int_{0}^{\infty}\frac{p(t)}{t}\D t = \I\pi p(0) +{ \rm{PV}}\int_{0}^{\infty}\frac{p(t)}{t}\D t
 \end{equation}
 with the Cauchy principal value 
 \begin{equation}\nonumber
{\rm{PV}}\int_{0}^{\infty}\frac{p(t)}{t}\D t = \lim_{\epsilon\to 0}\int_{\epsilon}^{\infty}\frac{p(t)}{t}\D t 
 \end{equation}
  being introduced. Consequently, we should extend Eq. \eqref{6/05} to the form 
 \begin{equation}\nonumber
 \widehat{{\mathcal J}}(z)={\rm{const}} + {\rm{PV}}\int_{0}^{\infty}\frac{p(t)}{t+z}\D t,
 \end{equation}
 which explains different shapes of the spectral functions relevant for the HN and JWS models.
   
 Summarizing the results obtained so far  -  using intuitive, and from time to time also hand-waving arguments, we can expect that functions admittable to describe non-exponential relaxation phenomena should be searched among functions belonging to the CM class or eventually its subclasses. Natural justification of such a conjecture is that the CM functions obey characteristic properties of relaxation functions, namely they are always non-negative and non-increasing. Infinite differentiability and sign-alternating derivatives, which in fact define complete monotonicity, are impossible to be reliably verified for functions known from phenomenological analyses only. Intuitively, to require the CM property shared with the exponential function used to describe the standard Debye decay, can shed more light on the problem.  In Secs. \ref{sec5} and \ref{sec6} we will see that it is really the case. Now, leaving aside the CM functions, we are going to convince the readers that the most promising factor which will enable us to go forward is to consider memory-dependent kinetics giving dynamical laws governing the relaxation phenomena.                     
 
  \section{Memory dependent kinetics}\label{SEC3}

As signalized in the Introduction, the characteristic feature of the relaxation phenomena is that the relaxing system exhibits delayed response to the changes of external conditions. A typical example of such a behavior is extended over time approach to equilibrium after switching off factors perturbing the system during its previous history. In other words -  approaching equilibrium, even in the case of absent external influence, needs time and the process under consideration does not satisfy the instantaneous response principle. This means giving up the time-local evolution rules and strongly suggests looking for prospective challengers which would be used to replace the standard time-local evolution equations. The ``first choice'' solution of the problem is to take into account the time non-local analogues of basic equations, in particular to consider integro-differential equations with kernels suited to mimic memory effects occurring in the system. Thus, to study the non-Debye relaxation phenomena, as well as other kinetic phenomena taking place in complex systems, e.g., anomalous diffusion, we should to reformulate the standard time-local approach. The aim is to construct a new theoretical scheme which has the time non-locality built-in from the very beginning, and leads to new evolution laws modified by various memory effects. The cornerstone of such new formalism is classical works of R. Zwanzig \cite{Zwanzig1,Zwanzig2} who, after paying attention to the fact that the formalism of instantaneous response did not work in many physically interesting situations, proposed the self-consistent construction of kinetic equations respecting causality and taking into account delayed response of the system. The essence of mathematical encoding such delays is to replace point-like operations, e.g., the usual multiplication of functions, by integral operators taken in the form of time convolutions of memory functions and solutions looked for. Thus, it is justified to say that such modified evolution equations are the "time-smeared"  standard ones. Almost 40 years later, at the very beginning of the current century, it was I. M. Sokolov \cite{Sokolov1} who shed new light on the ideas underpinning Zwanzig's seminal works.  Sokolov showed that the use of generalized non-Markovian Fokker-Planck (FP) equations with built-in memory kernels leads to considerable progress in understanding anomalous transport and non-Debye relaxations. Subsequent years brought new research techniques and huge amount of earlier unreachable experimental data. These, urgently needed to be theoretically explained, pushed forward theoreticians' efforts  and soon led to the development of effective mathematical tools rooted either in mathematical analysis, e.g.,  provided by the fractional calculus, or coming from the probability/stochastic processes theory. Taken separately or merged together, these methods have formed a toolbox whose usage has enabled physicists to solve equations of anomalous diffusion and non-Debye relaxations for a quite large number of problems  \cite{ AChechkin21, KGorska21, TSandev18, TSandev17, AStanislavsky15}. 

\subsection{The basics}\label{sub3.1}

Restricting Zwanzig's approach to the problems depending on the time only, e.g., to the simplest analysis  of dielectric relaxation \cite{F6}, but holding on to the spirit of seminal paper \cite{Zwanzig1} we can write down the evolution equation as
\begin{equation}\label{14/10-1}
\frac{\D n(t) }{\D t}= {\mathcal{O}}n(t),
\end{equation}
where $t\in\mathbb{\mathbb{R}_{+}}$. The linear operator ${\mathcal{O}}$ acts on the time variable but is not simple multiplication by a time-dependent function, and is the object which represents a more general characterization of non-locality in time. If ${\mathcal{O}}$ is represented as an integro-differential operator with the kernel $O(t-\xi)\mathcal{B}$, $\mathcal{B}={\rm const}$, then Eq. \eqref{14/10-1} becomes
\begin{equation}\label{14/10-2}
\frac{\D n(t) }{\D t} = \int_{0}^{t}\!\! O(t-\xi) \mathcal{B} n(\xi) \D\xi, 
\end{equation}    
i.e., has a typical form of the Volterra equation. It is convenient to rewrite Eq. \eqref{14/10-2} as  
\begin{equation}\label{13/09-3}
\frac{\D n(t) }{\D t} = -\frac{\D}{\D t}\!\int_{0}^{t}\!\! M(t-\xi) \mathcal{B} n(\xi) \D\xi 
\end{equation}
coming directly from the integral master equation \cite{AStanislavsky17,AStanislavsky19b}
\begin{equation}\label{14/10-4}
n(t) = n(0) - \int_{0}^{t} M(t - \xi) \mathcal{B} n(\xi) \D\xi.
\end{equation}
If we take the Laplace transforms (with respect to $t$) of Eqs. \eqref{14/10-2} and \eqref{13/09-3} or \eqref{14/10-4} and symbolically write down these equations as 
\begin{equation}\nonumber
[z - \widehat{O}(z) \mathcal{B}]\, \widehat{n}(z) = z [1 - \widehat{M}(z)\mathcal{B}]\, \widehat{n}(z) = n(0)
\end{equation}
then we see that  Eqs. \eqref{14/10-2} - \eqref{14/10-4} are equivalent, if in the Laplace domain $\widehat{O}(z) = z\widehat{M}(z)$, i.e., in the time domain $O(t) = \D M(t)/\D t + M(0+)$, where $M(0+)=\lim_{t\rightarrow 0_{+}}M(t)$ is to be specified, cf. \cite{AChechkin21}.
Looking for solutions of \eqref{14/10-2} - \eqref{14/10-4} we are interested in the functions $n(t)$ which are normalizable and non-negative, i.e., fulfill a minimal set of requirements needed to interpret them as the relaxation  functions. The crucial property to be shown is non-negativity - we will see that requiring this property allows to find conditions which have to be put on the memory functions.    
   
The theory of integral equations, more precisely the possibility of a dichotomic description of the system using either integral or differential equations, suggests that it would be interesting also for Eq. \eqref{14/10-4} to find its integro-differential analogue. To achieve this goal let us notice that if we have given Eq. \eqref{13/09-3} governed by the kernel function $M(t)$ then we may ask for a function $k(t)$ defined by the convolution 
\begin{equation}\label{14/10-5}
\int_{0}^{t}k(t - t^{\prime})M(t^{\prime})\D t^{\prime} = 1.
\end{equation}
which, if required to be satisfied for any $t > 0$, is known in the theory of integral equations as the Sonine relation \cite{AHanyga20, AHanyga21, YLuchko21, YLuchko21a, Meerschaert19a}. Transformed to the Laplace domain Eq. \eqref{14/10-5} becomes $\widehat{k}(z)\widehat{M}(z) = z^{-1}$ from which it follows:
\begin{equation}\label{28/09-1}
\widehat{k}(z)= [z \widehat{M}(z)]^{-1}. 
\end{equation}
General properties required from $M(t)$ in order to satisfy basic conditions of the Laplace transform theory guarantee that Eq. \eqref{14/10-5} defines the couple $(M, k)$ uniquely. Moreover, the symmetry of Eq. \eqref{28/09-1} with respect to the interchange $M(t)\leftrightarrow k(t)$  means that if $M(t)$ is interpreted as a memory function then $k(t)$ does share this property: if {{Eq. \eqref{14/10-5}}} holds then any suitable $M(t)$ which governs {{Eq. \eqref{13/09-3}}} has its partner memory function $k(t)$ entering the kernel of integro-differential equation    
\begin{equation}\label{13/09-1a}
\int_{0}^{t} k(t-\xi) \frac{\D {n}(\xi)}{\D\xi} \D\xi = - \mathcal{B} n(t), \qquad \mathcal{B} = {\rm const}.
\end{equation}
Equation \eqref{13/09-1a} is equivalent to {{Eq. \eqref{13/09-3}}} and in mathematics it is known as the Caputo-Djhrbashyan problem \cite{AKochubei11}. The equivalence of Eqs. \eqref{13/09-3} and \eqref{13/09-1a} is seen if we calculate the time derivative in the RHS of Eq. \eqref{13/09-3} and next integrate the resulting equation with respect to $\int^{t}_{0}\D t^{\prime}$ using the Sonine relation \eqref{14/10-5} twice \cite{KGorska21a, KGorska20a}. Duality in the description of the same process using either Eq.  \eqref{13/09-3} or Eq. \eqref{13/09-1a} was noticed and studied in many papers \cite{KGorska20a, KGorska21a, TSandev17, TSandev18}. Those were first of all addressed to the anomalous diffusion (through studying various applications of the generalized Fokker-Planck equation \cite{TSandev17, TSandev18}) but led to results by no means limited to this class of phenomena \cite{KGorska20a, KGorska21a}. To name different points of view on the problem we list: i.) solvability of the Cauchy problem for Eq. \eqref{13/09-1a} studied either using advanced partial differential equations methods  \cite{AKochubei11} or operator calculus tools developed for the Volterra-type evolution equations \cite{JPruess93} and leading to the appearance of subordination \cite{Bazhlekova18,Bazhlekova19},  ii.) investigations devoted to the role played by the Sonine relation and resulting duality of memories \cite{KGorska20a, KGorska21CNSNCa, KGorska21a, KGorska21c, AHanyga20, AStanislavsky21} and iii.) efforts oriented on understanding mutual relations between the so-called deterministic and probability/stochastic approaches to anomalous kinetic phenomena \cite{KGorska21, TSandev18, TSandev17, AStanislavsky17, AStanislavsky19b}.

Recall that restricting the kinetic problem under consideration to be depending on the time only means that the action of the FP operator reduces to multiplication by a constant factor $\mathcal{B}$. Thus we arrive at equations governing the relaxation phenomena which provide us with other examples of non-Markovian processes \cite{AStanislavsky19b, AStanislavsky15}. It has to be noted that Eq. \eqref{14/10-4} enables us to write down the spectral function, i.e., the Laplace transform $\widehat{\phi}(z)$ of the response function $\phi(t)= -\D{n}(t)/\D t$ in terms of $\widehat{M}(z)$:
\begin{equation}\label{21/09-1}
\widehat{\phi}(z) = 1 - z \widehat{n}(z) = \{1 + [{\mathcal B} \widehat{M}(z)]^{-1}\}^{-1},
\end{equation}
reasulting in
\begin{equation}\label{21/09-1a}
\widehat{M}(z) = {\mathcal B}^{-1}\frac{\widehat{\phi}(z)}{1 - \widehat{\phi}(z)},
\end{equation}
which gives the mutual relation between the spectral functions and evolution kernels responsible for the memory effects. This observation has far-going consequences. From the physicists point of view Eqs. \eqref{21/09-1} and \eqref{21/09-1a}, if taken for $z = \I\omega$ with $\omega$ identified as the frequency of harmonic field used to polarize the sample, couple in a unique way evolution equations introduced on theoretical background with phenomenologically known spectral functions given by formulae which classify the standard non-Debye relaxations as the CC, CD, HN, and JWS patterns (cf. \cite{KGorska21CNSNCa, KGorska21a, KGorska21c, KGorska20a}).   

\subsection{The stochastic approach - a few general remarks } 

As mentioned, Eqs. \eqref{13/09-3}, \eqref{14/10-4}, and \eqref{13/09-1a} are the Volterra-type equations \cite{GGripenberg90}. They may be considered as equations which, because of a possible introduction of "atypical" kernels  $M(t)$ and $k(t)$, go beyond the fractional differential equations conventionally used to describe the anomalous diffusion and non-Debye relaxation phenomena. In turn, Eq. \eqref{14/10-4} may be identified as the master equation which governs some stochastic processes underlying mechanisms of anomalous kinetics \cite{AStanislavsky19b}. Recall that the memory functions $M(t)$ and $k(t)$ play a dual role in our scheme - they determine the mathematical structure of equations under study and provide us with links to observational data fitted in the case of relaxation experiments by phenomenological spectral functions $\widehat{\phi}(\I\!\omega)$. However, if we restrict ourselves to the experimental data only, then usually we are not able to acquire knowledge sufficient to determine basic processes underlying physics of considered phenomena.  To proceed further  we do need more detailed information, also coming from mathematics, expected to be obtained from analysis of properties which the memory functions should obey. Obvious requirements that the memory functions have to be non-negative and non-increasing are insufficient to judge the existence and physical applicability of solutions as well as to find their interpretation. Some extra conditions, like the Boltzmann fading memory concept, did not satisfy initial expectations due to the lack of mathematical precision \cite{RSAnderssen02,RSAnderssen02a}. Clarification and resolution of many doubts appearing in the statistical description of both anomalous transport and relaxation phenomena came with using advanced probabilistic methodology, first of all with identification of just mentioned phenomena with stable stochastic processes. 

Stochastic point of view (cf. e.g., \cite{Meerschaert01,Meerschaert19,AStanislavsky17, AStanislavsky15} and \cite[Chs. 4.1, 4.3]{AStanislavsky19b}) led to methods which made it possible to analyse non-Debye relaxations (as well as other anomalous kinetics processes) using  probabilistic concepts like subordination { \cite{Bochner1, Bochner, Feller, RLSchilling98}}, infinitely divisible distributions \cite{Lukac}  and applications the theory of B and S functions, see \cite{Berg,RLSchilling10} and Sect. \ref{subsec3.3} for a brief tutorial. Simultaneously, investigations based on less popular methods of mathematical analysis (fractional and operational calculi) made the integro-differential equations \eqref{13/09-3} and \eqref{13/09-1a} much better understood if they were governed by kernels whose the Laplace images belong to the S class \cite{KGorska21,AKochubei11, AKochubei19}. Here we would like to turn once more the readers attention to the importance of the Sonine relation and the special role of the S functions hidden behind it. Indeed, if one requires that $\widehat{M}(z)$ and $\widehat{k}(z)$ of Eq. \eqref{28/09-1} belong to the same class of functions then the most natural and convincing 
possibility to satisfy this requirement is to put both of them in the S class. Additionally, belonging to the S class directly links the memory functions with the complete Bernstein (CB) functions and the Laplace exponents, objects which characterize PDFs relevant for infinitely divisible stochastic processes { \cite{KGorska21CNSNCa, KGorska21a, KGorska21c, RLSchilling98, AStanislavsky19b}}.
   
\subsection{Integral decompositions and subordination}\label{SEC3.1}

To keep the forthcoming  construction as general as possible let us observe that for the relaxation phenomena $\widehat{n}(z)$ from 
Eq. \eqref{21/09-1} may be rewritten in the form
\begin{equation}\label{25/01/23-1}
\widehat{n}(z) = \frac{[z \widehat{M}(z)]^{-1}}{\mathcal{B} + [\widehat{M}(z)]^{-1}} = \int_{0}^{\infty} \E^{- \xi\mathcal{B}}\, \frac{\widehat{\Psi}(z)}{z} \E^{- \xi \widehat{\Psi}(z) } \D\xi.
\end{equation}
To get Eq. \eqref{25/01/23-1} the Schwinger parametrization was used and $ \widehat{\Psi}(z) = [\widehat{M}(z)]^{-1}$ introduced. $\widehat{\Psi}(z)$ is called the characteristic or the Laplace-L\'{e}vy exponent \cite{AStanislavsky17, AStanislavsky19b} which can be connected to the spectral function $\widehat{\phi}(z)$ via Eq. \eqref{21/09-1a} : 
\begin{equation}\label{23/02/23-1}
\widehat{\Psi}(z)= \mathcal{B}\,\frac{1 - \widehat{\phi}(z)}{\widehat{\phi}(z)}.
\end{equation}
Taking the inverse Laplace transform of Eq. \eqref{25/01/23-1} we express it in the form of integral decomposition:
\begin{equation}\label{30/07-1}
n(t) = \int_{0}^{\infty} n_{D}(\xi) f_{\widehat{\Psi}}(\xi, t) \D\xi
\end{equation}
the integrand being a product of the Debye relaxation function $n_{D}(\xi) = \exp(-\mathcal{B}\xi)$ and $f_{\widehat{\Psi}}(\xi, t) = \mathscr{L}^{-1}[\widehat{\Psi}(z) \exp[-\xi \widehat{\Psi}(z)]/z; t]$ presenting the relation between the time $t$ and integral variable $\xi$. The functions $n_{D}(\xi)$ and $f_{\widehat{\Psi}}(\xi, t)$ are given by normalized and non-negative functions which in probability theory means that they are the PDFs of ``parent'' \cite{F7} and ``leading'' processes, respectively. If these PDFs are independent Eq. \eqref{30/07-1} turns out to be the subordination. Within the latter scheme the integral variable $\xi$ is interpreted as the internal time with randomize the physical time $t$. 

 The same results can be obtained by using the Efros theorem (Subsect. \ref{subsec3.1}) giving more flexibility than an application of the Schwinger parametrization. The use of the Efros theorem allows one to preset $n(t)$ as
\begin{equation}\label{26/01/23-1}
n(t) = \int_{0}^{\infty} h(u) f(u, t) \D u,
\end{equation}
where $h(u)$ denotes the relaxation function of basic model. It is subordinated by the PDF $f(u, t)$ which expresses the conversion between physical $t$ and internal $u$ times. The simplest realization of Eq. \eqref{26/01/23-1} is for $f(u, t) = \delta(t-u)$ and $h(u) = n(u)$ where $\delta(\cdot)$ denotes the $\delta$-Dirac distribution. Other examples of Eq. \eqref{26/01/23-1} are presented in Subsecs. \ref{sec5.6} and \ref{sec6.6} where for the HN and JWS models setting separately $h(u)$ and $f(u, t)$ we obtain dual representations of the same $n(t)$.

 \section{Interlude  - elements of the mathematical toolbox}\label{sec3}
 
The wide applicability of the relaxation pattern attracts numerous authors to investigate it from various points of view. The goal of mathematically oriented research was to find the analytic expressions for the family of relaxations in the time domain. We mentioned in the Introduction that the problem was solved by R. Hilfer \cite{RHilfer02, RHilfer02a} who showed that solutions looked for may be expressed in terms of the Fox H-functions. This result, important for theoretical considerations, regrettably did not find deeper practical interest among experimentalists. It is not difficult to understand that as the Fox H-functions are special functions which for general values of parameters are defined symbolically either via the contour integrals of the Mellin-Barnes type \cite{APPrudnikov-v3} or by slowly convergent series.  All together it makes relevant calculations difficult to understand, time-consuming and prone to mistakes emerging during hand-made manipulations. Thus, it is justified to emphasize that the use of the Fox H-functions formalism is not mandatory to invert analytically the Laplace transform in Eq. \eqref{23/05_2} if its RHS specializes in specifically chosen sets of parameters. As is shown in a series of research papers and books, e.g., \cite{Prabhakar, book1} in such a case the suitable Laplace transform essentially simplifies and results in functions of the Mittag-Leffler family, much better known and easier to be handled both analytically and numerically. Moreover, relevant Mittag-Leffler functions obey properties placing them in special classes of analytic functions which for the positive value of the argument, besides of being non-negative, also boils down to the completely monotone functions \cite{Berg, RLSchilling10}.

\subsection{Non-negatively definite functions}\label{subsec3.3} 

\subsubsection*{The completely monotone (CM) functions}\, are introduced as non-negative and infinitely differentiable  (i.e., belonging to the $C^{\infty}$ class) functions $\widehat{g}(s)$ on $\mathbb{R}_{+}$ whose all derivatives alternate in sign for any $s\in\mathbb{R}_{+}$:  
\begin{equation}\nonumber
(-1)^{n}{\widehat{g}}^{(n)}(s)\ge 0, \quad n = 0, 1, \ldots.
\end{equation}
Notice that any CM function is a non-increasing and convex function.\\
According to the {\bf Bernstein theorem} \cite{RLSchilling10}, we can connect in a unique way  CM and non-negative functions: $s\in [0,\infty)\rightarrow \widehat{g}(s) \in \textrm{CM}$ iff  
\begin{equation}\nonumber
\widehat{g}(s) = \int_{0}^{\infty} \exp(-s u) g(u)\!\D u 
\end{equation} 
with $g(u)\ge 0$ for all $u\in [0,\infty)$.  
The important property of the CMs is that the product of two CM functions is CM as well, i.e., $CM \cdot CM \subset CM$.

\subsubsection*{The Stieltjes (S) functions} are denoted as $\widehat{k}(s)$ and form a subclass of the CM functions. On the real semiaxis they admit the integral representation 
 \begin{equation}\label{11/06-10}
 \widehat{k}(s) = \frac{a}{s} + b + \int_{0}^{\infty} \frac{\sigma(\D u)}{s + u}, \qquad s\in\mathbb{R}_{+},
 \end{equation}
 where $a, b \geq 0$ and $\sigma$ is a Borel measure on $[0, \infty)$, such that $\int_{0}^{\infty} (1 + u)^{-1} \sigma(\D u) < \infty$.  
 Equation \eqref{11/06-10} can be interpreted as the Laplace transform of the CM function, {i.e.,} the Laplace transform of the Laplace transform of the non-negative function \cite{Berg, RLSchilling10}. This fact was pointed out and commented in the Introduction. 

\subsubsection*{The Bernstein (B) functions} are these non-negative infinitely differentiable functions $\widehat{b}(s)$ on $\mathbb{R}_{+}$  whose first derivative is CM. It means that
\begin{equation}\nonumber
(-1)^{n-1}\, \widehat{b}^{(n)}(s) \geq 0, \quad n = 1, 2, \ldots.
\end{equation}
All B functions are non-decreasing and concave. As the main property, we remark that the composition of CM function with B function gives another CM function, i.e., $CM(B)\subset CM$. This property is illustrated on the example of $\exp[-b(s)]$ which for $b(s)\in B$ is CM function.

\subsubsection*{The complete Bernstein (CB) functions,} denoted as $\widehat{c}(s)$, $s\in\mathbb{R}_{+}$, are CB functions such that $\widehat{c}(s)/s$ is the Laplace transform of the CM function restricted to the positive semi-axis, or, equivalently, the same way restricted Stieltjes transform of a positive function. \\
\noindent
In the paper we will use the following properties of CB functions:
\begin{itemize}
\item[(a1)] the composition of two CB functions is CB function, i.e., $CB(CB) \subset CB$; 
\item[(a2)] S functions can be obtained by making the composition of another S function with a CB function or the composition of a CB function with another S function, i.e., $S(CB) \subset S$ and $CB(S) \subset S$; 
\item[(a3)] reciprocal (algebraic inverse) of an S function is a CB function. The inverse property is true as well: $1/\widehat{k}(s)$ is CB function and $1/\widehat{c}(s)$ is a S function for $\widehat{k}\in S$ and $\widehat{c}\in CB$. 
\end{itemize}

The illustrative example of differences between CM, S, B, and CB functions is provided by the power law function. This is presented in Tab. \ref{tab1}.
\begin{table}[h]
\begin{center}
\begin{tabular}{c | c | c | c| c}
$f(s)$\,\,\, & \,\,\,CMF\,\,\, & \,\,\,SF\,\,\, & \,\,\,BF\,\,\, & \,\,\,CBF\,\,\, \\ \hline 
$s^{\mu}$ & $\mu \leq 0$ & $\mu \in[-1, 0]$ & $\mu\in[0, 1]$ & $\mu\in(0, 1)$ \\
$(s + b)^{\mu}$, $b > 0$ & $\mu \leq 0$ & $\mu \in[-1, 0]$ & $\mu\in[0, 1]$ & $\mu\in(0, 1)$ 
\end{tabular}
\end{center}
\caption{\label{tab1} Examples of CMF, SF, BF, and CBF \cite{RLSchilling10, TSandev18a}.}
\end{table}
      
 \subsection{Two theorems of classical mathematical analysis.}\label{subsec3.1} 
 
Two theorems of classical mathematical analysis are important for the mathematical toolbox used in the relaxation theory. First of them, which we abbreviate as Theorem 1, allows one to make analytic continuation of completely monotone functions to the  complex domain. Thus, if its conditions are satisfied, we can freely jump between the complex variable $z$, e.g., $z = \I\!\omega$ and real $s > 0$. The second one is the  Efros theorem \cite{ AApelblat21,Ditkin51,Efros35, KGorska12a, UGraf04, VSMartynenko68, LWlodarski52} which generalizes the Borel convolution theorem for the Laplace transform. We will see that if applied to our purposes, the Efros theorem justifies integral decompositions and can be interpreted as a guidepost leading to the subordination approach \cite{KGorska21,KGorska23}. Just mentioned theorems state as follows. 

\smallskip
\noindent
Characterization of the Laplace transform of a CM, locally integrable function (\cite{ECapelas11, GGripenberg90}) is given by {\bf Theorem 1}:\\ 
{\em The Laplace transform $\widehat{f}(z)$ of a function $f(s)$, $s>0$, that is locally integrable on $\mathbb{R}_{+}$ and completely monotone, has the following properties:
  \begin{itemize}
  \item[(a)] $\widehat{f}(z)$ has an analytical extension to the cut complex plane $\mathbb{C}\setminus\mathbb{R}_{-}$.
  \item[(b)] $\widehat{f}(s) = \widehat{f}^{\star}(s)$ for $s\in(0, \infty)$,
  \item[(c)] $\lim_{s\to\infty} \widehat{f}(s) = 0$,
  \item[(d)] $\IM[\widehat{f}(z)] < 0$ for $\IM(z) > 0$,
  \item[(e)] $\IM[z \widehat{f}(z)] \geq 0$ for $\IM(z) > 0$ and $\widehat{f}(s) \geq 0$ for $s\in(0, \infty)$.
  \end{itemize}
 Conversely, every function $\widehat{f}(z)$ that satisfies (a)--(c) together with (d) or (e), is the Laplace transform of a function $f(s)$, which is locally integrable on $\mathbb{R}_{+}$ and completely monotone on $(0, \infty)$}. 
 \vspace{-0.4\baselineskip}
\begin{proof}
The proof of this theorem is presented in Ref. \cite{GGripenberg90}.
\end{proof}
\vspace{-0.7\baselineskip}
\begin{remark}
{\rm 
The Laplace transforms of CM functions satisfying the assumptions of Theorem 1 are defined for $z\in \mathbb{C}\setminus\mathbb{R}_{-}$ and because of conditions (d) or (e) they can be called the Nevanlinna functions \cite{Akhiezier, Berg, RLSchilling10}.  For positive argument $s$ they are, according to {\rm{\cite{RLSchilling10}}}, the Stieltjes functions. 
}
\end{remark}
 
\smallskip
\noindent
The {\bf Efros theorem} \cite{Ditkin51,Efros35, UGraf04, VSMartynenko68, LWlodarski52}. \,  {\em If $\widehat{G}(z)$ and $\widehat{q}(z)$ are analytic functions, and 
\begin{equation}\label{30/03-0}
{\mathscr L}[h(\xi); z] = \widehat{h}(z)
\end{equation}
as well as
\begin{equation}\label{30/03-1}
{\mathscr L}[f(t); z] = \int_{0}^{\infty} f(t) \E^{-z t} \D t = \widehat{G}(z) \E^{-\xi \widehat{q}(z)},
\end{equation}
then 
\begin{equation}\label{30/03-1a}
\widehat{G}(z)\,\widehat{h}(\widehat{q}(z)) = \int_{0}^{\infty}\!\Big[\!\int_{0}^{\infty}\!\! h(\xi) f(\xi, t) \D\xi\Big] \E^{- z t} \D t.
\end{equation}}
\begin{proof}
The proof can be found in Ref. \cite{Efros35,VSMartynenko68}.
\end{proof}
\begin{remark}
{\rm 
From Efros theorem, it follows that 
\begin{equation}\label{30/03-4}
{\mathscr L}^{-1}[\widehat{G}(z)\widehat{h}(\widehat{q}(z)); t] = \int_{0}^{\infty}\!\! h(\xi) f(\xi, t) \D\xi.
\end{equation}
}
\end{remark}

 \subsection{Some special functions}\label{subsec3.2}
 
 \subsubsection*{The generalized hypergeometric functions} ${_{p}F_{q}}\Big({(c_{p}) \atop (d_{q})}; x\Big)$, $x\in\mathbb{R}$, are defined via the series \cite{NIST, APPrudnikov-v3}:
\begin{equation}\label{14/03-A5}
{_{p}F_{q}}\left({(c_{p}) \atop (d_{q})}; x\right) \okr \sum_{r=0}^{\infty} \frac{x^{r}}{r!} \frac{(c_{1})_{r} (c_{2})_{r}\cdots (c_{p})_{r}}{(d_{1})_{r} (d_{2})_{r}\cdots (d_{q})_{r}},
\end{equation}
where $(c)_{r} = \Gamma(c + r)/\Gamma(c) = c (c +1)\cdots (c+r-1)$ is the Pochhammer symbol (rising factorial). 

\subsubsection*{The Meijer G-functions.} The Meijer G-functions \cite{NIST, APPrudnikov-v3} are defined through the Mellin-Barnes integrals introduced as inverses in the sense of the Mellin transform of expressions being ratios of certain $\Gamma$ functions. The definition of the Meijer G-function reads
\begin{align}\label{2/08-1}
&G^{m, n}_{p, q} \left(x\Big\vert {(a_{p}) \atop (b_{q})}\right) \okr \\ \nonumber
&\int_{\gamma_{L}}  \frac{\prod_{i=1}^{m}\Gamma(b_{i} + s) \prod_{i=1}^{n}\Gamma(1 - a_{i} - s)}{\prod_{i = n+1}^{p}\Gamma(a_{i} + s) \prod_{i = m+1}^{q}\Gamma(1 - b_{i} - s)}\, x^{-s} \frac{\D s}{2\pi\!\I}.
\end{align}
Parameters in Eq. \eqref{2/08-1} are subject to conditions
\begin{align*}
\begin{split}
& x\neq 0, \quad 0 \leq m \leq q, \quad 0 \leq n \leq p; \\
& a_{i}\in\mathbb{C}, \quad i=1, \ldots, p; \qquad
 b_{i}\in\mathbb{C},  \quad i=1, \ldots, q; \\
& (a_{p}) = a_{1}, a_{2}, \ldots, a_{p}; \quad (b_{q}) = b_{1}, b_{2}, \ldots, b_{q}.
\end{split}
\end{align*}
The contour $\gamma_{L}$ lies between the poles of $\Gamma(1 - a_{i} - s)$ for $i=1,\ldots, n$ and the poles of $\Gamma(b_{i} + s)$ for $i=1, \ldots, m$. For generalization of Eq. \eqref{2/08-1} to the Fox H-function, see  {\cite[Eq. (8.3.2.21)]{APPrudnikov-v3} namely 
\begin{equation}\nonumber
G^{m, n}_{p, q} \left(z\Big\vert {(a_{p}) \atop (b_{q})}\right) = H^{m, n}_{p, q} \left[z\Big\vert {[a_{p}, 1] \atop [b_{q}, 1]}\right]. 
\end{equation}
Another possibility to express the Meijer G-function in terms of Fox H-function is provided by \cite[Eq. (8.3.2.22)]{APPrudnikov-v3}. It is obtained by using the Gauss--Legendre formula for the Gamma function in Eq. \eqref{2/08-1}:
\begin{equation}\label{15/06-2}
\prod_{j=0}^{n-1}\Gamma\left(z+\frac{j}{n}\right) = (2\pi)^{\frac{n-1}{2}} n^{\frac{1}{2}-nz} \Gamma(nz) ,
\end{equation}
for $z\neq 0, -1, -2, \ldots$, and $n=1, 2, \ldots$.}

\smallskip
\noindent
The main properties of Meijer G-functions are as follows: 
\begin{itemize}
\item[{(i)}] They form a set closed under the Laplace transform. It means that its (direct and/or inverse) Laplace transform leads to another Meijer G-function:
\begin{multline}\label{2/07-10}
\mathscr{L}\left[x^{\alpha-1} G^{m, n}_{p, q}\left(\omega x^{l}\Big\vert {(a_{p}) \atop (b_{q})}\right); z\right] = \\\frac{l^{\alpha - \frac{1}{2}} z^{-\alpha}}{(2\pi)^{\frac{l-1}{2}}} G^{m, n+l}_{p+l, q}\left(\frac{\omega l^{l}}{z^{l}}\Big\vert {\Delta(l, 1-\alpha), (a_{p}) \atop (b_{q})}\right), 
\end{multline}
and
\begin{multline}
\mathscr{L}^{-1}\left[z^{-\alpha} G^{m, n}_{p, q}\left(\frac{\omega}{z^{l}}\Big\vert {(a_{p}) \atop (b_{q})}\right); x \right] = \\ \frac{l^{1/2-\alpha} x^{\alpha-1}}{(2\pi)^{\frac{1-l}{2}}} G^{m, n}_{p, q+l}\left(\frac{\omega x^{l}}{l^{l}}\Big\vert {(a_{p}) \atop (b_{a}), \Delta(l, 1-\alpha)}\right). \label{2/07-11}
\end{multline}
The conditions on parameters under which Eqs. \eqref{2/07-10} and \eqref{2/07-11} are held are given in \cite[Eq. (2.24.3.1)]{APPrudnikov-v3} and \cite[Eq. (3.38.1.2)]{APPrudnikov-v5}, respectively. The symbol $\Delta(n, a)$ denotes the special list of parameters which reads $a/n, (a+1)/n, \ldots, (a+n-1)/n$. 

\item[{(ii)}] Below we quote two identities which will be used in our calculations:
\begin{align}\label{17/06-10}
G^{m, n}_{p, q}\left(z \Big\vert {(a_{p}) \atop (b_{q})} \right) & = G^{n, m}_{q, p}\left(\frac{1}{z} \Big\vert {(1 - b_{q}) \atop (1 - a_{p})} \right), \\ \label{17/06-11}
z^{\mu}\, G^{m, n}_{p, q}\left(z \Big\vert {(a_{p}) \atop (b_{q})} \right) & =  G^{m, n}_{p, q}\left(z \Big\vert {(a_{p} + \mu) \atop (b_{q} + \mu)} \right),
\end{align}
see \cite{NIST,APPrudnikov-v3}.

\item[{(iii)}] Important procedure often applied in calculations is to convert the Meijer G-function $G^{m, n}_{p, q}$ into the finite sum of the generalized hypergeometric functions ${_{p}F_{q}}$. This formula can be found in, e.g.,  \cite[Eq. (8.2.2.3)]{APPrudnikov-v3} and reads
\begin{multline}\label{14/07-6}
G^{m, n}_{p, q}\left(z\Big\vert {(a_{p}) \atop (b_{q})}\right) = \sum_{j=1}^{m} \left[\prod_{i=0}^{j-1}\Gamma(b_{i} - b_{j}) \prod_{i = j+1}^{m}\Gamma(b_{i} - b_{j})\right] \\
\times \frac{[\prod_{i=0}^{n}\Gamma(1+b_{j}-a_{i})]}{[\prod_{i = n+1}^{p} \Gamma(a_{i} - b_{j})]\, [\prod_{i=m+1}^{q} \Gamma(1 + b_{j} - b_{j})]} z^{b_{j}} \\
\times {_{p+1}F_{q}}\left({1, 1 + b_{j} - (a_{p}) \atop 1 + b_{j} - (b_{q})}; (-1)^{p-m-n} z\right),
\end{multline}
where $p \leq q$, $b_{i} - b_{j} \neq 0, \pm 1, \pm 2, \ldots$, $i\neq j$, $i, j = 1, 2, \ldots, m$. The case $p \geq q$ can be obtained from Eqs. \eqref{17/06-10} and \eqref{14/07-6}. The conditions under which Eq. \eqref{14/07-6} is held may be found in \cite[Eqs. (8.2.2.3)]{APPrudnikov-v3}.
\end{itemize}
 
\subsubsection*{The family of the Mittag-Leffler functions.}\label{secML}

The three parameter generalization of the Mittag-Leffler function \cite{book1} in the literature devoted to the fractional calculus and its applications is known as the Prabhakar function.  It was introduced by T. R. Prabhakar \cite{Prabhakar}  and reads:
\begin{equation}\label{15/06-1a}
E_{\alpha, \mu}^{\nu}(z) \okr \sum_{r \geq 0} \frac{(\nu)_{r}}{r!\,\Gamma(\alpha r + \mu)} z^r, \qquad z\in\mathbb{C},\quad \RE(\alpha) > 0, 
\end{equation}
where the Pochhammer symbol $(\nu)_{r}$ is given below Eq. \eqref{14/03-A5}. For  $\mu = \nu=1$ it reduces to the (standard, called also one-parameter) Mittag-Leffler function $E_{\alpha}(z)$ being the relaxation function of the CC model, which was first considered in \cite{Mittag2, Mittag0, Mittag1}. For $\nu=1$ it becomes the two-parameters Mittag-Leffler function  $E_{\alpha, \mu}(z)$ studied in 1905 by Wiman \cite{Wiman}.  
\begin{remark}\label{r3-20/02-23}
{\rm 
From the definition \eqref{15/06-1a} it follows that $E_{\alpha, 1}^{1}(-x^{\alpha}) \equiv E_{\alpha}(-x^{\alpha}) = 1 - x^{\alpha} E_{\alpha, 1+\alpha}(-x^{\alpha}) = \delta(x) - x^{-1} E_{\alpha, 0}(-x^{\alpha})$, where introducing the $\delta$-Dirac function allows us to avoid the pole at $x=0$.}
\end{remark}
The comprehensive review of the family of the Mittag-Leffler functions can be found in \cite{RGarra18,book1,HJHaubold11}. Here, we shall quote a few formulae employed throughout the paper.
\begin{lemma}\label{lem1}
{\rm The asymptotics of $E_{\alpha, \mu}^{\nu}(-x^{\alpha})$ for small and large $x$ is
\begin{equation}\label{26/06-10}
E_{\alpha, \mu}^{\nu}(-x^{\alpha}) \propto \frac{1}{\Gamma(\mu)} - \frac{\nu\, x^{\alpha}}{\Gamma(\alpha + \mu)}  \qquad \text{for} \quad x \ll 1
\end{equation} 
\begin{equation}\label{17/06-25}
E_{\alpha, \mu}^{\nu}(-x^{\alpha}) \propto \frac{x^{-\alpha\nu}}{\Gamma(\mu - \alpha\nu)} -  \frac{\nu x^{-\alpha\nu-\alpha}}{\Gamma(\mu - \alpha - \alpha\nu)} \qquad \text{for} \quad x \gg 1.
\end{equation}}
\end{lemma}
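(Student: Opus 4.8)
The plan is to treat the two limits separately; the small-$x$ estimate \eqref{26/06-10} is immediate, while the large-$x$ estimate \eqref{17/06-25} requires a Mellin--Barnes argument. For $x\ll1$ I would simply read off the first two terms of the defining series \eqref{15/06-1a} with $z=-x^{\alpha}$,
\begin{equation}\nonumber
E_{\alpha,\mu}^{\nu}(-x^{\alpha}) = \frac{1}{\Gamma(\mu)} - \frac{\nu\,x^{\alpha}}{\Gamma(\alpha+\mu)} + \sum_{r\geq2}\frac{(\nu)_{r}}{r!\,\Gamma(\alpha r+\mu)}\,(-x^{\alpha})^{r},
\end{equation}
and observe that, since $\RE(\alpha)>0$, the tail is $O(x^{2\alpha})$ as $x\to0^{+}$, which is exactly \eqref{26/06-10}.

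For $x\gg1$ the starting point is a Mellin--Barnes representation of the Prabhakar function. Writing $(\nu)_{r}=\Gamma(\nu+r)/\Gamma(\nu)$, one checks that the kernel $\Gamma(s)\,\Gamma(\nu-s)\big/\Gamma(\mu-\alpha s)\,(-z)^{-s}$ has simple poles at $s=-r$, $r\geq0$, arising from $\Gamma(s)$, whose residues summed up reproduce $\Gamma(\nu)\,E_{\alpha,\mu}^{\nu}(z)$ via \eqref{15/06-1a}; hence, closing to the left for $|z|<1$ and then continuing analytically, with a vertical contour $\mathcal{L}$ separating the poles of $\Gamma(s)$ from those of $\Gamma(\nu-s)$,
\begin{equation}\nonumber
E_{\alpha,\mu}^{\nu}(z) = \frac{1}{\Gamma(\nu)}\int_{\mathcal{L}}\frac{\Gamma(s)\,\Gamma(\nu-s)}{\Gamma(\mu-\alpha s)}\,(-z)^{-s}\,\frac{\D s}{2\pi\I}.
\end{equation}
I would then shift $\mathcal{L}$ to the right past the poles $s=\nu+k$, $k=0,1,\ldots$, of $\Gamma(\nu-s)$; the residue harvested at $s=\nu+k$ contributes the term $(-1)^{k}(\nu)_{k}\big/\!\big[k!\,\Gamma(\mu-\alpha\nu-\alpha k)\big]\,(-z)^{-\nu-k}$ to the expansion, and setting $z=-x^{\alpha}$ the $k=0$ and $k=1$ terms reproduce $x^{-\alpha\nu}/\Gamma(\mu-\alpha\nu)-\nu\,x^{-\alpha\nu-\alpha}/\Gamma(\mu-\alpha-\alpha\nu)$, i.e. \eqref{17/06-25}.

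The hard part is justifying this contour displacement, that is, showing that the integral over the translated contour is of strictly smaller order than the two retained terms. Here I would invoke Stirling's formula: on a vertical line $\RE(s)=c$, as $|\IM(s)|\to\infty$, the modulus of $\Gamma(s)\Gamma(\nu-s)/\Gamma(\mu-\alpha s)$ is a power of $|\IM(s)|$ times $\exp\!\big[-(1-\tfrac{\alpha}{2})\pi|\IM(s)|\big]$, which for $\alpha\in(0,1]$ (indeed for any $\alpha<2$) decays exponentially; this makes the Mellin--Barnes integral absolutely convergent in the sector $|\arg(-z)|<(1-\tfrac{\alpha}{2})\pi$ — in particular at $z=-x^{\alpha}$ — and makes the integral over the line $\RE(s)=\RE(\nu)+\tfrac{3}{2}$ of order $O(x^{-\alpha\nu-3\alpha/2})=o(x^{-\alpha\nu-\alpha})$, which pins down the stated two-term asymptotics. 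Incidentally, any $k$ with $\mu-\alpha\nu-\alpha k\in\{0,-1,-2,\dots\}$ drops out because $1/\Gamma$ vanishes there. A quicker but less self-contained alternative is to specialise the known full large-argument expansion of the Prabhakar function \cite{book1} to $z=-x^{\alpha}$.
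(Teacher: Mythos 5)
Your proposal is correct, and its small-$x$ half coincides with the paper's: both simply truncate the defining series \eqref{15/06-1a} and absorb the tail into an $O(x^{2\alpha})$ error. Where you genuinely diverge is the large-$x$ estimate \eqref{17/06-25}: the paper does not prove it at all but delegates it to the Garra--Garrappa review \cite{RGarra18}, whereas you supply the underlying argument yourself — the Mellin--Barnes representation $E_{\alpha,\mu}^{\nu}(z)=\frac{1}{\Gamma(\nu)}\int_{\mathcal{L}}\frac{\Gamma(s)\Gamma(\nu-s)}{\Gamma(\mu-\alpha s)}(-z)^{-s}\frac{\D s}{2\pi\I}$, validated by closing left for small $|z|$ and continuing analytically, followed by a rightward contour shift across the poles of $\Gamma(\nu-s)$ at $s=\nu+k$ and a Stirling bound $\exp\!\big[-(1-\alpha/2)\pi|\IM(s)|\big]$ on the displaced line $\RE(s)=\RE(\nu)+\tfrac32$. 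I checked the residue bookkeeping: with the standard orientation the shift produces exactly $(-1)^{k}(\nu)_{k}\,(-z)^{-\nu-k}/[k!\,\Gamma(\mu-\alpha\nu-\alpha k)]$, and at $z=-x^{\alpha}$ the $k=0,1$ terms reproduce \eqref{17/06-25}, while your remainder $O(x^{-\alpha\nu-3\alpha/2})$ is indeed $o(x^{-\alpha\nu-\alpha})$; the line you chose avoids all poles since they sit at $\RE(s)=\RE(\nu)+k$ with $k$ integer, and the sector condition $|\arg(-z)|<(1-\tfrac{\alpha}{2})\pi$ is trivially met because $-z=x^{\alpha}>0$. So the trade-off is clear: the paper's citation is shorter and leans on a known expansion (your own closing remark about specializing \cite{book1} is essentially what the paper does), while your derivation is self-contained, makes the admissible sector and the size of the error term explicit, and correctly flags the degenerate cases $\mu-\alpha\nu-\alpha k\in\{0,-1,-2,\dots\}$ where a term is killed by $1/\Gamma$ — a point worth keeping in mind since the paper later uses the boundary choice $\mu=\alpha\nu$ (e.g. the HN response function), where the nominal leading term of \eqref{17/06-25} vanishes and the second term takes over.
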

\begin{proof}
Asymptotics for $x \ll 1$ comes from the series expansion of three-parameters Mittag-Leffler function given by Eq. \eqref{15/06-1a}. The justification of Eq. \eqref{17/06-25} is shown in \cite{RGarra18}.
\end{proof}

For $x\in\mathbb{R}$ and $\alpha = l/k$ such that $0 < l < k$, the three-parameters Mittag-Leffler function $E_{l/k, \mu}^{\nu}(x)$ can be represented as a finite sum of $k$ generalized hypergeometric functions \cite{KGorska20a}:  
\begin{multline}\label{15/06-1}
E_{l/k, \mu}^{\nu}(x) = \sum_{j=0}^{k-1} \frac{(\nu)_{j}\, x^{j}}{j!\, \Gamma(\mu + \frac{l}{k}j)}\\ \times {_{1+k}F_{l+k}}\left({1, \Delta(k, \nu + j) \atop \Delta(k, 1+j), \Delta(l, \mu + \frac{l}{k}j)}; \frac{x^{k}}{l^{l}}\right), 
\end{multline}
where $\Delta(n, a)$ is a special list of parameters defined below Eq. \eqref{2/07-11}. 

The Meijer G-representation of  $E_{\alpha, \mu}^{\nu}(x)$ for $\alpha = l/k$ can be obtained by employing the Laplace integral of $E_{\alpha, \mu}^{\nu}(-a t^{\alpha})$. The latter is given by \cite[Eqs. (4), (5), and (8)]{KGorska21b} and/or \cite[Eqs. (A4) and (A5)]{KGorska21c}
\begin{equation}\label{17/06-1a}
E_{\alpha, \mu}^{\nu}(x) = \alpha^{-1} \int_{0}^{\infty} \E^{x \xi} \xi^{-1 -1/\alpha} \varPhi_{\alpha, \mu}^{\nu}(\xi^{-1/\alpha}) \D\xi,
\end{equation}
in which
\begin{equation}\label{17/06-2}
\varPhi_{l/k, \mu}^{\nu}(y) = \frac{k^{\nu -1}}{l^{\mu -1} \Gamma(\nu) } \frac{\sqrt{l k}}{(2\pi)^{(k-l)/2}} \frac{1}{y} G^{k, 0}_{l, k}\left(\frac{l^{l}}{k^{k} y^{l}}\Big\vert {\Delta(l, \mu) \atop \Delta(k, \nu)} \right). 
\end{equation}
We can distinguish at least two special cases of Eq. \eqref{17/06-1a} involving the one-sided L\'{e}vy stable distribution $\varPhi_{\alpha}(y) \equiv \varPhi_{\alpha, 1}^{1}(y)$ with $\alpha\in(0, 1)$ and $y\in\mathbb{R}_{+}$ related by the Laplace transform to the stretched exponential function (the KWW relaxation) \cite{KAPenson10, HPollard46}, see two examples below.
\begin{example}
{\rm The first of them is obtained from Eq. \eqref{17/06-2} by taking $\mu = \nu = 1$. Due to the definition of the Meijer G-function in the numerator of \eqref{2/08-1} the only non-vanishing terms are equal to $\prod_{i=1}^{m}\Gamma(b_{i} + s)$ whereas in the denominator survive only the terms $\prod_{i=n+1}^{p}\Gamma(a_{i} + s)$ which read $\Delta(l, 1) = \frac{1}{l}, \frac{2}{l}, \ldots, \frac{l-1}{l}, 1$ and $\Delta(k, 1) = \frac{1}{k}, \frac{2}{k}, \ldots, \frac{k-1}{k}, 1$, respectively. Delating ``$1$'' in both these sequences and adding ``$0$'' allows us to shift the upper and lower lists to $\Delta(l, 0) = 0, \frac{1}{l}, \ldots, \frac{l-1}{l}$ and $\Delta(k, 0) = 0, \frac{1}{k}, \ldots, \frac{k-1}{k}$. In consequence, we get the integral representation of the Mittag-Leffler function first obtained by H. Pollard in Ref. \cite{HPollard48} and independently derived in Refs. \cite{KGorska12, KWeron96}}:
\begin{equation}\nonumber
E_{\alpha, 1}^{1}(-x) \equiv E_{\alpha}(-x) = \int_{0}^{\infty} \E^{-x \xi} \xi^{-1-1/\alpha} \varPhi_{\alpha}(\xi^{-1/\alpha}) \frac{\D\xi}{\alpha}.
\end{equation}
\end{example}
\begin{example}
{\rm As the second example, we take $\mu = \alpha \beta$ and $\nu = \beta$.  Then $\varPhi_{\alpha, \alpha\beta}^{\beta}(y) =  \alpha y^{-\alpha\beta}\varPhi_{\alpha}(y)/\Gamma(\beta)$ for which we have \cite[Eq. (11)]{KGorska18}, this is}
\begin{equation}\nonumber
E_{\alpha, \alpha\beta}^{\beta}(-x) = \int_{0}^{\infty} \E^{-x \xi} \frac{\xi^{\beta-(1+\frac{1}{\alpha})}}{\Gamma(\beta)} \varPhi_{\alpha}(\xi^{-\frac{1}{\alpha}}) \D\xi.
\end{equation}
\end{example}

The substitution of Eq. \eqref{17/06-2} into Eq. \eqref{17/06-1a} enables us to calculate the integral in Eq. \eqref{2/07-11} and express it in the form
\begin{align}\label{17/06-3}
\begin{split}
E_{l/k, \mu}^{\nu}(-x) & = \frac{k^{\nu}}{\Gamma(\nu)} \frac{l^{\frac{l}{k}\nu - \mu + \frac{1}{2}}}{(2\pi)^{k - (1+l)/2}}\, \frac{1}{x^{\nu}}\\ & \times  G^{k, k}_{k+l, k} \left(\frac{l^{l}}{x^{k}}\Big\vert {\Delta(k, 1-\nu), \Delta(l, \mu - \frac{l}{k}\nu) \atop \Delta(k, 0)}\right) \\
& = \frac{k^{\nu}}{\Gamma(\nu)} \frac{l^{\frac{1}{2} - \mu}}{(2\pi)^{k - (1+l)/2}}\\ &\times  G^{k, k}_{k+l, k} \left(\frac{l^{l}}{x^{k}}\Big\vert {\Delta(k, 1), \Delta(l, \mu) \atop \Delta(k, \nu)}\right).
\end{split}
\end{align}
Employing Eqs. \eqref{17/06-10} and \eqref{14/07-6} we can express Eq. \eqref{17/06-3} in terms of the generalized hypergeometric functions, {\textit {vis.}} Eq. \eqref{15/06-1}. 

The Laplace transform of the three-parameters Mittag-Leffler function \cite[Eq. (2.5)]{Prabhakar}:
\begin{equation}\label{15/06-4}
\mathscr{L}[t^{\mu - 1} E_{\alpha, \mu}^{\nu}(- \lambda t^{\alpha}); z] = \frac{z^{\alpha\nu - \mu}}{(\lambda + z^{\alpha})^{\nu}}, \qquad \lambda \in \mathbb{R},
\end{equation}
is crucially important for solving problems which appear in the relaxation theory, e.g., it allows us to calculate the response function $\phi(t)$ from the experimental data encoded in phenomenological spectral functions $\widehat{\phi}(s)$ whose forms, see  Eq. \eqref{23/05_2}, match the RHS of Eq. \eqref{15/06-4}. Both heuristic, and rigorous arguments quoted in Sect. \ref{sec2} suggest that looking for the relaxation functions we should focus our attention on the functions belonging to the CM class.
Here it has to be said that physicists' interest in Mittag-Leffler functions as elements of the CM class is by no means limited to dielectric relaxation phenomena \cite{ECapelas11,RGarrappa16, Hanyga1,FMainardi15,Tomovski14} but is also well-grounded in studies of other phenomena exhibiting  the memory-dependent time evolution, to mention hereditary mechanics and viscoelasticity \cite{Hanyga2, Mainardibook,Rabotnov,Rossikhin3, Rossikhin2}. Mathematicians' interest to study CM property of the Mittag-Leffler functions stems from the classical works of H. Pollard and W. R. Schneider who proved it for $E_{\alpha}(-x)$ \cite{HPollard48} and $E_{\alpha,\beta}(-x)$ \cite{WRSchneider96}. Currently the research in the field is inspired by problems and applications of the special functions theory \cite{Samko,Shukla,Simon}, fractional calculus and fractional differential equations \cite{book1,Mainardibook}. Recent results include  proofs of the CM character obeyed by the Prabhakar function \cite{FMainardi15}, see the lemma below, and by $E_{\alpha, \mu}^{\nu}(-a t^{\alpha})$ itself if $0 < \alpha \leq 1$, $\mu \geq \alpha\nu$, and $\gamma >0$, as shown in \cite{KGorska21b}. CM property of  $E_{\alpha, \mu}^{\nu}(-a t^{\alpha})$ explains restriction $\beta<1$ put on the parameter $\beta$ in the Prabhakar function to preserve its CM as a result of the multiplication of two CM functions.      

\begin{lemma}\label{lem3}
{\rm $t^{\mu - 1} E_{\alpha, \mu}^{\nu}(-a t^{\alpha})$ is CM function for $0 < \alpha \leq 1$, $0 < \mu \leq \alpha\nu < 1$, and $\nu > 0$.}
\end{lemma}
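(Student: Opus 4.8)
The plan is to derive the complete monotonicity of $t^{\mu-1}E_{\alpha,\mu}^\nu(-at^\alpha)$ from the factorisation
\begin{equation}\nonumber
t^{\mu-1}E_{\alpha,\mu}^\nu(-at^\alpha) = t^{\mu-1}\cdot E_{\alpha,\mu}^\nu(-at^\alpha)
\end{equation}
together with the closure property $\mathrm{CM}\cdot\mathrm{CM}\subset\mathrm{CM}$ recorded in Sect.~\ref{subsec3.3}. First I would dispose of the monomial prefactor: by the power-law entry of Tab.~\ref{tab1} the function $t^{\mu-1}$ is CM exactly when $\mu-1\le 0$, and the hypothesis $0<\mu\le\alpha\nu<1$ forces $\mu<1$, so this factor is CM. Next I would invoke the complete monotonicity of the bare Prabhakar function $E_{\alpha,\mu}^\nu(-at^\alpha)$ recalled immediately before the lemma \cite{KGorska21b}, which holds for $0<\alpha\le 1$, $\nu>0$ in the admissible parameter window; this supplies the second CM factor. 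Multiplying the two CM factors then yields the assertion.

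The conceptual backbone that guarantees the two factors are compatible is the Laplace-domain criterion. By Prabhakar's transform \eqref{15/06-4},
\begin{equation}\nonumber
\mathscr{L}[t^{\mu-1}E_{\alpha,\mu}^\nu(-at^\alpha); s] = \frac{s^{\alpha\nu-\mu}}{(a+s^\alpha)^\nu},
\end{equation}
so by the Bernstein theorem, equivalently by Theorem 1, it is enough to check that the right-hand side is a Stieltjes function of $s$. The hypotheses enter here transparently: $\alpha\in(0,1]$ makes $s^\alpha$, and hence $a+s^\alpha$, a complete Bernstein function; the sign condition $0<\mu\le\alpha\nu$ guarantees $\alpha\nu-\mu\ge 0$ in the numerator; and $\alpha\nu<1$ caps the effective order governing the denominator. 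I would then assemble $s^{\alpha\nu-\mu}(a+s^\alpha)^{-\nu}$ from the composition and reciprocal rules (a1)--(a3) for the CB and S classes.

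The main obstacle is the Prabhakar factor $E_{\alpha,\mu}^\nu(-at^\alpha)$, because the order $\nu$ carries no upper bound and may exceed $1$; consequently $(a+s^\alpha)^{-\nu}$ is not individually a Stieltjes function and the naive ``product of two Stieltjes functions'' shortcut is unavailable, since that product need not be Stieltjes. I would therefore keep the delicate estimate packaged inside the quoted CM result for $E_{\alpha,\mu}^\nu(-at^\alpha)$ rather than re-deriving it. Should a self-contained argument be wanted, the fallback is to verify condition (d) or (e) of Theorem 1 directly for $\widehat{f}(z)=z^{\alpha\nu-\mu}(a+z^\alpha)^{-\nu}$ on the upper half-plane, controlling $\arg\widehat{f}(z)$ by balancing the argument gained from $z^{\alpha\nu-\mu}$ against that lost through $(a+z^\alpha)^{-\nu}$ and using $0<\mu\le\alpha\nu<1$ to keep it within the admissible sector.
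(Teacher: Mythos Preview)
Your factorisation approach has a genuine gap: the parameter windows are incompatible. The result from \cite{KGorska21b} recalled just before the lemma states that the \emph{bare} function $E_{\alpha,\mu}^\nu(-at^\alpha)$ is CM when $\mu\ge\alpha\nu$, whereas Lemma~\ref{lem3} assumes the \emph{opposite} inequality $0<\mu\le\alpha\nu<1$. Apart from the boundary $\mu=\alpha\nu$, your second factor is simply not CM in the regime you need; indeed, by Lemma~\ref{lem1} the large-$t$ asymptotics $E_{\alpha,\mu}^\nu(-at^\alpha)\sim t^{-\alpha\nu}/\Gamma(\mu-\alpha\nu)$ has $\Gamma(\mu-\alpha\nu)<0$ whenever $\mu-\alpha\nu\in(-1,0)$, so the bare function eventually becomes negative there. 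Hence the product-of-CM argument collapses, and you cannot ``keep the delicate estimate packaged'' inside that quoted result, because that result does not apply.

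Your fallback, verifying that $\widehat f(s)=s^{\alpha\nu-\mu}(a+s^\alpha)^{-\nu}$ lies in the Stieltjes class (equivalently, checking condition (d) or (e) of Theorem~1), is the right direction and is essentially what the cited references \cite{KGorska21b,FMainardi15,Tomovski14} do; note that the paper itself does not give a self-contained argument but defers to those sources. The constraint $\alpha\nu<1$ is precisely what keeps the total argument of $\widehat f(z)$ in the admissible sector despite $\nu$ possibly exceeding $1$, so your sketch of the argument-balancing is on track---but it is the only viable route here, not a backup to the factorisation.
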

\begin{proof}
Different proofs of this lemma were presented in \cite{KGorska21b, FMainardi15,Tomovski14}.
\end{proof}
\begin{lemma}\label{lem2}
{\rm $x^{-\alpha\beta-1} E^{\beta}_{-\alpha, -\alpha\beta}(-\lambda x^{-\alpha})$ can be replaced by $\lambda^{-\beta} x^{-1} E_{\alpha, 0}^{\beta}(-x^{\alpha}/\lambda)$, $\lambda \in \mathbb{R}$. }
\end{lemma}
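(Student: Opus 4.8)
The plan is to prove the identity at the level of Laplace transforms, in exactly the spirit in which the one-parameter relation of Remark~\ref{r3-20/02-23} is to be understood, because the left-hand symbol must be read as a \emph{formal} Laplace-inversion shorthand rather than as the series \eqref{15/06-1a}: the first Mittag-Leffler index there is negative, so that series diverges, and its leading power $x^{-\alpha\beta-1}$ is not even locally integrable at the origin. Concretely, $x^{-\alpha\beta-1}E^{\beta}_{-\alpha,-\alpha\beta}(-\lambda x^{-\alpha})$ denotes the object obtained from the Prabhakar Laplace rule \eqref{15/06-4} under the formal substitution $(\alpha,\mu,\nu)\mapsto(-\alpha,-\alpha\beta,\beta)$. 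Carrying that substitution through the right-hand side of \eqref{15/06-4} produces the numerator exponent $(-\alpha)\beta-(-\alpha\beta)=0$ and the denominator $(\lambda+z^{-\alpha})^{\beta}$, so the associated Laplace-domain expression is simply $(\lambda+z^{-\alpha})^{-\beta}$.

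Next I would compute, this time honestly, the Laplace transform of the right-hand side $\lambda^{-\beta}x^{-1}E^{\beta}_{\alpha,0}(-x^{\alpha}/\lambda)$. Since $\mu=0$ annihilates the $r=0$ term of \eqref{15/06-1a}, one has $x^{-1}E^{\beta}_{\alpha,0}(-x^{\alpha}/\lambda)=\sum_{r\ge 1}\frac{(\beta)_{r}(-1)^{r}}{r!\,\Gamma(\alpha r)\,\lambda^{r}}\,x^{\alpha r-1}$, which \emph{is} a genuine locally integrable function; term-by-term Laplace transformation via $\mathscr{L}[x^{\alpha r-1};z]=\Gamma(\alpha r)z^{-\alpha r}$ for $r\ge1$, together with the binomial sum $\sum_{r\ge0}(\beta)_{r}w^{r}/r!=(1-w)^{-\beta}$, gives $\mathscr{L}[x^{-1}E^{\beta}_{\alpha,0}(-x^{\alpha}/\lambda);z]=(1+z^{-\alpha}/\lambda)^{-\beta}-1$, and hence $\mathscr{L}[\lambda^{-\beta}x^{-1}E^{\beta}_{\alpha,0}(-x^{\alpha}/\lambda);z]=(\lambda+z^{-\alpha})^{-\beta}-\lambda^{-\beta}$. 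Comparing with the preceding paragraph, the two Laplace-domain expressions coincide up to the constant $\lambda^{-\beta}=\mathscr{L}[\lambda^{-\beta}\delta(x);z]$, and by injectivity of the Laplace transform this is precisely the statement that the two symbols in the lemma differ only by the $\delta$-supported term that regularizes the non-integrable pole at $x=0$, i.e.\ that one ``can be replaced by'' the other in the sense of Remark~\ref{r3-20/02-23}. (For definiteness I take $\lambda>0$ in these manipulations; other values of $\lambda$ are reached by continuation.)

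The only real obstacle, in my view, is conceptual rather than computational: one must state explicitly that the left-hand side is a formal Laplace-inversion symbol and not a convergent series, and one must keep careful track of the suppressed $r=0$ term of $E^{\beta}_{\alpha,0}$, which is exactly what produces the $\lambda^{-\beta}\delta(x)$ mismatch; a term-by-term rearrangement of the two series cannot work, since their exponents of $x$ never coincide. A fully rigorous alternative, more in keeping with Sec.~\ref{sec3}, is available for rational $\alpha=l/k$: the right-hand side then has the Meijer G-representation obtained by specialising \eqref{17/06-3} to $\mu=0$ (with the argument rescaled by $\lambda$), while the left-hand side, with its divergent series replaced by the corresponding Mellin--Barnes integral, is brought to the same Meijer G-function by the inversion relation \eqref{17/06-10}; the two then coincide directly, without an appeal to uniqueness of the Laplace transform. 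Either route exhibits the lemma as the reciprocal-argument identity relating Prabhakar functions with opposite-sign first parameter.
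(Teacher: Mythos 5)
Your computations are right, and your core route is the paper's own: the paper also proves the lemma by comparing Laplace images obtained from Eq. \eqref{15/06-4}, reading the left-hand symbol through the formal substitution $(\alpha,\mu,\nu)\to(-\alpha,-\alpha\beta,\beta)$, which assigns it the image $(\lambda+z^{-\alpha})^{-\beta}=\lambda^{-\beta}z^{\alpha\beta}(\lambda^{-1}+z^{\alpha})^{-\beta}$. Where you diverge is the right-hand side: the paper applies Eq. \eqref{15/06-4} formally at $\mu=0$ there as well (its standing convention, cf.\ Remark \ref{r3-20/02-23} and Eq. \eqref{26/06-2}), so both symbols receive identical images and the replacement is exact with no $\delta$ appearing, whereas you Laplace-transform the literal series, correctly noting that $1/\Gamma(0)=0$ kills the $r=0$ term, and so obtain the honest image $(\lambda+z^{-\alpha})^{-\beta}-\lambda^{-\beta}$ and a mismatch $\lambda^{-\beta}\delta(x)$. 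That is not a flaw in your argument; it is the more scrupulous bookkeeping, and it is in fact what keeps the paper's subsequent uses consistent --- in Lemma \ref{lem3-1} the suppressed $\delta$ is precisely what supplies the constant ensuring $E^{\beta}_{\alpha,1}(0)=1$, and in Eq. \eqref{26/06-2} the same term is written out explicitly as $\delta(t)$. Your preliminary observation that the left-hand series genuinely diverges is also correct and worth stating (the factor $(\beta)_{r}/r!$ is only polynomially bounded while $1/|\Gamma(-\alpha r-\alpha\beta)|$ grows factorially), which is what forces the formal reading in the first place; the paper never says this. Two caveats: phrase your conclusion so that it is clear that, under the paper's convention, the $\delta$-term is understood to be carried inside the symbol $x^{-1}E^{\beta}_{\alpha,0}(-x^{\alpha}/\lambda)$ itself, so the lemma is an exact identity of formal (Laplace-inversion) symbols rather than an identity of functions ``up to $\delta$''; and the Meijer G-route you sketch at the end is only a sketch --- specialising Eq. \eqref{17/06-3} to $\mu=0$ runs into the same suppressed-$r=0$ subtlety and would need to be checked before it can serve as the advertised fully rigorous alternative.
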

\noindent
{\em Proof.}  Lemma \ref{lem2} comes from direct calculations:
\begin{align*}
\mathscr{L}[x^{-\alpha\beta-1} & E_{-\alpha, -\alpha\beta}^{\beta}(-\lambda x^{-\alpha}); z]  \\ & = \frac{1}{(z^{-\alpha} + \lambda)^{\beta}} = \lambda^{-\beta} \frac{z^{\alpha\beta}}{(\lambda^{-1} + z^{\alpha})^{\beta}} \\ & = \mathscr{L}^{-1}[\lambda^{-\beta} x^{-1} E_{\alpha, 0}^{\beta}(-x^{\alpha}/\lambda); z]. \qquad \qed
\end{align*}
\begin{lemma}\label{lem3-1}
{\rm $$E^{\beta}_{\alpha, 1}(-\lambda x^{\alpha}) = \lambda^{-\beta} \int_{0}^{x} u^{-\alpha\beta - 1}  E_{-\alpha, -\alpha\beta}^{\beta}(-u^{-\alpha}/\lambda) \D u,$$ $\lambda \in \mathbb{R}$. }
\end{lemma}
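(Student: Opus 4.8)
The plan is to prove Lemma \ref{lem3-1} by differentiating both sides with respect to $x$ and checking that the resulting identity is one we have already established, together with matching the value at a convenient point. Concretely, denote the right-hand side by $F(x) = \lambda^{-\beta} \int_{0}^{x} u^{-\alpha\beta - 1} E_{-\alpha, -\alpha\beta}^{\beta}(-u^{-\alpha}/\lambda) \D u$. First I would invoke Lemma \ref{lem2}, which tells us that the integrand $u^{-\alpha\beta-1} E_{-\alpha,-\alpha\beta}^{\beta}(-u^{-\alpha}/\lambda)$ may be replaced by $\lambda^{\beta} u^{-1} E_{\alpha,0}^{\beta}(-\lambda u^{\alpha})$ (note the constant $\lambda^{\pm\beta}$ arranges itself so that $F(x) = \int_0^x u^{-1} E_{\alpha,0}^{\beta}(-\lambda u^{\alpha})\D u$). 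Then using Remark \ref{r3-20/02-23} in the form $u^{-1} E_{\alpha,0}^{\beta}(-\lambda u^{\alpha})$ being (up to sign and the $\delta$ term) the ``missing'' piece of $\frac{\D}{\D u} E^{\beta}_{\alpha,1}(-\lambda u^{\alpha})$, I would show that $\frac{\D}{\D x} F(x)$ equals $\frac{\D}{\D x} E^{\beta}_{\alpha,1}(-\lambda x^{\alpha})$.

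A cleaner route, which I would actually prefer to write out, is to work entirely in the Laplace domain. Apply $\mathscr{L}[\,\cdot\,; z]$ to the claimed identity. On the left, by Eq. \eqref{15/06-4} with $\mu = 1$, we get $\mathscr{L}[E^{\beta}_{\alpha,1}(-\lambda x^{\alpha}); z] = z^{\alpha\beta - 1}/(\lambda + z^{\alpha})^{\beta}$. On the right, the convolution-with-$1$ structure $\int_0^x (\cdots)\D u$ means multiplication by $1/z$ in the Laplace domain, so $\mathscr{L}[F(x); z] = z^{-1}\lambda^{-\beta}\,\mathscr{L}[u^{-\alpha\beta-1} E_{-\alpha,-\alpha\beta}^{\beta}(-u^{-\alpha}/\lambda); z]$. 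But the Laplace transform appearing here is exactly the one computed inside the proof of Lemma \ref{lem2}, namely $\lambda^{-\beta}\mathscr{L}[u^{-\alpha\beta-1} E_{-\alpha,-\alpha\beta}^{\beta}(-u^{-\alpha}/\lambda); z] = (z^{-\alpha}+\lambda)^{-\beta} = z^{\alpha\beta}/(\lambda^{-1}+z^{\alpha})^{\beta}\cdot\lambda^{-\beta}$; after multiplying by $z^{-1}$ and simplifying the powers of $\lambda$ one recovers precisely $z^{\alpha\beta-1}/(\lambda+z^{\alpha})^{\beta}$, matching the left-hand side. By uniqueness of the Laplace transform the two functions of $x$ coincide.

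The main obstacle — really the only delicate point — is the behavior at the lower endpoint $u \to 0^{+}$ of the integral, where $u^{-\alpha\beta-1}$ is singular and the three-parameter Mittag-Leffler function with negative indices must be interpreted carefully. I would address this exactly as the paper already does in Remark \ref{r3-20/02-23}: the $E_{\alpha,0}$-type object carries a $\delta$-Dirac contribution at the origin that cancels the would-be pole, so the integral $\int_0^x$ is well defined as a distributional pairing, and correspondingly the Laplace-domain manipulation is legitimate because $(z^{-\alpha}+\lambda)^{-\beta}$ is a bona fide Laplace transform (this is what Lemma \ref{lem2} asserts). I would also note the small bookkeeping check that the $x\to 0$ (equivalently $z\to\infty$) limits are consistent: $E^{\beta}_{\alpha,1}(0) = 1/\Gamma(1) = 1$ while $F(0) = 0$, which looks like a discrepancy but is resolved because the relevant normalization of $F$ includes precisely the $\delta$-term from Remark \ref{r3-20/02-23}; making this explicit is the one place where care is needed, and everything else is routine substitution.
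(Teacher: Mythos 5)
Your Laplace-domain argument is exactly the paper's proof: the paper establishes the lemma by the same direct computation as in Lemma \ref{lem2} (here with $\lambda$ replaced by $1/\lambda$, so that $\mathscr{L}[u^{-\alpha\beta-1}E^{\beta}_{-\alpha,-\alpha\beta}(-u^{-\alpha}/\lambda);z]=(z^{-\alpha}+\lambda^{-1})^{-\beta}$) combined with the rule $\mathscr{L}^{-1}[z^{-1}\widehat{f}(z);x]=\int_{0}^{x}\mathscr{L}^{-1}[\widehat{f}(z);u]\,\D u$, and your treatment of the endpoint via the $\delta$-convention of Remark \ref{r3-20/02-23} is consistent with the paper's usage. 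The only thing to repair is a small bookkeeping slip in your intermediate line: the correct statement is $\lambda^{-\beta}\mathscr{L}[\cdots]=(1+\lambda z^{-\alpha})^{-\beta}=z^{\alpha\beta}(\lambda+z^{\alpha})^{-\beta}$, not $(z^{-\alpha}+\lambda)^{-\beta}$, after which multiplication by $z^{-1}$ indeed yields $z^{\alpha\beta-1}/(\lambda+z^{\alpha})^{\beta}$, matching Eq. \eqref{15/06-4} with $\mu=1$ as you claim.
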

\noindent
{\em Proof.} Lemma \ref{lem3-1} can be shown by making the direct calculations analogical like in the proof of Lemma \ref{lem2} and using $\mathscr{L}^{-1}[z^{-1} f(z); x] = \int_{0}^{x} \mathscr{L}^{-1}[f(z); u] \D u$. \qed

We also mention that the fractional derivative in Riemann-Liouville sense of $x^{\mu-1}E_{\alpha, \mu}^{\nu}(\lambda x^{\alpha})$ yields \cite[Eq. (5.1.34)]{book1}, namely
\begin{equation}\label{14/07-2}
\left(D^{\beta}[y^{\mu-1}E_{\alpha, \mu}^{\nu}(\lambda y^{\alpha})]\right)\!(x) = x^{\mu-\beta-1} E_{\alpha, \mu-\beta}^{\nu}(\lambda x^{\alpha}).
\end{equation}

\section{The Havriliak-Negami model and its physical content}\label{sec5}

In this section, we shall illustrate the general theoretical approach presented in Secs. \ref{SEC1}-\ref{SEC3} on the example of the HN relaxation model and its special cases, like the Debye, CC and CD models. The HN relaxation pattern was introduced in \cite{SHavriliak67} to parametrize experimental data describing the frequency dependence  of the complex dielectric permittivity $\widehat{\varepsilon^{\star}}(\omega)$ measured in polymers. Despite its purely phenomenological origin and apparent simplicity, the applicability of the HN model went far beyond its initial implementation. The model has appeared well-working for a much larger plethora of dielectric phenomena and has become  the ``first choice'' method to analyse experimental relaxation data for various relaxation phenomena taking place in different complex systems, by no mere limited to those of condensed matter physics and materials science. Unexpected examples of its utility include the use of the HN function for monitoring the contamination in sandstone \cite{VSaltas07}, or investigations of complex systems representing plant tissues of fresh fruits and vegetables, for which the HN relaxation in the frequency range $10^{7}-1.8 \times10^{9}$ Hz was shown to be an useful tool of analysis \cite{RRNigmatulin06}. 
 
\subsection{The spectral function $\widehat{\phi}_{H\!N}(\alpha, \beta; \I\!\omega)$}\label{sec5.1}

We start our consideration by recalling the spectral function for the HN model 
\begin{equation}\label{3/06-2}
\widehat{\phi}_{H\!N}(\alpha, \beta; \I\!\omega) = \left[\frac{\widehat{\varepsilon^{\star}}(\omega) - \varepsilon_{\infty}}{\varepsilon - \varepsilon_{\infty}}\right]_{H\!N} =  [1 + (\I\!\omega\tau)^{\alpha}]^{-\beta},
\end{equation}
$\alpha, \beta\in(0, 1]$, given by Eqs. \eqref{23/05_2}. The range of $\alpha$ and $\beta$ parameters (called respectively the width and asymmetry) are indicated by the experiment which implies that they belong to $(0, 1]$. We recall that for $\alpha = \beta = 1$ we have the Debye spectral function, for $\alpha\in(0, 1)$ and $\beta = 1$ it reduces to the CC case, and for $\alpha = 1$ and $\beta\in(0, 1)$ becomes the CD model. 

Separating real and imaginary parts of $\widehat{\varepsilon^{\star}}_{H\!N}(\omega) = \widehat{\varepsilon'}_{H\!N}(\omega) - \I \widehat{\varepsilon''}_{H\!N}(\omega)$ we get after some complex algebra (see Ref. \cite{CJFBottcher78}): 
\begin{equation}\nonumber
\widehat{\varepsilon'}_{H\!N}(\omega) = \varepsilon_{\infty} + \frac{(\varepsilon - \varepsilon_{\infty})\cos\big[\beta\,\theta_{\alpha/2}\big((\omega t)^{2}\big)\big]}{\big[1 + 2(\omega\tau)^{\alpha}\cos(\pi\alpha/2) + (\omega\tau)^{2\alpha}\big]^{\beta/2}},
\end{equation}
and
\begin{equation}\nonumber
\widehat{\varepsilon''}_{H\!N}(\omega) = \frac{(\varepsilon - \varepsilon_{\infty})\sin\big[\beta\,\theta_{\alpha/2}\big((\omega t)^{2}\big)\big]}{\big[1 + 2(\omega\tau)^{\alpha}\cos(\pi\alpha/2) + (\omega\tau)^{2\alpha}\big]^{\beta/2}},
\end{equation}
where
\begin{equation}\label{3/06-4}
\theta_{\alpha}(y) = \arctan\left(\frac{\sin(\pi\alpha)}{y^{-\alpha} + \cos(\pi\alpha)}\right)
\end{equation}
provides restrictions on the parameters of the model \cite{FMainardi15}. 

\subsection{The response function $\phi_{H\!N}(\alpha, \beta; t)$}\label{sec5.2}

To find the response function $\phi_{H\!N}(\alpha, \beta; t)$ firstly we set $z = \I\!\omega$ in Eq. \eqref{3/06-2} and calculate the inverse Laplace transform of $\widehat{\phi}_{H\!N}(\alpha, \beta; z)$. Then, with the help of Eq. \eqref{15/06-4} we get
\begin{align}\label{16/06-21}
\begin{split}
\phi_{H\!N}(\alpha, \beta; t) &= \tau^{-\alpha\beta} \mathscr{L}^{-1}[(\tau^{-\alpha} + s^{\alpha})^{-\beta}; t]   \\
&= \frac{1}{\tau}\, \big(\ulamek{t}{\tau}\big)^{\alpha\beta - 1} E^{\beta}_{\alpha, \alpha\beta}\big(\!-\big(\ulamek{t}{\tau}\big)^{\alpha}\big),
\end{split}
\end{align}
where $E_{\alpha, \alpha\beta}^{\beta}(-(t/\tau)^{\alpha})$ is the three-parameter Mittag-Leffler function described in Sect. \ref{secML}. For $\alpha = \beta = 1$ it reduces to the Debye response function $\phi_{H\!N}(1, 1; t) \equiv \phi_{D}(t) = \exp(-t/\tau)/\tau$ while for $\alpha = 1$ and $\beta\in(0, 1)$ we get the CD response function  
\begin{align}\nonumber
\begin{split}
\phi_{H\!N}(1, \beta; t) \equiv \phi_{C\!D}(\beta; t) & = \tau^{-1} \big(\ulamek{t}{\tau}\big)^{\beta - 1} E_{1, \beta}^{\beta}\big(\!-\!\big(\ulamek{t}{\tau})\big) \\ &= \frac{(t/\tau)^{\beta - 1}}{\tau \Gamma(\beta)} \E^{-t/\tau},
\end{split}
\end{align}
and for $\alpha\in(0, 1)$ and $\beta = 1$ the CC response function 
\begin{equation}\nonumber
\phi_{H\!N}(\alpha, 1; t) \equiv \phi_{CC}(\alpha; t) = \tau^{-1}\big(\ulamek{t}{\tau}\big)^{\alpha - 1} E_{\alpha, \alpha}\big(\!-\!\big(\ulamek{t}{\tau}\big)^{\alpha}\big).
\end{equation}
For rational $\alpha = l/k$ the above formulae can be expressed in terms of special functions directly implemented in the CAS, like the Meijer G-function and/or the finite sum of the generalized hypergeometric functions ${_{p}F_{q}}$:
\begin{align}\label{17/06-21}
\phi_{H\!N}&(l/k, \beta; t) = (2\pi)^{\ulamek{1+l}{2} - k} \frac{\sqrt{l} k^{\beta}}{\Gamma(\beta)}\, \frac{1}{t}\\ & \times G^{k, k}_{l+k, k}\left(\frac{l^{l}\tau^{l}}{t^{l}}\Big\vert{\Delta(k, 1-\beta), \Delta(l, 0) \atop \Delta(k, 0)}\right)  \nonumber \\ 
& = \frac{1}{\tau} \sum_{j=0}^{k-1} \frac{(-1)^{j} (\beta)_{j}}{j!\, \Gamma(\frac{l}{k}(\beta+j))} \left(\frac{t}{\tau}\right)^{\!\!\frac{l}{k}(\beta + j) - 1}\label{17/06-20}  \\ \nonumber & \times {_{1+l}F_{l + k}}\left({1, \Delta(k, \beta + j) \atop \Delta(k, 1+j), \Delta(l, \frac{l}{k}(\beta+j))}; \frac{(-1)^{k} t^{l}}{l^{l} \tau^{l}}\right).
\end{align}
In Fig. \ref{rys1} we present $\phi_{H\!N}(\alpha, \beta; t)$ for $\alpha= 1/2$ and $\beta = 1/2, 1, 2$ as well as for $\beta=3$. 
\begin{figure}[!h]
\begin{center}
\includegraphics[scale=0.42]{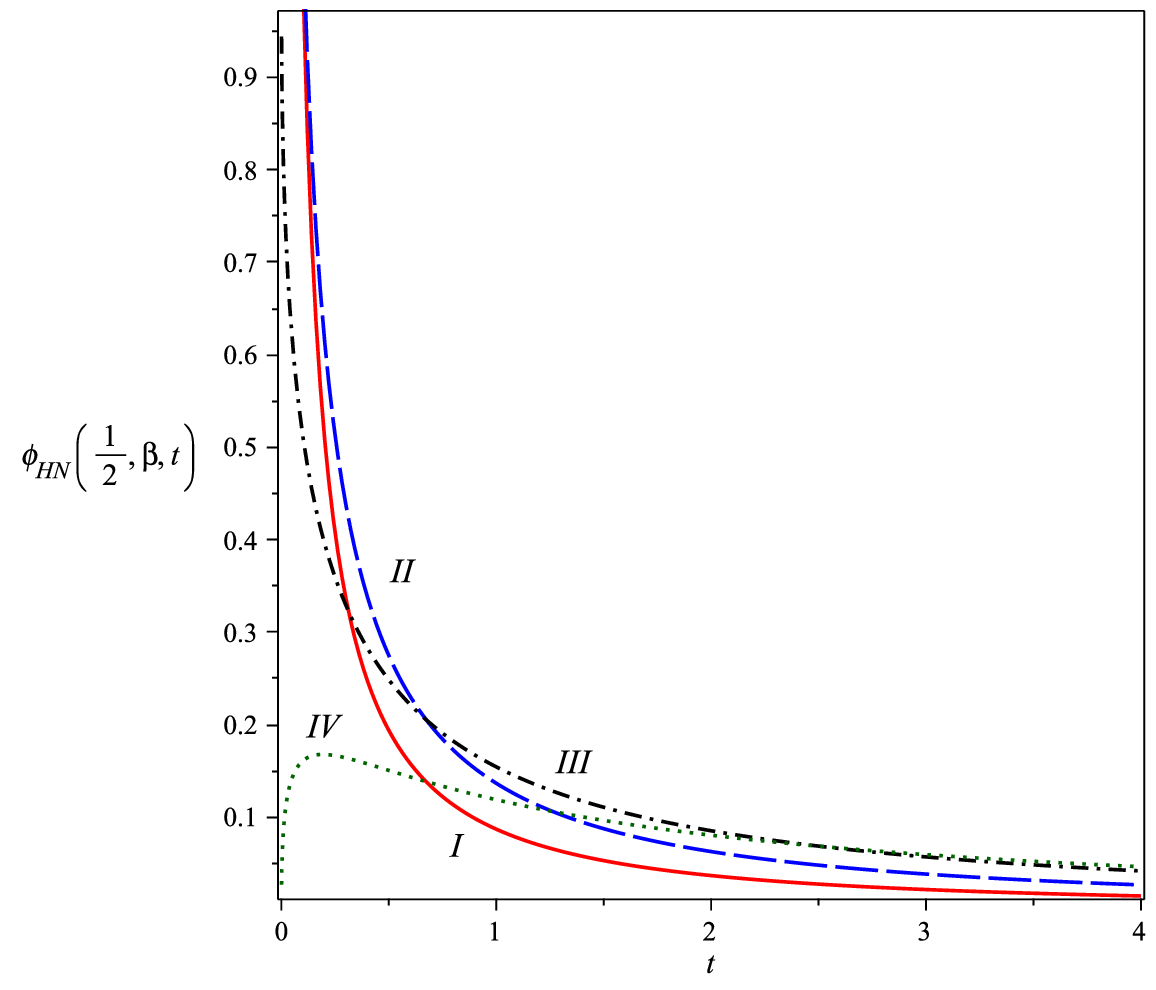}
\caption{\label{rys1}(Color online) Plot of $\phi_{H\!N}(1/2, \beta; t)$ given by Eq. \eqref{17/06-21} for $\tau = 1$ and $\beta = 1/2$ (the red continuous curve no. I), $\beta = 1$ (the blue dashed curve no. II), $\beta = 2$ (the black dash-dotted curve no. III), and $\beta = 3$ (the green dotted curve no. IV). } 
\end{center}
\end{figure}
It is seen that for $\beta = 3$, $\phi_{H\!N}(1/2, 3; t)$ is unimodal with the maximum at $t = t_{\rm max}$. Thus it can not be CM function. Looking at plots in Fig. \ref{rys1}, one may see the candidates to be a CM function are $\phi_{H\!N}(1/2, \beta; t)$ with $\beta \leq 2 = 1/\alpha$. This is confirmed by Lemma \ref{lem2} applied for Eq. \eqref{16/06-21} but the case III has to be excluded as experimental data restrict values of $\beta$ to the interval $(0, 1]$. 

For reader's convenience we quote the asymptotic behavior of response function $\phi_{H\!N}(\alpha, \beta; t)$ for a small and large time presented also in \cite[Eq. (3.29)]{RGarrappa16}: 
\begin{multline}\nonumber
\phi_{H\!N}(\alpha, \beta; t) \propto \frac{1}{\tau \Gamma(\alpha\beta)} \left(\frac{t}{\tau}\right)^{\alpha\beta-1} \quad \text{for} \quad \frac{t}{\tau} \ll 1 \quad \text{and} \\
\phi_{H\!N}(\alpha, \beta; t) \propto - \frac{\beta}{\tau \Gamma(-\alpha)} \left(\frac{t}{\tau}\right)^{-1-\alpha}\quad \text{for} \quad \frac{t}{\tau} \gg 1.
\end{multline}
The asymptotics at $t \to 0$ depends on the value of $\beta$ namely $\phi_{H\!N}(\alpha, \beta; t)$ tends to infinity for $\beta < 1/\alpha$ (note that in such a case the singularity is integrable), goes to $\tau^{-1}$ for $\beta = 1/\alpha$, and approaches 0 for $\beta > 1/\alpha$.
 
\subsection{The probability density $g_{H\!N}(\alpha, \beta; \xi)$}\label{sec5.3}

Thanks to Theorem 1 (Sect. \ref{subsec3.1}) it turns out that the spectral function has the integral representation as in Eq. \eqref{6/05}: 
\begin{equation}\label{18/06-5}
\widehat{\phi}_{H\!N}(\alpha, \beta; z) = \int_{0}^{\infty} \frac{p_{H\!N}(\alpha, \beta; \xi)}{\xi + z} \D\xi,
\end{equation}
where $p_{H\!N}(\alpha, \beta; \xi)$ is PDF. Restricting the complex $z$ onto $s\in\mathbb{R}_{+}$ we observe that $\widehat{\phi}_{H\!N}(\alpha, \beta; z)$ is a S function on the real axis and, thus, we confirm the information presented in Ref. \cite{KGorska21c}. From the Schwinger parametrization it follows that Eq. \eqref{18/06-5} is the Laplace transform of the Laplace transform, namely $\widehat{\phi}_{H\!N}(\alpha, \beta; z) = \mathscr{L}[\phi_{H\!N}(\alpha, \beta; t); z]$ where $\phi_{H\!N}(\alpha, \beta; t) = \mathscr{L}[p_{H\!N}(\alpha, \beta; \xi); t]$. Because $\phi_{H\!N}(\alpha, \beta; t)$ is a CM function for $\alpha \in (0, 1]$ and $\beta\le 1/\alpha$, then from the Bernstein theorem it emerges that $p_{H\!N}(\alpha, \beta; \xi)$ is a non-negative function for 
\begin{equation}\label{18/06-20}
0 < \alpha < 1 \quad \text{and} \quad 0 < \beta \leq 1/\alpha.
\end{equation}

Furthermore, from Eq. \eqref{18/06-11} and $\phi_{H\!N}(\alpha, \beta; t) = {\mathcal{L}} [p_{H\!N}(\alpha, \beta; \xi);t]$ we get 
\begin{align}\label{18/06-12}
\begin{split}
g_{H\!N}(\alpha, \beta; \xi) & = \frac{\tau}{\xi}\, p_{H\!N}(\alpha, \beta; \xi) \\
& = \frac{\tau}{\xi}\, \mathscr{L}^{-1}[\phi_{H\!N}(\alpha, \beta; t); \xi],
\end{split}
\end{align}
being also a non-negative function, since the product of the non-negative functions is also non-negative. That statement agrees with one of the results presented in \cite[Sect. 4]{KGorska18}. 

For the rational $\alpha$ such that $0 < \alpha = l/k < 1$, we substitute $\phi_{H\!N}(\alpha, \beta; t)$ in Eq. \eqref{17/06-21} and employ Eq. \eqref{2/07-11}. Hence, we conclude that $g_{H\!N}(l/k, \beta; \xi)$ can be written as 
\begin{align}\label{5/06-1}
\begin{split}
g_{H\!N}(&l/k, \beta; \xi) = (2\pi)^{\frac{1+l}{2}-k} \frac{\sqrt{l} k^{\beta}}{\Gamma(\beta)}\,\frac{1}{\xi}\\
& \times \mathscr{L}^{-1}\left[\frac{\tau}{t} G^{k, k}_{l+k, k}\left(\frac{l^{l} \tau^{l}}{t^{l}}\Big\vert{\Delta(k, 1-\beta), \Delta(l, 0) \atop \Delta(k, 0)}\right); \xi\right]\\
&= (2 \pi)^{l-k}\frac{k^{\beta}}{\Gamma(\beta)}\, \frac{1}{\xi}\, G^{k, k}_{l+k, l+k}\left(\xi^{l}\Big\vert{\Delta(k, 1-\beta), \Delta(l, 0) \atop \Delta(k, 0), \Delta(l, 0)}\right),
\end{split}
\end{align}
with the symbols $\Delta(n, a)$ defined below Eq. \eqref{15/06-1}. Next, using Eqs. \eqref{17/06-10} and \eqref{14/07-6} as well as the Gauss-Legendre multiplication formula for $\Gamma$ functions in Eq. \eqref{5/06-1}, we can express $g_{H\!N}(l/k, \beta; \xi)$ as a finite sum of $k$ generalized hypergeometric functions
\begin{align}\label{5/06-2}
\begin{split}
g_{H\!N}(l/k, \beta; \xi) &= \frac{1}{\pi} \sum_{j=0}^{k-1} \frac{(-1)^{j}}{j!} \frac{(\beta)_{j}}{\xi^{1+\frac{l}{k} n_j}} \sin(\ulamek{l}{k}n_j\pi)\\
& \times {_{k+1}F_{k}}\left({1, \Delta(k, n_{j}) \atop \Delta(k, 1+j)}; \frac{(-1)^{l-k}}{\xi^{l}}\right)
\end{split}
\end{align} 
with $n_j = \beta + j$. Equation \eqref{5/06-2} gives the form of $g_{H\!N}(l/k, \beta; \xi)$ which is convenient and efficiently applicable in calculations using the standard computer algebra packages. For example, for the CC relaxation ($\beta = 1$) it is seen that, due to appropriate cancellations,  ${_{k+1}F_{k}}$'s reduce to ${_{1}F_{0}}\left({1 \atop 0}, u\right)$, which can be further simplified to $(1 - u)^{-1}$ where $u = (-1)^{l-k}\xi^{-l/k}$. Employing it and Eqs. (1.353.1) and (1.353.3) on p. 38 of \cite{Gradshteyn07} to the sum over $j$, we get 
\begin{align}\label{16/06-1}
\begin{split}
g_{H\!N}(\alpha, 1; \xi) & \equiv g_{CC}(\alpha; \xi) \\ &= \frac{\xi^{\alpha -1} \sin(\alpha\pi)}{\pi (\xi^{2\alpha} + 2 \xi^{\alpha} \cos(\alpha\pi) + 1)},
\end{split}
\end{align}
with $0 <\alpha = l/k < 1$. We point out that Eq. \eqref{16/06-1} was obtained in \cite{ECapelasDeOliveira14,BDybiec10,RGorenflo08,KWeron96} using different methods, see Eq. (3.24) on p. 245 in \cite{RGorenflo08}, Eqs. (22) and (39) in \cite{ECapelasDeOliveira14} or Eq. (26) in \cite{BDybiec10}. Distributions  $g_{CC}(\alpha; \xi)$ are non-negative functions for $\alpha\in(0, 1)$ and they share the following properties: (i) $g_{CC}(\alpha; \xi)$ are non-negative and non-increasing for $0 < \alpha < 1$ and $\xi \geq 0$; and (ii) $g_{CC}(\alpha; \xi)$ go to infinity at $\xi = 0$, and vanish for $\xi\to\infty$. For the CD relaxation ($\alpha = 1$) in Eq. \eqref{5/06-2} we take $l = k = 1$ which leads to \cite[Eq. (3.19)]{RGarrappa16}:
\begin{equation}\nonumber
g_{H\!N}(1, \beta; \xi) \equiv g_{C\!D}(\beta; \xi) = \frac{\sin(\beta\pi)}{\pi\xi (\xi-1)^{\beta}}\, \Theta(\xi-1),
\end{equation}
with the Heaviside step function $\Theta(\cdot)$ which guarantees that $g_{C\!D}(\beta; \xi)$ is real for $\xi \leq 1$.

Our last task in this subsection is to find the series representation of $g_{H\!N}(\alpha, \beta; \xi)$ for $\alpha = l/k$. To achieve this goal, we use the series representation of the generalized hypergeometric functions ${_{p}F_{q}}$, see Eq. \eqref{14/03-A5}. Following such a way we get Eq. \eqref{5/06-2} as the double sum: one over $r$ ($r=0, 1, 2, \ldots$) which comes from the series representation of ${_{p}F_{q}}$, and the second one over $j$ ($j = 0, 1, \ldots, k-1$) which appears in the Eq. \eqref{5/06-2} itself. Changing the summation index $k r + j \to n$ we arrive at the expression 
\begin{equation}\label{6/06-4}
g_{H\!N}(\alpha, \beta; \xi) = \frac{1}{\pi}\, \sum_{n=0}^{\infty} \frac{(-1)^{n}}{n!} \frac{ (\beta)_{n} }{\xi^{1+\alpha(\beta+n)}} \sin[\alpha(\beta+n) \pi],
\end{equation}
which, after representing the sine function as an imaginary part of $\exp[\I\!\pi \alpha (\beta+n)]$, using the integral representation of the $\Gamma$ function, and applying Eq. \eqref{15/06-1a}, leads us to the integral form of Eq. \eqref{6/06-4}:
\begin{align}\label{6/06-5}
\begin{split}
g_{H\!N}(\alpha, \beta; \xi) &= \frac{1}{\pi} \IM\left\{\E^{\I\!\alpha\beta\pi} \int_{0}^{\infty} e^{-u\xi} u^{\alpha\beta} \right. \\ & \times \left.E_{\alpha, \alpha\beta}^{\beta}(- u^{\alpha} \E^{\I\!\alpha\pi}) \D u\right\}. 
\end{split}
\end{align}

Applying Eq. \eqref{15/06-4} to Eq. \eqref{6/06-5} and employing de Moivre's formula to calculate the imaginary part, we rederive the function $g_{H\!N}(\alpha, \beta; \xi)$ in the form obtained and extensively studied in the just quoted Ref. \cite{FMainardi15}. Namely, for $0 < \alpha < 1$ and $0 < \beta \leq 1/\alpha$, we get two solutions
\begin{equation}\label{14/06-2}
g_{H\!N}(\alpha, \beta; \xi) = \pm\frac{1}{\pi \xi} \frac{\sin\left[\beta\,\theta_{\alpha}(\xi) \right]}{[\xi^{2\alpha} + 2\xi^{\alpha}\cos(\pi\alpha) + 1]^{\beta/2}}
\end{equation}
where $\theta_{\alpha}(y)$ is defined in Eq. \eqref{3/06-4} and the sign in Eq. \eqref{14/06-2} depends on the choice of the branch of the arctan function in  Eq. \eqref{3/06-4} having the essential singularity for $\xi = [\cos(\pi\alpha+\pi)]^{-1/\alpha}$. Equation \eqref{14/06-2} for $\beta = 1$ is identically equal to Eq. \eqref{16/06-1}. In addition, Eq. \eqref{18/06-20} implies the non-negativity of Eq. \eqref{14/06-2}. The denominator in Eq. \eqref{14/06-2} is always positive so $g_{H\!N}(\alpha, \beta; \xi)$ remains non-negative if $\beta\theta_{\alpha}(1)\in[0, \pi]$ mod $2n\pi$. Since Eq. \eqref{3/06-4} gives $\theta_{\alpha}(1)\in [0, \pi\alpha]$, this leads to $\beta \leq 1/\alpha$. This fully agrees with the observations made in {\cite[Eq. (3.47)]{RGarrappa16}} and \cite[Eq. (8)]{FMainardi15}. However, $0 < \alpha, \beta < 1$ so that  $\beta \leq 1/\alpha$ is always satisfied for these range of parameters $\alpha$ and $\beta$.

Fig. \ref{rys5}a illustrates $g_{H\!N}(\alpha, \beta; \xi)$ as the function of $\xi$ for fixed $\alpha$ and different values of $\beta$. The opposite situation is presented in Fig. \ref{rys5}b where $\beta$ is fixed while $\alpha$ is changing. To get both plots we can use either Eq. \eqref{14/06-2} which contains the more friendly elementary functions, or the Meijer G-representation of $g_{H\!N}(\alpha, \beta; \xi)$ given by Eq. \eqref{5/06-1}. In the first case we should separate the range of $\xi$ into two sectors: $\xi\in[0, \xi_{p})$ and $\xi\in[\xi_{p}, \infty)$. At the point $\xi_{p}$ $g_{H\!N}(\alpha, \beta; \xi)$ changes the sign such that we take $g_{H\!N}(\alpha, \beta; \xi)$ with a plus sign in the range $\xi\in[0, \xi_{p})$ and $g_{H\!N}(\alpha, \beta; \xi)$ with a minus sign for $\xi\in[\xi_{p}, \infty)$. In the Meijer G-case, the required separation is done automatically via the use of CAS.
\begin{figure}[!h]
\includegraphics[scale=0.4]{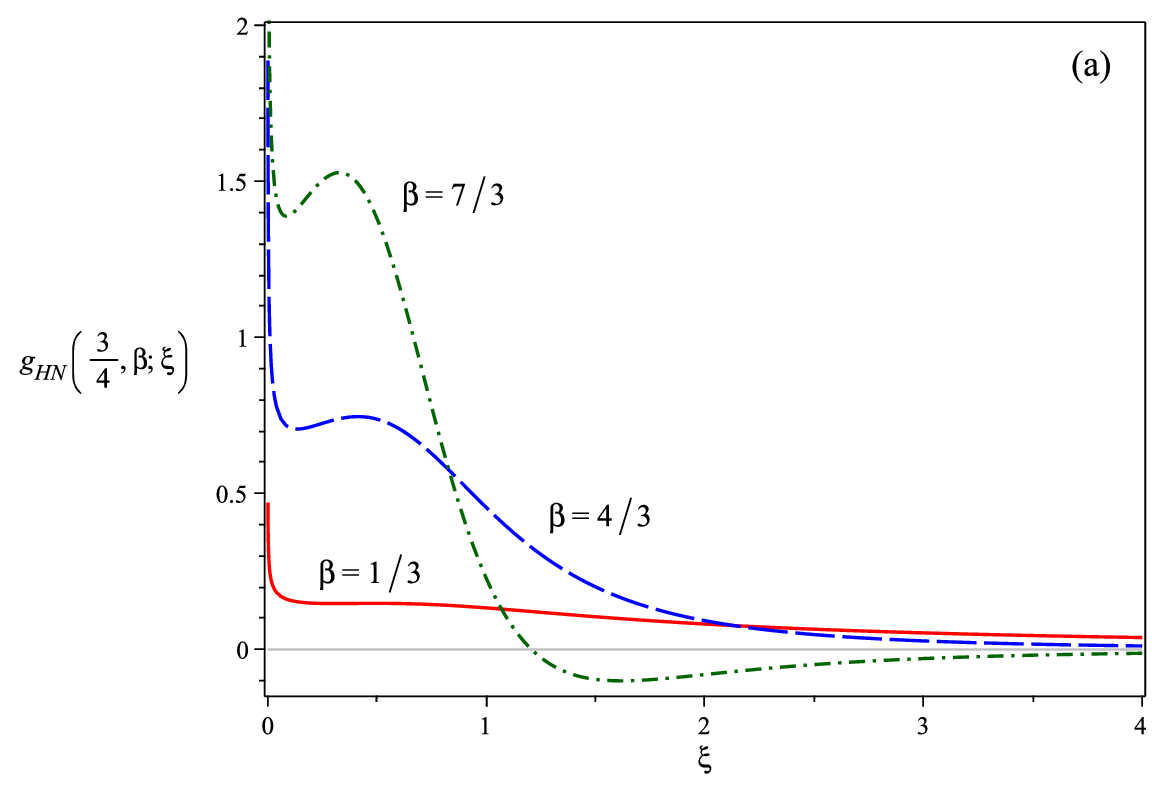}
\includegraphics[scale=0.4]{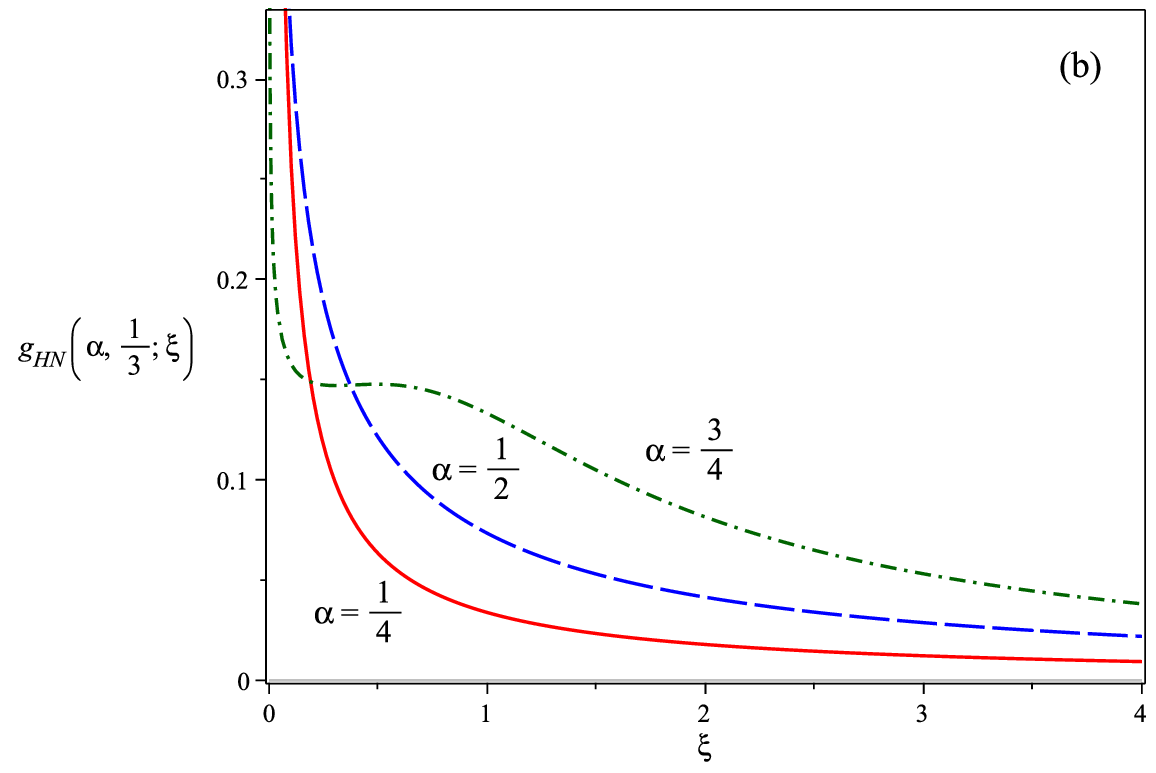}
\caption{\label{rys5}(Color online) Plot of $g_{H\!N}(\alpha, \beta; \xi)$  as a function of $\xi$ given by Eq. \eqref{5/06-1} for $\alpha, \beta = {\rm const}$. In Fig. \ref{rys5}a we take $\alpha = 3/4$ and $\beta = 1/3$ (the red continuous curve), $\beta = 4/3$ (the blue dashed curve), and $\beta = 7/3$ (the green dashed-dotted curve). It should be observed that $g_{H\!N}(\alpha, \beta; \xi)$ contains a negative part for $\beta > 1/\alpha$.  In Fig. \ref{rys5}b we have $\beta = 1/3$ and $\alpha = 1/4$ (the red continuous curve), $\alpha = 1/2$ (the blue dashed curve), and $\beta = 3/4$ (the green dashed-dotted curve).} 
\end{figure}

\subsection{The relaxation function $n_{H\!N}(\alpha, \beta; t)$}\label{sec5.4}

The HN relaxation function may be obtained in at least in two ways: (i) by inserting $g_{H\!N}(\alpha, \beta; t)$ into Eq. \eqref{18/06-10} or (ii) by calculating the inverse Laplace transform of $\widehat{n}_{H\!N}(\alpha, \beta; z)$ where $\widehat{n}_{H\!N}(\alpha, \beta; z)$ is derived from Eq. \eqref{23/05_1} adjusted for the HN model. Both these ways lead to the same results, so we choose (ii) as more convenient for us. Here, $n_{H\!N}(\alpha, \beta; t)$ reads
\begin{align}\label{19/06-1}
\begin{split}
n_{H\!N}(\alpha, \beta; t) & = \mathscr{L}^{-1}[\widehat{n}_{H\!N}(\alpha, \beta; z); t] = 
\\
& =\mathscr{L}^{-1}[z^{-1}; t] - \tau^{-\alpha\beta}\mathscr{L}^{-1}\left[\frac{z^{-1}}{(\tau^{-\alpha} + z^{\alpha})^{\beta}}; t\right] \\
& = 1 - \big(\ulamek{t}{\tau}\big)^{\alpha\beta} E_{\alpha, 1 + \alpha\beta}^{\beta}\big(\!-(\ulamek{t}{\tau})^{\alpha}\big),
\end{split}
\end{align}
where $z = \I\!\omega$ and we used the Laplace transform given by Eq. \eqref{15/06-4}. To express $n_{H\!N}(\alpha, \beta; t)$ in the language of the Meijer G-function we use Eq. \eqref{17/06-3}, which for $\alpha = l/k$ allows us to write
\begin{align}\label{27/06-1a}
\begin{split}
n_{H\!N}(\alpha, \beta; t) & = 1 - (2\pi)^{\frac{1+l}{2} - k}\,\frac{k^{\beta}l^{-1/2}}{\Gamma(\beta)} \\ & \times G_{l+k, k}^{k, k}\left(\frac{l^{l} \tau^{l}}{t^{l}}\Big\vert {\Delta(k, 1-\beta), \Delta(l, 1) \atop \Delta(k, 0)}\right).
\end{split}
\end{align}
The relaxation function $n_{H\!N}(\alpha, \beta; t)$ given by Eq. \eqref{27/06-1a} is plotted in Fig. \ref{rys2} for $\alpha = 1/2$ and $\beta = 1/4, 1/2$, and $3/4$.
\begin{figure}[!h]
\begin{center}
\includegraphics[scale=0.42]{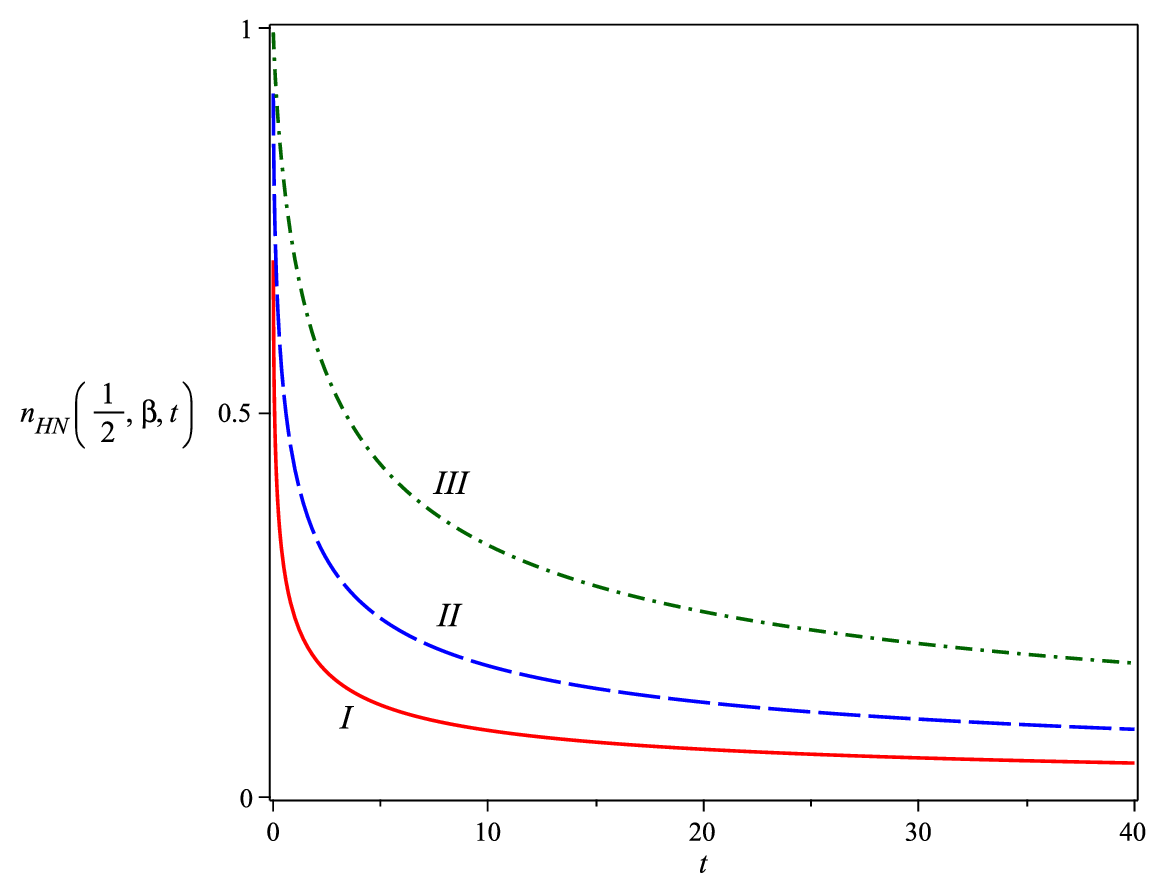}
\caption{\label{rys2}(Color online) Plot of $n_{H\!N}(1/2, \beta; t)$ as the function of $t$ for $\tau = 1$ and $\beta = 1/2$ (the red continuous curve no. I), $\beta = 1$ (the blue dashed curve no. II), and $\beta = 2$ (the green dash-dotted curve no. III). } 
\end{center}
\end{figure}
Using appropriate representation of the Mittag-Leffler function given by Eq. \eqref{17/06-3} we can express $n_{H\!N}(\alpha, \beta; t)$ in terms of these functions. So,  for $\alpha\in(0, 1)$ and $\beta = 1$, Eq. \eqref{19/06-1} leads to 
\begin{align}\label{3/08-2}
\begin{split}
n_{H\!N}(\alpha, 1; t) \equiv n_{CC}(\alpha; t) & = 1 - E^{1}_{\alpha, 1 +\alpha}\big(\!-\!(\ulamek{t}{\tau})^{\alpha}\big) \\ & = E_{\alpha}\big(\!-\!(\ulamek{t}{\tau})^{\alpha}\big).
\end{split}
\end{align} 
For $\alpha = 1$ and $\beta\in(0, 1)$ it gives 
\begin{align}\label{3/08-3}
\begin{split}
n_{H\!N}(1, \beta; t) \equiv n_{CD}(\beta; t) & = 1 - E^{\beta}_{1, 1 + \beta}\big(\!-\!\ulamek{t}{\tau}\big) \\ &= \Gamma\big(\beta,\ulamek{t}{\tau}\big)/\Gamma(\beta),
\end{split}
\end{align}
where $\Gamma(a, y) = \int_{y}^{\infty} u^{\alpha-1} \E^{-u} \D u$ is the upper incomplete gamma function \cite{NIST}.

Applying Lemma \ref{lem1} to Eq. \eqref{19/06-1} we find the asymptotics of $n_{H\!N}(\alpha, \beta; t)$ which is proportional to 
\begin{equation}\nonumber
n_{H\!N}(\alpha, \beta; t) \propto 1 - \frac{1}{\Gamma(1 + \alpha\beta)} \left(\frac{t}{\tau}\right)^{\alpha\beta} \quad \text{for} \quad \frac{t}{\tau} \ll 1 
\end{equation}
and
\begin{equation}\nonumber
n_{H\!N}(\alpha, \beta; t) \propto \frac{\beta}{\Gamma(1-\alpha)} \left(\frac{t}{\tau}\right)^{-\alpha} \quad \text{for} \quad \frac{t}{\tau} \gg 1,
\end{equation}
as is given in \cite{RGarrappa16, KGorska21c,RHilfer02a}.

\subsection{Evolution equation for  $n_{H\!N}(\alpha, \beta; t)$}\label{5.5}

According to the general formalism developed in Subsect. \ref{sub3.1}, the relaxation function $n_{H\!N}(\alpha, \beta; t)$ satisfies the integro-differential equations \eqref{14/10-4} and \eqref{13/09-1a}. They contain the memory functions $M_{H\!N}(\alpha, \beta; t)$ and $k_{H\!N}(\alpha, \beta; t)$ connected by the Sonine equation \eqref{14/10-5}. The memory functions $M_{H\!N}$ and $k_{H\!N}$ in the Laplace domain can be determined algebraically using Eqs. \eqref{21/09-1a} and \eqref{28/09-1}: 
\begin{multline}\label{22/06-1}
\widehat{M}_{H\!N}(\alpha, \beta; z) = {\cal B}^{-1}\{[1 + (z\tau)^{\alpha}]^{\beta} - 1\}^{-1} \quad \text{and} \\ \widehat{k}_{H\!N}(\alpha, \beta; z) = \frac{{\cal B}}{z} \{[1 + (z\tau)^{\alpha}]^{\beta} - 1\}.
\end{multline}
The Laplace inversion with respect to time $t$ domain gives
\begin{equation}\nonumber
M_{H\!N}(\alpha, \beta; t) = ({\cal B}t)^{-1} \sum_{r\geq 0} \big(\ulamek{t}{\tau}\big)^{\alpha\beta(1+r)} E_{\alpha, \alpha\beta(1+r)}^{\beta(1+r)}\big[\!-\big(\ulamek{t}{\tau}\big)^{\alpha}\big]
\end{equation}
and
\begin{equation}\nonumber
k_{H\!N}(\alpha, \beta; t) = {\cal B}\big(\ulamek{\tau}{t}\big)^{\alpha\beta}E_{\alpha, 1-\alpha\beta}^{-\beta}\big[\!-\big(\ulamek{t}{\tau}\big)^{\alpha}\big] - {\cal B},
\end{equation}
and for details see Refs. \cite{KGorska21CNSNCa, KGorska21a, AAKhamzin14}. As is shown in Subsect. \ref{sub3.1} the solutions of Eqs. \eqref{14/10-4} and/or \eqref{13/09-1a} are equivalent. Thus, we can choose the equation more convenient for us. Equation \eqref{13/09-1a} contains less complicated memory kernel $k_{H\!N}(\alpha, \beta; t)$ and immediately leads to the evolution equation proposed either in \cite[Eq. (3.40)]{RGarrappa16} or in \cite[Eq. (33)]{KGorska21CNSNCa}:
\begin{equation}\label{22/06-4}
{^{C\!}(D^{\alpha} + \tau^{-\alpha})^{\beta}} n_{H\!N}(\alpha, \beta; t) = - \tau^{-\alpha\beta}.
\end{equation}
The pseudo-operator on the LHS above belongs to the class of Prabhakar-like integral operators, and for the case under consideration is defined as \cite[Eq. (B.23)]{RGarrappa16}
\begin{multline}\label{22/06-5}
{^{C\!}(D^{\alpha} + \tau^{-\alpha})^{\beta}} n_{H\!N}(\alpha, \beta; t) \\ = \int_{0}^{t} (t - \xi)^{-\alpha\beta} E_{\alpha, 1-\alpha\beta}^{-\beta}\big[\!-\!\big(\ulamek{t-\xi}{\tau}\big)^{\alpha}\big] \dot{n}_{H\!N}(\alpha, \beta; \xi) \D\xi,
\end{multline}
where we keep the notation of \cite[Appendix B]{RGarrappa16} and denote $\dot{n}_{H\!N}(\alpha, \beta; t) = \D {n}_{H\!N}(\alpha, \beta; t)/\!\D t$. It has to be pointed out that the pseudo--operator \eqref{22/06-5} must be distinguished from the pseudo--operator ${(^{\rm c\!}D^{\alpha} + \tau^{-\alpha})^{\beta}}$ considered in \cite{RGarrappa16} as an alternative to the LHS of Eq. \eqref{22/06-5}, with the Caputo derivative ${^{\rm c\!}D}^{\alpha}$ (see \cite{IPodlubny99}) sitting in. The difference is evidently seen if we take $\beta=1$, which simplifies HN relaxation to the CC model. Then, it can be concluded that ${^{C\!}(D^{\alpha} + \tau^{-\alpha})}n_{CC}(\alpha; t) = ({^{\rm c\!}D}^{\alpha} + \tau^{-\alpha})n_{CC}(\alpha; t) - \tau^{-\alpha}$, where $n_{CC}(\alpha; t) = n_{H\!N}(\alpha, 1; t)$.

\subsection{Two variants of the subordination approach to  $n_{H\!N}(\alpha, \beta; t)$}\label{sec5.6}

Let us start with the \underline{first} type of subordination, namely Eq. \eqref{30/07-1}, according to which we have
\begin{equation}\label{27/07-2}
n_{H\!N}(\alpha, \beta; t) = \int_{0}^{\infty} n_{D}(\xi) f_{H\!N}(\alpha, \beta; \xi, t) \D\xi,
\end{equation}
where
\begin{multline}\label{25/06-1}
f_{H\!N}(\alpha, \beta; \xi, t) = {\cal B} \E^{{\cal B}\xi}\mathscr{L}^{-1}\Big\{\Big[\frac{\tau^{\alpha\beta}}{z}(\tau^{-\alpha} + z^{\alpha})^{\beta} - \frac{1}{z}\Big] \\ \times \E^{-\xi {\cal B} \tau^{\alpha\beta}(\tau^{-\alpha} + z^{\alpha})^{\beta}}; t\Big\}.
\end{multline}
represents the probability density distribution of (random) internal time $\xi$ with respect to the laboratory time $t$. It is found by using Eq. \eqref{23/02/23-1}. 
Inserting of Eq. \eqref{25/06-1} into \eqref{27/07-2} results in
\begin{multline}\label{8/04-3S}
n_{H\!N}(\alpha, \beta; t) = {\cal B} \int_{0}^{\infty}\! \mathscr{L}^{-1}\Big\{\Big[\frac{\tau^{\alpha\beta}}{z}(\tau^{-\alpha} + z^{\alpha})^{\beta} - \frac{1}{z}\Big] \\ \times \E^{-\xi {\cal B} \tau^{\alpha\beta}(\tau^{-\alpha} + z^{\alpha})^{\beta}}; t\Big\} \D\xi,
\end{multline}
where the term $\exp({\cal B}\xi)$ in Eq. \eqref{25/06-1} canceled $n_{D}(\xi) = \exp(-{\cal B}\xi)$. 

The inverse Laplace transform $\mathscr{L}^{-1}[-, \cdot]$ in the integrand of \eqref{8/04-3S} can be calculated by employing the Efros theorem give the Eq. \eqref{30/03-4} of it with $\widehat{G}_{2}(z) = z^{-1}$ and $\widehat{q}_{2}(z) = \tau^{-\alpha} + z^{\alpha}$. That gives 
\begin{align}\label{8/04-4S}
\mathscr{L}^{-1}\Big\{\frac{1}{z}\Big[\tau^{\alpha\beta}&(\tau^{-\alpha} + z^{\alpha})^{\beta} - 1\Big] \E^{-\xi \mathcal{B} \tau^{\alpha\beta}(\tau^{-\alpha} + z^{\alpha})^{\beta}}; t\Big\} \D\xi \nonumber \\
& = \int_{0}^{\infty} \mathscr{L}^{-1}[(\tau^{\alpha\beta} z^{\beta} - 1) \E^{-\xi \mathcal{B} \tau^{\alpha\beta} z^{\beta}}; u] \nonumber\\
& \times \mathscr{L}^{-1}[z^{-1} \E^{-u (z^{\alpha} + \tau^{-\alpha})}; t] \D u,
\end{align}
where $z\div t$ and $z\div u$ constitute the Laplace pairs for the first and second inverse Laplace transforms,  respectively. The only term which depends on $\xi$ is equal to $\exp(-\xi \mathcal{B} \tau^{\alpha\beta} z^{\beta})$. Thus, inserting Eq. \eqref{8/04-4S} into Eq. \eqref{8/04-3S} and changing the order of integration we get
\begin{align}\label{8/04-5S}
n_{H\!N}(\alpha, \beta; \tau, t) & = \mathcal{B} \int_{0}^{\infty}\! \mathscr{L}^{-1}\Big[(\tau^{\alpha\beta}z^{\beta} - 1) \int_{0}^{\infty}\! \E^{-\xi \mathcal{B} \tau^{\alpha\beta} z^{\beta}}\D\xi; u\Big] \nonumber \\ &\times \mathscr{L}^{-1}[z^{-1}\E^{-u (z^{\alpha} + \tau^{-\alpha})}; t] \D u \nonumber\\
& = \int_{0}^{\infty}\!  \E^{-u \tau^{-\alpha}} \mathscr{L}^{-1}\Big[1 - \frac{1}{\tau^{\alpha\beta} z^{\beta}}; u\Big] \nonumber \\ & \times \mathscr{L}^{-1}[z^{-1}\E^{-u z^{\alpha}}; t] \D u.
\end{align}
Because $\exp(-u \tau^{-\alpha})$ does not depend on $z$ we can extract this term from the second inverse Laplace transform in Eq. \eqref{8/04-5S} and, next, merge it to the first inverse Laplace transform in Eq. \eqref{8/04-5S}. Using the property $\E^{-a t}\mathscr{L}^{-1}[\widehat{g}(z); t] = \mathscr{L}^{-1}[\widehat{g}(z+a); t]$ we get 
\begin{align*}
\E^{-u \tau^{-\alpha}} \!\mathscr{L}^{-1}\Big[1 - \frac{1}{\tau^{\alpha\beta} z^{\beta}}; u\Big] = \mathscr{L}^{-1}\Big[1 - \frac{1}{(1 + \tau^{\alpha}z)^{\beta}}; u\Big] \\ = \delta(u) - \phi_{CD}(\beta; u). 
\end{align*}
The real function $\phi_{C\!D}(\beta; u) \equiv \phi_{H\!N}(1, \beta; u)$ is the response function for the CD model which determines $\phi_{C\!D}(\beta; u) = \mathscr{L}^{-1}[(1 + T s)^{-\beta}; u]$ with $T = \tau^{\alpha}$. Having this in mind we present Eq. \eqref{8/04-5S} as
\begin{align}\label{9/04-1}
n_{H\!N}(&\alpha, \beta; t) = \mathscr{L}^{-1}[z^{-1} \int_{0}^{\infty} \delta(u) \E^{-u z^{\alpha}} \D u; t]  \nonumber \\ & - \int_{0}^{\infty} \phi_{C\!D}(\beta; u) \mathscr{L}^{-1}[z^{-1}\E^{-u z^{\alpha}}; t] \D u \nonumber\\
& = 1 -  \int_{0}^{\infty}\!\! \phi_{C\!D}(\beta; u) \mathscr{L}^{-1}[z^{-1}\E^{-u z^{\alpha}}; t] \D u\nonumber\\
&=1+ \int_{0}^{\infty}\!\! \dot{n}_{C\!D}(\beta; u) \mathscr{L}^{-1}[z^{-1}\E^{-u z^{\alpha}}; t] \D u 
\end{align}
where we used taken with minus sign the standard relation $\phi_{C\!D}(\beta; u) = - \dot{n}_{C\!D}(\beta; u)$ between the response function and the time derivative of relaxation function. Then, we employ the Leibniz formula which allowed us to shift the derivative over $u$ from the relaxation function $n_{C\!D}(\beta; u)$ onto $\mathscr{L}^{-1}[z^{-1}\exp(-u z^{\alpha}); t]$. Thus Eq. \eqref{9/04-1} becomes
\begin{align}\label{10/04-1}
n_{H\!N}(\alpha, \beta; t) & = -\int_{0}^{\infty}\!\! n_{C\!D}(\beta; u) \mathscr{L}^{-1}\Big[z^{-1}\frac{\D}{\D u}\E^{-u z^{\alpha}}; t\Big] \D u \nonumber\\
&= \int_{0}^{\infty}\!\! n_{C\!D}(\beta; u) f(\alpha; u, t) \D u,
\end{align}
where $f(\alpha; u, t)$ can be expressed in terms of the one-sided L\'{e}vy stable distribution $\varPhi_{\alpha}(\sigma)$ with $\alpha\in(0, 1)$ and $\sigma\in\mathbb{R}_{+}$ \cite{KAPenson10}
\begin{align}\label{25/06-1a}
\begin{split}
f(\alpha; u, t) & = \mathscr{L}^{-1}[z^{\alpha - 1} \E^{-u z^{\alpha}}; t] \\ & = \frac{t}{\alpha u^{1 + 1/\alpha}} \varPhi_{\alpha}(t u^{-1/\alpha}).
\end{split}
\end{align}
That means that we arrived at the \underline{second} type of subordination, in which the CD relaxation appears, which plays the role of the parent process and is subordinated by the L\'evy-like process described by the PDF $f(\alpha; u, t)$. This result explicitly confirms our previous claim \cite{KGorska21} that the subordination description of the same process may be realized in different ways which eventually may be understood as an effect of ``nested'' processes.

\section{The Jurlewicz-Weron-Stanislavski model and its physical content}\label{sec6}

The next illustration which shows of applicability of the so far developed universal scheme  presented in Secs. \ref{SEC1}-\ref{SEC3} is the Jurlewicz-Weron-Stanislavsky (JWS) relaxation model. The JWS relaxation model  (for its brief description see \cite{RGarrappa16,AStanislavsky17}) complements and modifies the HN model leaving unchanged its general structure. The model was introduced in \cite{AJurlewicz10,AStanislavsky10,KWeron10} to explain discrepancies emerging in some experiments between the results coming out from the Jonscher URL and those described by the HN pattern, see \cite[Fig. 1]{AStanislavsky17}. (According to statistical enumeration measuring applicability of different models, the JWS pattern fits and reproduces the data for approximately 20$\%$ of relaxation experiments \cite{Jonscher83, KStanislavski16}.) Therefore it is legitimate to analyse this model applying theoretical framework and methods developed in the current work.
 
\subsection{The spectral function $\widehat{\phi}_{J\!W\!S}(\alpha, \beta; \I\!\omega)$}\label{sec6.1}

We shall begin by recalling Eq. \eqref{23/05_2}:
\begin{align}\label{25/06-10}
\begin{split}
\widehat{\phi}_{J\!W\!S}(\alpha, \beta; \I\!\omega) &= \left[\frac{\widehat{\varepsilon^{\star}}(\omega) - \varepsilon_{\infty}}{\varepsilon - \varepsilon_{\infty}}\right]_{J\!W\!S} \\ & = 1 - (\I\!\omega\tau)^{\alpha\beta} \widehat{\phi}_{H\!N}(\alpha, \beta; \I\!\omega) \\
& = 1 - \widehat{\phi}_{H\!N}(-\alpha, \beta; \I\!\omega)
\end{split}
\end{align}
for $\alpha, \beta \in (0, 1]$. For $\alpha = \beta = 1$ it approaches the Debye model whereas for $\alpha\in(0, 1)$ and $\beta = 1$  tends to the CC pattern. The relaxation model with $\alpha = 1$ and $\beta\in(0, 1)$ is called the mirror CD relaxation (MCD) \cite[Fig. 4]{AStanislavsky17}. Note that $\widehat{\phi}_{J\!W\!S}(\alpha, \beta; \I\!\omega)$ tends to $1$ for $\omega\to 0$ and it vanishes in the limit $\omega\rightarrow\infty$. The real and imaginary parts of $\widehat{\varepsilon^{\star}}_{J\!W\!S}(\omega)$ read
\begin{equation}\nonumber
\widehat{\varepsilon'}_{J\!W\!S}(\alpha, \beta; \omega) = \varepsilon - \frac{(\varepsilon - \varepsilon_{\infty}) (\omega\tau)^{\alpha\beta} \cos\big[\beta\,\theta_{\alpha/2}\big((\omega t)^{-2}\big)\big]}{\big[1 + 2(\omega\tau)^{\alpha}\cos(\pi\alpha/2) + (\omega\tau)^{2\alpha}\big]^{\beta/2}}
\end{equation}
and
\begin{equation}\nonumber
\widehat{\varepsilon''}_{J\!W\!S}(\omega) = \frac{(\varepsilon - \varepsilon_{\infty})(\omega\tau)^{\alpha\beta} \sin\big[\beta\,\theta_{\alpha/2}\big((\omega t)^{-2}\big)\big]}{\big[1 + 2(\omega\tau)^{\alpha}\cos(\pi\alpha/2) + (\omega\tau)^{2\alpha}\big]^{\beta/2}},
\end{equation}
respectively, with $\theta_{\alpha}(y)$ introduced in Eq. \eqref{3/06-4}.

\subsection{The response function $\phi_{J\!W\!S}(\alpha, \beta; t)$}\label{sec6.2}

The response function $\phi_{J\!W\!S}(\alpha, \beta; t)$ is the inverse Laplace transform of $\widehat{\phi}_{J\!W\!S}(\alpha, \beta; z)$,
\begin{align}\label{26/06-2}
\phi_{J\!W\!S}(\alpha, \beta; t) & = \mathscr{L}^{-1}[1; t] - \mathscr{L}^{-1}\Big[\frac{z^{\alpha\beta}}{(\tau^{-\alpha} + z^{\alpha})^{\beta}}; t\Big] \nonumber\\
& 
= \delta(t) - \tau^{-1} \big(\ulamek{t}{\tau}\big)^{-1} E_{\alpha, 0}^{\beta}\big(-\big(\ulamek{t}{\tau}\big)^{\alpha}\big),
\end{align}
where we employ the Laplace transform of the Mittag-Leffler function given by Eq. \eqref{15/06-4}. Notice that the $\delta$-Dirac distribution corresponds to the integrable singularity at $t = 0$. We remark that Eq. \eqref{26/06-2} is also presented in \cite[Eq. (3.43)]{RGarrappa16}. For $\alpha\in(0, 1)$ and $\beta = 1$ we obtain the CC model, see Remark \ref{r3-20/02-23}, while the MCD pattern appears for $\alpha = 1$ and $\beta\in(0, 1)$. From Lemma \ref{lem2} we can formally rewrite the JWS response function in terms of the HN response function, namely
\begin{equation}\nonumber
\phi_{J\!W\!S}(\alpha, \beta; t) = \delta(t) - \phi_{H\!N}(-\alpha, \beta; t).
\end{equation}
To rephrase it in terms of functions implemented in the CAS we use Eq. \eqref{26/06-2} and the Mittag-Leffler representation of the response function $\phi_{J\!W\!S}(\alpha, \beta; t)$. This enables one to express $\phi_{J\!W\!S}(\alpha, \beta; t)$ for rational $\alpha = l/k$ in terms of Meijer G-function and the finite sums of generalized hypergeometric functions.   With the help of Eqs. \eqref{17/06-3} and \eqref{16/06-21} we have 
\begin{align}\label{30/06-1} 
\phi_{J\!W\!S}(&l/k, \beta; t) = \delta(t) - (2\pi)^{\frac{1+l}{2} - k}\, \frac{k^{\beta} l^{1/2}}{\tau \Gamma(\beta)} \left(\frac{t}{\tau}\right)^{\!-1} \nonumber \\ & \times G^{k, k}_{k+l, k}\left(\frac{l^{l} \tau^{l}}{t^{l}}\Big\vert {\Delta(k, 1), \Delta(l, 0) \atop \Delta(k, \beta)}\right) \\ \label{31/02/23-10} 
& = \delta(t) - \frac{1}{\tau} \sum_{j=0}^{k-1} \frac{(-1)^{j}(\beta)_{j}}{j!\, \Gamma(\frac{l}{k}j)} \Big(\frac{t}{\tau}\Big)^{\frac{l}{k}j-1} \\ \nonumber 
& \times{_{1+k}F_{l+k}}\left({1, \Delta(k, \beta + j) \atop \Delta(k, 1+j), \Delta(l, \frac{l}{k}j)}; \frac{(-1)^{k} t^{l}}{\tau^{l} l^{l}}\right)\,.
\end{align}
The asymptotic behavior of the Mittag-Leffler functions presented by Lemma \ref{lem1} result in \cite[Eq. (3.45)]{RGarrappa16}, that is
\begin{multline}\nonumber
\phi_{J\!W\!S}(\alpha, \beta; t) \propto \frac{\beta}{\tau \Gamma(\alpha)} \left(\frac{t}{\tau}\right)^{\alpha-1} \quad \text{for} \quad \frac{t}{\tau} \ll 1 \quad \text{and}\\  \phi_{J\!W\!S}(\alpha, \beta; t) \propto - \frac{1}{\tau \Gamma(-\alpha\beta)} \left(\frac{t}{\tau}\right)^{-\alpha\beta-1} \quad \text{for} \quad \frac{t}{\tau} \gg 1.
\end{multline}

Because $\widehat{\phi}_{J\!W\!S}(\alpha, \beta; s)$, $s>0$, has the CM character for $\beta \leq 1/\alpha$, then thanks to the Bernstein theorem (Sect. \ref{subsec3.3}) we know that $\phi_{J\!W\!S}(\alpha, \beta; t)$ is a non-negative function in the same range of parameters $\alpha$ and $\beta$. That is confirmed in Fig. \ref{rys3}.
\begin{figure}[!h]
\begin{center}
\includegraphics[scale=0.42]{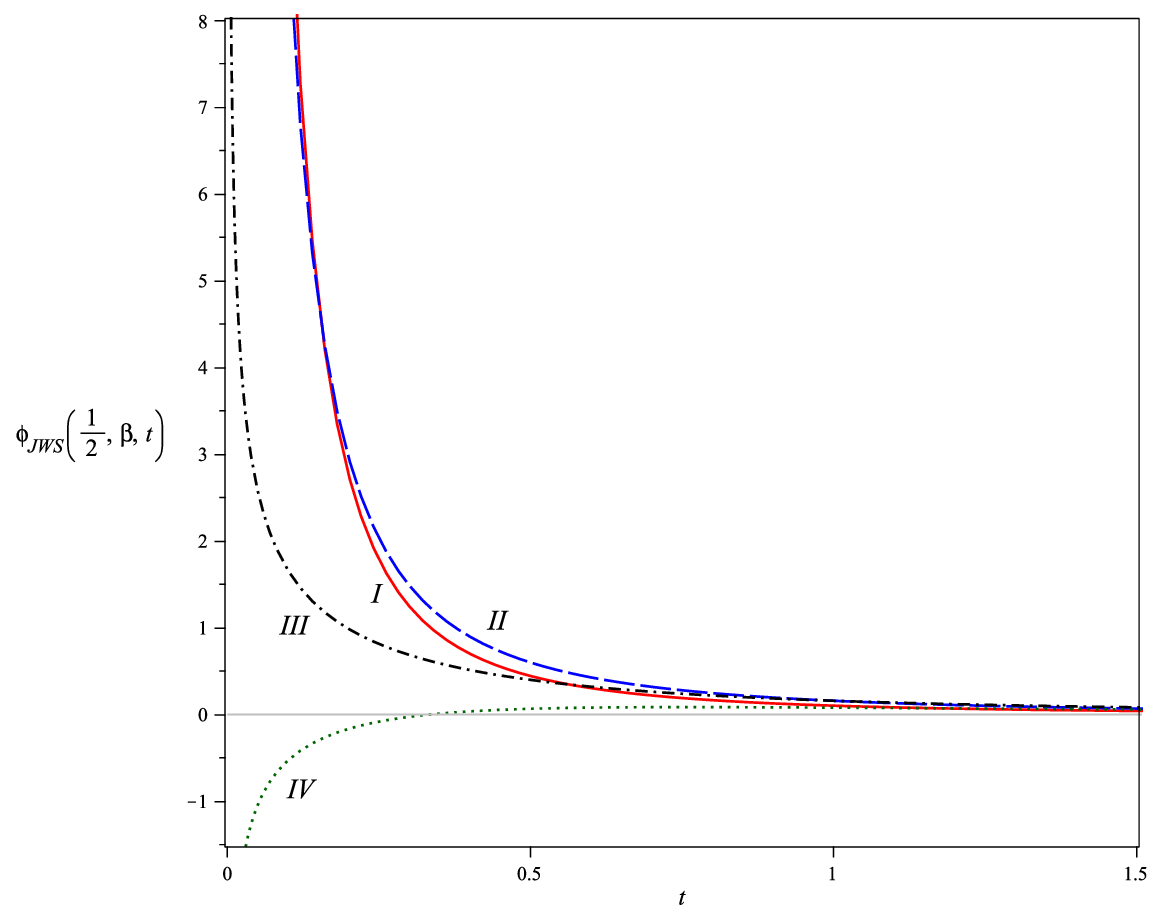}
\caption{\label{rys3}(Color online) Plot of $\phi_{J\!W\!S}(1/2, \beta; t)$ as a function of $t$ given by Eq. \eqref{30/06-1} for $\tau = 1$ and $\beta = 1/2$ (the red continuous curve no. I), $\beta = 1$ (the blue dashed curve no. II), $\beta = 2$ (the black dash-dotted curve no. III), and $\beta = 3$ (the green dotted curve no. IV). } 
\end{center}
\end{figure}

\subsection{The probability density $g_{J\!W\!S}(\alpha, \beta; \xi)$}\label{sec6.3}

By analogy with to Subsect. \ref{sec5.3} we can represent the spectral function $\widehat{\phi}_{J\!W\!S}(\alpha, \beta; z)$, $z=\I\!\omega$, as follows
\begin{equation}\label{30/06-5}
\widehat{\phi}_{J\!W\!S}(\alpha, \beta; z) = \int_{0}^{\infty} \frac{p_{J\!W\!S}(\alpha, \beta; \xi)}{\xi + z} \D\xi,
\end{equation}
which for $z = s \in\mathbb{R}_{+}$  is a S function \cite{KGorska21c}. The Schwinger parametrization enables us to write $\widehat{\phi}_{J\!W\!S}(\alpha, \beta; z) = \mathscr{L}[\phi_{J\!W\!S}(\alpha, \beta; t); z]$, where from Eq. \eqref{18/06-11} it follows that $\phi_{J\!W\!S}(\alpha, \beta; t) = \mathscr{L}[p_{J\!W\!S}(\alpha, \beta; \xi); t]$ with $p_{J\!W\!S}(\alpha, \beta; \xi) = \xi\, g_{J\!W\!S}(\alpha, \beta; \xi)/\tau$. Hence,
\begin{equation}\label{30/06-7}
g_{J\!W\!S}(\alpha, \beta; \xi) = \frac{\tau}{\xi}\, \mathscr{L}^{-1}[\phi_{J\!W\!S}(\alpha, \beta; t); \xi]. 
\end{equation}
Derivation of the exact form of $g_{J\!W\!S}(\alpha, \beta; \xi)$ for $\alpha = l/k$ relies on calculating the inverse Laplace transform sitting in Eq. \eqref{30/06-7}. To do this we employ the Meijer G-representation of the JWS response function. That gives
\begin{multline}\label{2/07-1}
\mathscr{L}^{-1}[\phi_{J\!W\!S}(l/k, \beta; t); \xi] = \mathscr{L}^{-1}[\delta(t); \xi] \\ - (2\pi)^{\frac{1+l}{2} -k} \frac{\sqrt{l} k^{\beta}}{\tau \Gamma(\beta)}\, \mathscr{L}^{-1}\left[\Big(\frac{t}{\tau}\Big)^{-1} \right.\\ \left. \times G^{k, k}_{k+l, k}\left(\frac{l^{l} \tau^{l}}{t^{l}}\Big\vert {\Delta(k, 1), \Delta(l, 0) \atop \Delta(k, \beta)} \right); \xi\right].
\end{multline}
The first inverse Laplace transform in Eq. \eqref{2/07-1},  $\mathscr{L}^{-1}[\delta(t); \xi]$, can be obtained by using the so-called {\em limits representation of $\delta$-distribution}, namely $\delta(t) = \lim_{\epsilon\to +0} \epsilon\, t^{\epsilon - 1}/2$ for $t\in\mathbb{R}_{+}$ \cite{online1}, and next changing the order of the inverse Laplace transform and the limit. That leads to
\begin{align}\nonumber
\begin{split}
\mathscr{L}^{-1}[\delta(t); \xi] & = \mathscr{L}^{-1}\Big[\!\lim_{\epsilon\to +0} \frac{1}{2}\epsilon\, t^{\epsilon - 1}; \xi\Big] \\ &= \frac{1}{2} \lim_{\epsilon\to +0} \epsilon\, \mathscr{L}^{-1}[t^{\epsilon-1}; \xi] \\ & =  \frac{1}{2} \lim_{\epsilon\to +0} \frac{\epsilon\, \xi^{-\epsilon}}{\Gamma(1-\epsilon)} = 0, \qquad \text{for} \quad \xi\in\mathbb{R}_{+}.
\end{split}
\end{align}
To calculate the second inverse Laplace transform we use Eq. \eqref{2/07-11}: 
\begin{align}\label{2/07-3}
g_{J\!W\!S}(&l/k, \beta; \xi) = - (2\pi)^{\frac{1+l}{2} -k} \frac{\sqrt{l} k^{\beta}}{\Gamma(\beta)}\, \frac{\tau}{\xi} \\ & \times \mathscr{L}^{-1}\left[t^{-1}G^{k, k}_{k+l, k}\left(\frac{l^{l} \tau^{l}}{t^{l}}\Big\vert {\Delta(k, 1), \Delta(l, 0) \atop \Delta(k, \beta)} \right); \xi\right] \nonumber \\ \nonumber
& = - (2\pi)^{l-k} \frac{k^{\beta}}{\Gamma(\beta)}\, \frac{1}{\xi}\, G^{k, k}_{k+l, k+l}\left( \xi^{l}\Big\vert {\Delta(k, 1), \Delta(l, 0) \atop \Delta(k, \beta), \Delta(l, 0)} \right).
\end{align}
The last formula becomes more readable  if the Meijer G-function is expressed in terms of the finite sum of the hypergeometric functions ${_{p}F_{q}}$ according to Eq. \eqref{14/07-6}. That implies
\begin{align}\label{4/07-1}
\begin{split}
g_{J\!W\!S}(&l/k, \beta; \xi) = \frac{1}{\pi} \sum_{j=0}^{k-1} \frac{(-1)^{j}}{j!} \frac{(\beta)_{j}}{\xi^{1-\frac{l}{k}n_j}} \sin(\ulamek{l}{k}n_j\pi)\\ & \times  {_{k + 1}F_{k}}\left({1, \Delta(k, n_{j}) \atop \Delta(k, 1+j)}; (-1)^{l-k} \xi^{l}\right),
\end{split}
\end{align}
with $n_j=\beta+j$. This is similar to $g_{H\!N}(l/k, \beta; \xi)$ given by Eq. \eqref{5/06-2} except of terms involving dependence on $\xi$. For the CC model with $\beta = 1$ and $\alpha\in(0, 1)$ we obtain $g_{J\!W\!S}(\alpha, 1; \xi) \equiv g_{CC}(\alpha; \xi)$ given by Eq. \eqref{16/06-1}. For the MCD model, i.e., $\alpha = l/k = 1$, Eq. \eqref{4/07-1} becomes
\begin{equation}\nonumber
g_{J\!W\!S}(1, \beta; \xi) \equiv g_{M\!C\!D}(\beta; \xi) = \frac{\sin(\beta\pi)}{\pi\xi(\xi^{-1} - 1)^{\beta}} \Theta(1-\xi),
\end{equation}
in which the Heaviside step function $\Theta(\cdot)$ ensures that $g_{M\!C\!D}(\beta; \xi)$ is a real function for $0 < \xi < 1$.

Following the same procedure as in the case of the HN model we insert in Eq. \eqref{4/07-1} the series definition of the generalized hypergeometric function ${_{k+1}F_{k}}$, i.e., Eq. \eqref{14/03-A5}. That gives the double sum: one is over $j =0, \ldots, k-1$ and another $r =0, 1, \ldots$. They can be reduced to the single sum due to relation $\sum_{r\geq 0} \sum_{j=0}^{k-1} a_{j + kr} = \sum_{r\geq 0} a_{r}$. Thus, we get an analogue of Eq. \eqref{6/06-4}
\begin{align}\nonumber
g_{J\!W\!S}(&\alpha, \beta; \xi) = \frac{1}{\pi}\, \sum_{r=0}^{\infty} \frac{(-1)^{r}}{r!} \frac{ (\beta)_{r}}{\xi^{1-\alpha(\beta+r)}} \sin[\alpha(\beta+r) \pi] \\ \label{4/07-3}
& = \frac{1}{\pi\xi}\, \IM\left\{\sum_{r=0}^{\infty} \frac{(-1)^{r}}{r!} (\beta)_{r}\, (\xi \E^{\I\!\pi})^{\alpha(\beta + r)}\right\},
\end{align}
where $(\xi \E^{\I\!\pi})^{\alpha(\beta + r)}$ can be defined by the integral representation of $\Gamma$ function, i.e $(\xi \E^{\I\!\pi})^{\lambda} = - \int_{0}^{\infty} \E^{-\xi u} (u \E^{\I\!\pi})^{\lambda - 1} \D u/\Gamma(-\lambda)$. Thereafter, we substitute it into Eq. \eqref{4/07-3} and change the order of sum and integral. Employing the series-form definition of the three-parameter Mittag-Leffler function \eqref{15/06-1a} leads to
\begin{align}\label{4/07-5}
g_{J\!W\!S}(&\alpha, \beta; \xi) = \frac{1}{\pi \xi} \IM\left\{\E^{-\I\alpha\beta\pi} \int_{0}^{\infty} \E^{-\xi u} u^{-\alpha\beta - 1}\right. \nonumber \\ & \left.\times E_{-\alpha, - \alpha\beta}^{\beta}(-u^{-\alpha} \E^{-i\alpha\pi}) \D u\right\} \nonumber \\
& = \frac{1}{\pi \xi} \IM\left\{\int_{0}^{\infty} \E^{-\xi u} u^{-1} E_{\alpha, 0}^{\beta}(-u^{\alpha}\E^{\I\!\alpha\pi}) \D u\right\},
\end{align}
where the relation between two Prabhakar functions comes from the Lemma \ref{lem2}. Inserting the Laplace transform Eq. \eqref{15/06-4} into  Eq. \eqref{4/07-5} and employing de Moivre's formula to calculate the imaginary part  we have 
\begin{equation}\label{5/07-5}
g_{J\!W\!S}(\alpha, \beta; \xi) = \pm\frac{1}{\pi}\, \frac{\xi^{\alpha\beta - 1} \sin[\beta \,\theta_{\alpha}(\xi^{-1})]}{[\xi^{2\alpha} + 2 \xi^{\alpha} \cos(\pi\alpha) + 1]^{\beta/2}},
\end{equation}
where $\theta_{\alpha}(y)$ is defined in Eq. \eqref{3/06-4}. Repeating the considerations quoted below Eq. \eqref{14/06-2} we can deduce that $g_{J\!W\!S}(\alpha, \beta; \xi)$ is non-negative only for $0 < \alpha \leq 1$ and $0 < \beta \leq 1/\alpha$ - that is seen in Figs. \ref{rys4}. In order to avoid dividing the interval $\xi$ into two intervals like in $g_{H\!N}(\alpha, \beta; \xi)$ we illustrate $g_{J\!W\!S}(\alpha, \beta; \xi)$ by employing its Meijer G-representation given by Eq. \eqref{2/07-3}. We have two plots, Fig. \ref{rys4}a and Fig. \ref{rys4}b, where $g_{J\!W\!S}$ are presented as the function of $\xi$ for given $\alpha$ and $\beta$. In Fig. \ref{rys4}a we keep the value of $\alpha = {\rm const}$ and change $\beta$ whereas in Fig. \ref{rys4}b we consider opposite situation.
\begin{figure}[!h]
\includegraphics[scale=0.4]{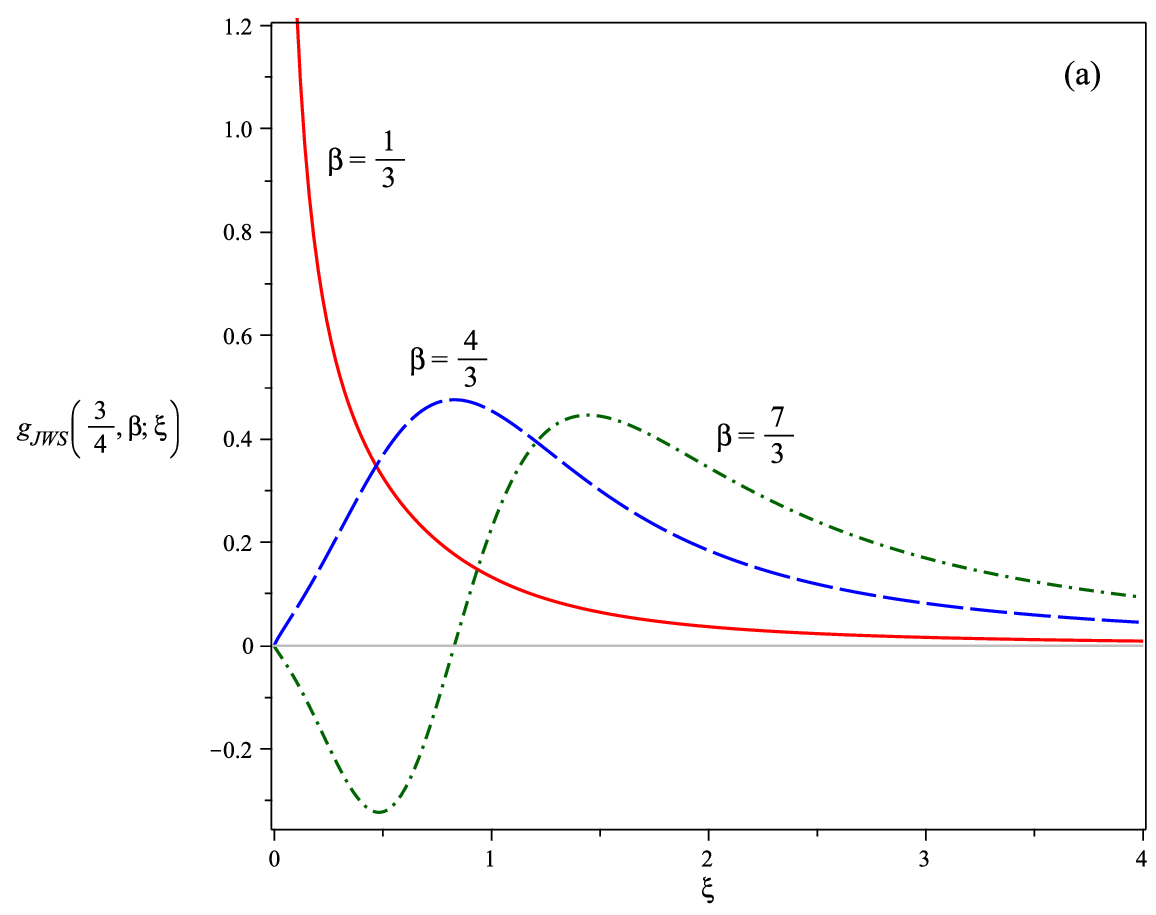}
\includegraphics[scale=0.4]{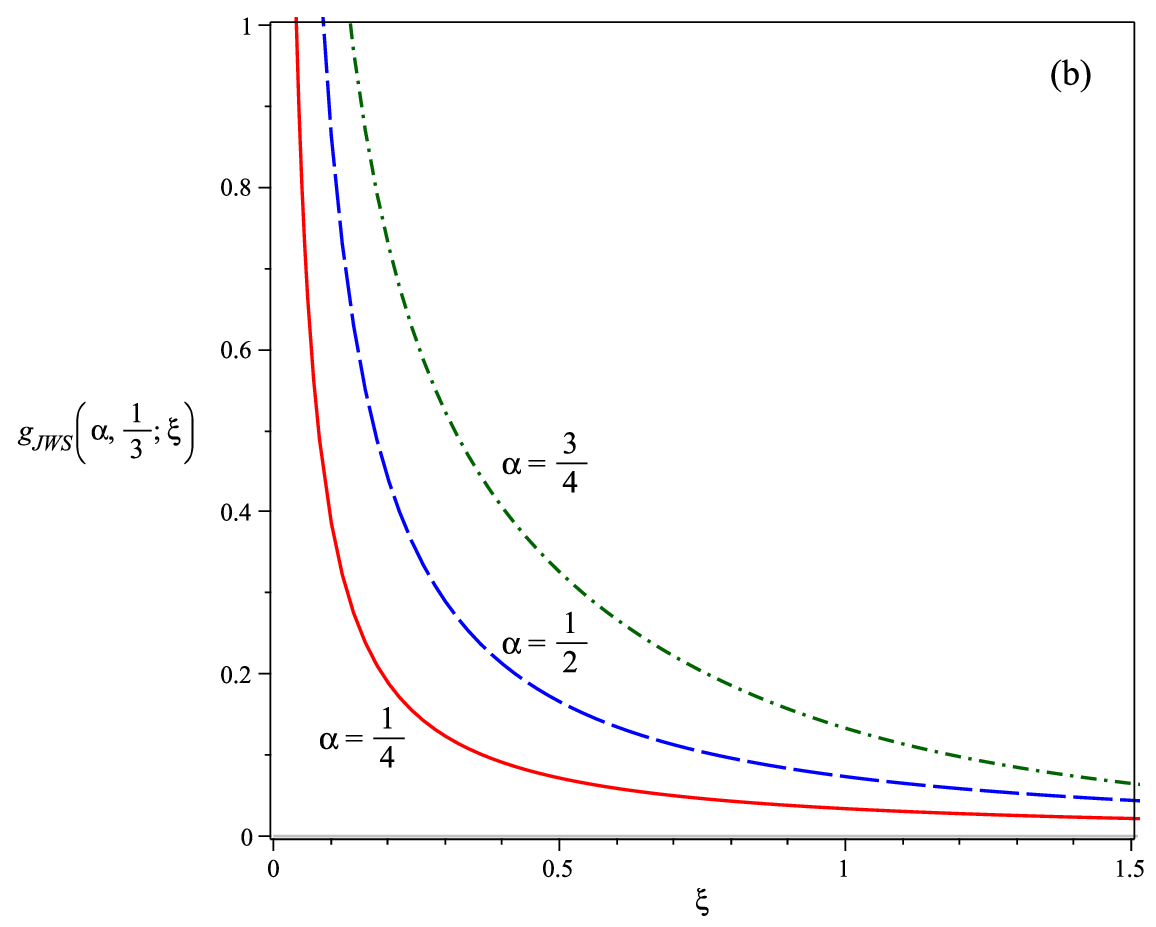}
\caption{\label{rys4}(Color online) Plot of $g_{J\!W\!S}(\alpha, \beta; \xi)$ given by Eq. \eqref{2/07-3} for given parameters $\alpha$ and $\beta$ as the function of $\xi$. Descriptions are the same as in Figs. \ref{rys5}.
} 
\end{figure}

\subsection{The relaxation function $n_{J\!W\!S}(\alpha, \beta; t)$}\label{sec6.4}

Analogically to the HN case the relaxation function for the JWS model can be obtained in least in two ways: either inserting  $g_{J\!W\!S}(\alpha, \beta; \xi)$ found in Subsect. \ref{sec6.3} into Eq. \eqref{18/06-10} or using Eq. \eqref{23/05_1} which connects the Laplace form of the relaxation function $\widehat{n}_{J\!W\!S}(\alpha, \beta; \I\!\omega)$ with the spectral function $\widehat{\phi}_{J\!W\!S}(\alpha, \beta; \I\!\omega)$. Following the second approach we get \cite[Eq. (3.44)]{RGarrappa16}:
\begin{equation}\label{7/07-1}
n_{J\!W\!S}(\alpha, \beta; t) = E_{\alpha, 1}^{\beta}\big(-(t/\tau)^{\alpha}\big),
\end{equation} 
which after using Lemma \ref{lem3-1} and Eq. \eqref{16/06-20} can be rewritten as
\begin{align}\label{1/02/23-1}
\begin{split}
n_{J\!W\!S}(\alpha, \beta; t) & =  \int_{0}^{t} \phi_{H\!N}(-\alpha, \beta; u) \D u \\
& = 1 - n_{H\!N}(-\alpha, \beta; t).
\end{split}
\end{align}
Moreover, with the help of Eqs. \eqref{17/06-3} we can express Eq. \eqref{7/07-1} for $\alpha = l/k$ in the language of the Meijer G-representation:
\begin{align}\label{1/02-2}
n_{J\!W\!S}(&\alpha, \beta; t) = (2\pi)^{\frac{1+l}{2} - k}\, \frac{k^{\beta} l^{-1/2}}{ \Gamma(\beta)} \nonumber \\ & \times G^{k, k}_{k+l, k}\left(\frac{l^{l} \tau^{l}}{t^{l}}\Big\vert {\Delta(k, 1), \Delta(l, 1) \atop \Delta(k, \beta)}\right)
\end{align}
plotted for $\tau=1$, $\alpha = 1/2$ and $\beta = 1/4, 1/2, 3/4$ in Fig. \ref{rys6}.
\begin{figure}[!h]
\begin{center}
\includegraphics[scale=0.42]{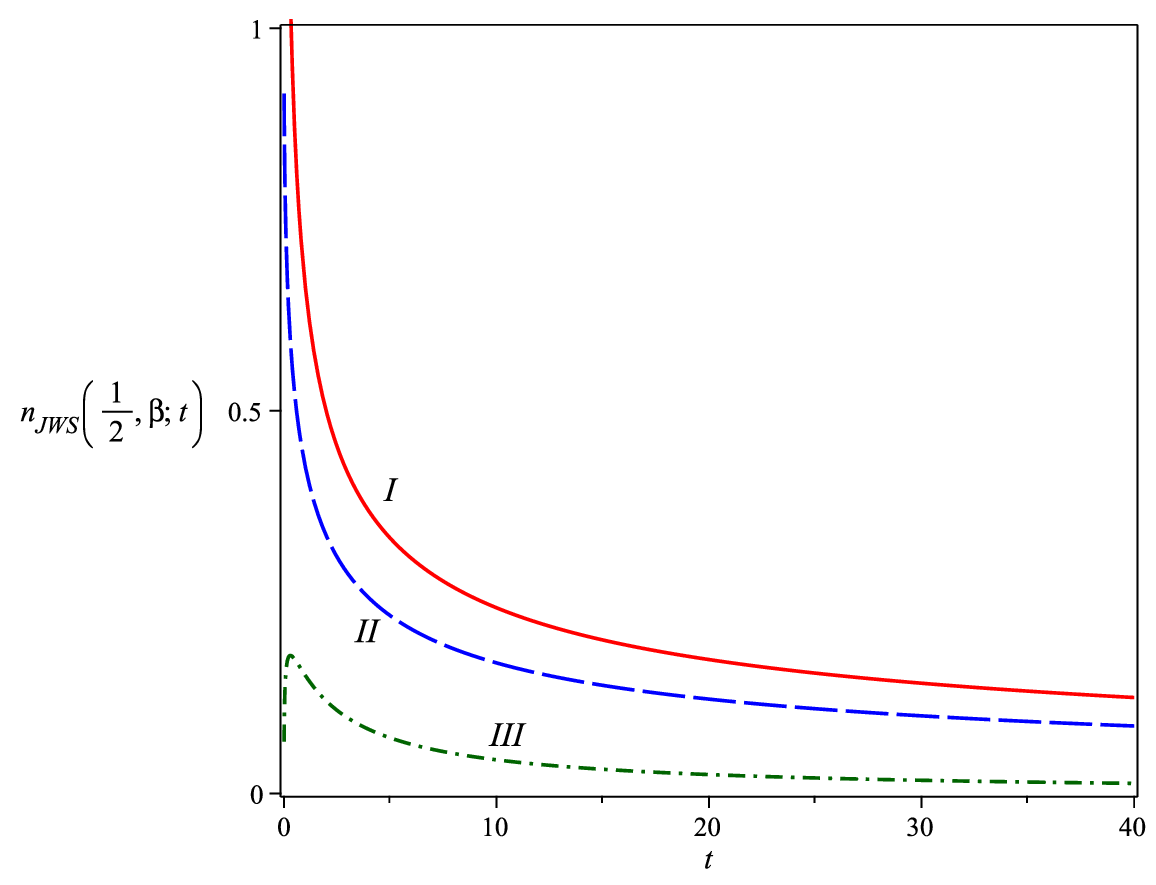}
\caption{\label{rys6}(Color online) Plot of $n_{J\!W\!S}(1/2, \beta; t)$ given by Eq. \eqref{1/02-2}. The curves have analogical characterizations as in Fig. \ref{rys2}.}
\end{center}
\end{figure}{
For $\alpha\in(0, 1)$ and $\beta = 1$ the relaxation function $n_{J\!W\!S}(\alpha, \beta; t)$ goes to the CC relaxation function \eqref{3/08-2} whereas for $\alpha = 1$ and $\beta\in(0, 1)$ it tends to 
\begin{equation}\nonumber
n_{J\!W\!S}(1, \beta; t) \equiv n_{M\!C\!D}(\beta; t) = E_{1, 1}^{\beta}(-t/\tau) = {_{1}F_{1}}\left({\beta \atop 1}; - \frac{t}{\tau}\right).
\end{equation}
The asymptotics of $n_{J\!W\!S}(\alpha, \beta; t)$ obtained from Eqs. \eqref{26/06-10} and \eqref{17/06-25} reads
\begin{multline}\nonumber
n_{J\!W\!S}(\alpha, \beta; t) \propto 1 - \frac{\beta}{\Gamma(1 + \alpha)} \left(\frac{t}{\tau}\right)^{\alpha} \quad \text{for} \quad \frac{t}{\tau} \ll 1 \quad \text{and} \\
n_{J\!W\!S}(\alpha, \beta; t) \propto \frac{1}{\Gamma(1-\alpha\beta)} \left(\frac{t}{\tau}\right)^{-\alpha\beta} \quad \text{for} \quad \frac{t}{\tau} \gg 1.
\end{multline}
like in \cite[Eq. (3.46)]{RGarrappa16}.

\subsection{Evolution equation for  $n_{J\!W\!S}(\alpha, \beta; t)$}\label{sec6.5}

The memory kernels $M_{J\!W\!S}(\alpha, \beta; t)$ and $k_{J\!W\!S}(\alpha, \beta; t)$ obtained with the help of Eqs. \eqref{21/09-1a} and \eqref{28/09-1} furnish
\begin{multline}\nonumber
M_{J\!W\!S}(\alpha, \beta; t) = {\cal B}^{-1} \{t^{-1} E_{\alpha, 0}^{-\beta}[-(t/\tau)^{\alpha}] - \delta(t)\} \quad \text{and} \\ k_{J\!W\!S}(\alpha, \beta; t) = {\cal B} \sum_{r\geq 0}E^{\beta(1+r)}_{\alpha, 1}[-(t/\tau)^{\alpha}],
\end{multline}
see \cite{KGorska21CNSNCa, KGorska21b},  and in the Laplace domain they read
\begin{equation}\nonumber
\widehat{M}_{J\!W\!S}(\alpha, \beta; z) = {\cal B}^{-1}[z^{-\alpha\beta}(\tau^{-\alpha}+z^{\alpha})^{\beta}-1]
\end{equation}
and
 \begin{equation}
 \widehat{k}_{J\!W\!S}(\alpha, \beta; z) = {\cal B}\{z[1+(\tau z)^{-\alpha}]^{\beta} - z\}^{-1}.
\end{equation}
The expression $M_{J\!W\!S}(\alpha, \beta; t)$ is simpler so to get the evolution equation for the JWS model we take Eq. \eqref{14/10-4} instead of Eq. \eqref{13/09-1a}. That gives:
   \begin{equation}\label{24/08-5}
      \int_0^t (t-\xi)^{-1} E_{\alpha, 0}^{-\beta}[-\tau^{-\alpha}(t-\xi)^{\alpha}]\, n_{J\!W\!S}(\alpha, \beta; \xi) \D\xi = 1,
\end{equation}
where we have used the sampling property of the Dirac $\delta$--function, $\int_0^t \delta(t - \xi) n(\xi) \D\xi = n(t)$ for $t\in\mathbb{R}_{+}$. Proceeding as in Ref. \cite{KGorska21CNSNCa} we can state that Eq. \eqref{24/08-5} coincides with the equation \cite[Eq. (3.50)]{RGarrappa16}:  
\begin{equation}\nonumber
(D^{\alpha} + \tau^{-\alpha})^{\beta} n_{J\!W\!S}(\alpha, \beta; t) = \frac{t^{-\alpha\beta}}{\Gamma(1-\alpha\beta)}.
\end{equation}
(completed with a suitable initial condition). The pseudo-operator in the above equation reads
\begin{multline}\label{3/08-6}
(D^{\alpha} + \tau^{-\alpha})^{\beta}n_{J\!W\!S}(\alpha, \beta; t) \\= \frac{\D}{\D t}\int_{0}^{t} (t-\xi)^{-\alpha\beta} E_{\alpha, 1 - \alpha\beta}^{-\beta}\Big[-\frac{(t-\xi)^{\alpha}}{\tau^{\alpha}}\Big]\, 
n_{J\!W\!S}(\alpha, \beta; \xi)\, \D\xi.
\end{multline}
It was introduced in \cite{AStanislavsky15, KStanislavski16} and discussed in Appendix B of \cite{RGarrappa16}. For $\beta = 1$ and $\alpha\in(0, 1)$ it gives the evolution equation written in terms of fractional derivative in the sense of Riemann-Liouville derivative $D^{\alpha}$, $\alpha\in(0, 1)$ \cite{IPodlubny99}, used instead of the fractional derivative in the Caputo sense. Using the link between these derivatives, i.e., $({^{c\!}D^{\alpha}}f)(t) = (D^{\alpha} f)(t) - t^{-\alpha}f(0)/\Gamma(1-\alpha)$, we get the same formula.

\subsection{Two kinds of subordination approaches of  $n_{J\!W\!S}(\alpha, \beta; t)$}\label{sec6.6}

As the \underline{first} type of subordination we take the Debye subordination, i.e., Eq. \eqref{30/07-1} with $f_{\hat{\Psi}}(\xi, t)$ given now by
\begin{multline}\label{11/04-2}
f_{J\!W\!S}(\alpha, \beta; \xi, t) = {\cal B} \E^{{\cal B}\xi} \mathscr{L}^{-1}\Big[\frac{z^{\alpha\beta-1}}{(z^{\alpha} + \tau^{-\alpha})^{\beta} - z^{\alpha\beta}}  \\  \times \E^{-{\cal B}\xi\frac{(z^{\alpha} + \tau^{-\alpha})^{\beta}}{(z^{\alpha} + \tau^{-\alpha})^{\beta} - z^{\alpha\beta}}}; t\Big].
\end{multline}
After the cancelation of the Debye relaxation function by $\exp({\cal B}\xi)$ coming from Eq. \eqref{11/04-2} we get
\begin{multline}\label{11/04-3}
n_{J\!W\!S}(\alpha, \beta; x, t) = {\cal B} \int_{0}^{\infty}\! \mathscr{L}^{-1}\left[\frac{z^{\alpha\beta-1}}{(z^{\alpha} + \tau^{-\alpha})^{\beta} - z^{\alpha\beta}} \right. \\ \left. \times \E^{-\xi\frac{{\cal B}(z^{\alpha} + \tau^{-\alpha})^{\beta}}{(z^{\alpha} + \tau^{-\alpha})^{\beta} - z^{\alpha\beta}}}; t\right] \D\xi.
\end{multline}

To calculate the inverse Laplace transform in the integrand of Eq. \eqref{11/04-3} we apply once more the Efros theorem but this time with $\widehat{G}_{2}(z)$ and $\widehat{q}_{2}(z)$ given below Eq. \eqref{8/04-3S}. Thus, the RHS of Eq. \eqref{11/04-3} becomes
\begin{align}\nonumber
\mathscr{L}^{-1}&\left[\frac{z^{\alpha\beta-1}}{(z^{\alpha} + \tau^{-\alpha})^{\beta} - z^{\alpha\beta}} \E^{-\xi\frac{{\cal B}(z^{\alpha} + \tau^{-\alpha})^{\beta}}{(z^{\alpha} + \tau^{-\alpha})^{\beta} - z^{\alpha\beta}}}; t\right] \nonumber\\
& = \int_{0}^{\infty}\! \mathscr{L}^{-1}\left[\frac{(z - \tau^{-\alpha})^{\beta}}{z^{\beta} - (z - \tau^{-\alpha})^{\beta}} \E^{-\xi \frac{{\cal B} z^{\beta}}{z^{\beta} - (z - \tau^{-\alpha})^{\beta}}}; u\right] \nonumber \\ & \times \mathscr{L}^{-1}[z^{-1} \E^{-u (z^{\alpha} + \tau^{-\alpha})}; t] \D u \nonumber%
\end{align}
Inserting it into Eq. \eqref{11/04-3} and changing the order of integration we can simplify the calculations. It results in 
\begin{align}\label{11/04-5}
n_{J\!W\!S}(&\alpha, \beta; x, t) = {\cal B} \int_{0}^{\infty}\!\E^{-u \tau^{-\alpha}} \mathscr{L}^{-1}\left[\frac{(z-\tau^{-\alpha})^{\beta}}{z^{\beta} - (z-\tau^{-\alpha})^{\beta}} \right.\nonumber \\ & \left.\times \int_{0}^{\infty}\! \E^{-\xi\frac{{\cal B}z^{\beta}}{z^{\beta} - (z - \tau^{-1\alpha})}}\D\xi; u\right]\, \mathscr{L}^{-1}[z^{-1}\E^{-u z^{\alpha}}; t] \D u \nonumber \\
& = \int_{0}^{\infty}\! \mathscr{L}^{-1}\left[\frac{(z - \tau^{-\alpha})^{\beta}}{z^{\beta}}; u\right] \E^{-u\tau^{-\alpha}}  \nonumber \\ &  \times \mathscr{L}^{-1}[z^{-1}\E^{-u z^{\alpha}}; t] \D u.
\end{align}
The next observation is 
\begin{align}\label{11/04-6}
\mathscr{L}^{-1}\Big[&\frac{(z-\tau^{-\alpha})^{\beta}}{z^{\beta}}; u\Big] \E^{-u\tau^{-\alpha}}
= \mathscr{L}^{-1}[(z\tau^{\alpha})^{\beta} \widehat{\phi}_{C\!D}(\beta; z); u]\nonumber\\
&= \mathscr{L}^{-1}[1 - \widehat{\phi}_{M\!C\!D}(\beta; z); u],
\end{align}
where we applied Eq. \eqref{25/06-10} for $\alpha = 1$. Then, it turns out from that $\widehat{\phi}_{M\!C\!D}(\beta; z) = 1 - (z\tau^{\alpha})^{\beta} \widehat{\phi}_{C\!D}(\beta; z)$. Since $\mathscr{L}^{-1}[1; u] = \delta(u)$ and the inverse Laplace transform of the spectral function is equal to the response function, then Eq. \eqref{11/04-6} is expressed as $\delta(u) - \phi_{M\!C\!D}(\beta; u)$. Consequently, $n_{J\!W\!S}$ comes out as
\begin{align}\label{12/04-1}
n_{J\!W\!S}(&\alpha, \beta; x, t) 
= \mathscr{L}^{-1}\left[z^{-1}\int_{0}^{\infty} \! \delta(u) \E^{-u z^{\alpha}} \D u; t\right] \nonumber\\
&
- \int_{0}^{\infty}  \phi_{M\!C\!D}(\beta; u) \mathscr{L}^{-1}\left[z^{-1}\E^{-u z^{\alpha}}; t\right] \D u. 
\end{align}
The last steps to complete the calculation are: the use $\phi_{M\!C\!D}(\beta; u) = - \dot{n}_{M\!C\!D}(\beta; u)$, the Leibniz formula, and, finally, rewrite the RHS of Eq. \eqref{11/04-5} as 
\begin{align}\label{12/04-1}
n_{J\!W\!S}(&\alpha, \beta; x, t) = \int_{0}^{\infty} n_{M\!C\!D}(\beta; u) \mathscr{L}^{-1}\left[z^{-1}\frac{d}{d u}\E^{-u z^{\alpha}}; t\right] \D u \nonumber \\
& = \int_{0}^{\infty} n_{M\!C\!D}(\beta; u) \mathscr{L}^{-1}\left[z^{\alpha-1}\E^{-u z^{\alpha}}; t\right] \D u \nonumber \\ 
& = \int_{0}^{\infty} n_{M\!C\!D}(\beta; u) f(\alpha; u, t) \D u
\end{align}
with $f(\alpha; u, t)$ given by Eq. \eqref{25/06-1a}. Equation \eqref{12/04-1} can be interpreted as the \underline{second} type of subordination in which $n_{M\!C\!D}(\beta; u)$ is subordinated by $f(\alpha; u, t)$.

We see that the HN and the JWS relaxation models lead to at least two types of subordinations: one described by Eqs. \eqref{27/07-2} and \eqref{10/04-1} for the HN model, as well as second describe by Eqs. \eqref{11/04-3} and \eqref{12/04-1} for the JWS model. Looking for physical interpretation of this fact enables us to suspect that with the growing complexity of the relaxing system, the simple partition of the process into two components, namely the parent and leading processes is not enough to reflect and understand all properties of the system, in particular it may be necessary to take into account the fact that the parent and leading processes can have a complex structure on their own.

\section{Summary}

{The broadband dielectric spectroscopy allows us to obtain experimental data which, if fitted to the spectral functions $\widehat{\phi}_{(\cdot)}(\I\!\omega)$, enable us to classify the latter as corresponding to one among of the standard relaxation models. Simpler of them, the CC, CD, and MCD models, emerge as reductions of three parameter ones, namely the HN and JWS patterns. HN and JWS models reduce to the CC relaxation for $\alpha\in(0, 1)$ and $\beta=1$. For  $\alpha = 1$ and $\beta\in(0, 1)$ the HN model goes to CD pattern, whereas the JWS pattern leads to the MCD model. All these spectral functions were tabulated in Tab. \ref{tab-SFs} from which one sees that
\begin{equation}\label{23/02/23-2}
\widehat{\phi}_{J\!W\!S}(\alpha, \beta; \I\!\omega) + \widehat{\phi}_{H\!N}(-\alpha, \beta; \I\!\omega) = 1.
\end{equation}
The knowledge of phenomenologically found spectral functions is crucial for our investigations. From one side, the methods of dielectric relaxation theory and the Laplace transform enable us to recover the response $\phi(t)$ and relaxation $n(t)$ functions, see Eqs. \eqref{23/05_1} and \eqref{16/06-20}, from the knowledge of spectral function. Obviously, the reverse procedure is also justified - we can transform $\phi(t)$ and/or $n(t)$ into $\widehat{\phi}(\I\!\omega)$. For the readers convenience we itemized $\phi(t)$ and $n(t)$ in Tabs. \ref{tab-RFs} and \ref{tab-RelFs} which give
\begin{equation}\label{23/02/23-3}
\phi_{J\!W\!S}(\alpha, \beta; t) + \phi_{H\!N}(-\alpha, \beta; t) = \delta(t)
\end{equation}
and
\begin{equation}\label{23/02/23-3a}
n_{J\!W\!S}(\alpha, \beta; t) + n_{H\!N}(-\alpha, \beta; t) = 1.
\end{equation}
From another side the importance of spectral function $\widehat{\phi}(\I\!\omega)$ comes from the fact that its knowledge enables us to find the memory functions $k(t)$ and $M(t)$. The latter are basic objects used to determine the time evolution of the system under study, see Eq. \eqref{21/09-1a}. To develop this idea we took into account that the memory functions $M(t)$ and $k(t)$ are mutually related by the Sonine equation. This implied that the relevant evolution equations, see Eqs. \eqref{13/09-3}/\eqref{14/10-4} and \eqref{13/09-1a}, led to the same results. Thus we concluded that the search for principles which govern the evolution may be done in two free chosen ways. The simplest is to consider integro-differential equations interpreted as memory dependent. For standard relaxation function these equations are presented in Tab. \ref{tab-EEq}. Here, we would like to pay the readers attention that this approach prefers the evolution equations for the HN and JWS models in the form which involves the pseudo-operators ${^{C\!}(D^{\alpha} + \tau^{-\alpha})^{\beta}}$ and ${(D^{\alpha} + \tau^{-\alpha})^{\beta}}$. These pseudo-operators are defined by Eq. \eqref{22/06-5} and  Eq. \eqref{3/08-6} and differ only by the position of the time derivative, like it is in the fractional derivatives of the Caputo and Riemann-Liouville senses. The relation between them is given though \cite[Eqs. (B.24) or (B.25)]{RGarrappa16}, this is
\begin{multline}\label{22/02/23-1}
{^{C\!}(D^{\alpha} + \tau^{-\alpha})^{\beta}} n(t) \\ = {(D^{\alpha} + \tau^{-\alpha})^{\beta}} n(t) - t^{-\alpha\beta} E_{\alpha, 1-\alpha\beta}^{-\beta}\big(-(\ulamek{t}{\tau})^{\alpha}\big) n(0+).
\end{multline}
The second observation concerning the spectral functions is that they are involved in the definition of characteristic functions $\widehat{\Psi}(z)$, named also the Laplace-L\'{e}vy exponents. This correspondence goes through using the memory function $M(t)$ and appears to be essential if one links the memory dependent evolution schemes  and  subordination approach, see Eq. \eqref{23/02/23-1}. Our approach introduces an essential, previously unnoticed, novelty - in the dielectric relaxation theory we can distinguish, at least two, subordination patterns. Besides of the subordination which comes from the Schwinger parametrization (called by us the basic one), see Eq. \eqref{30/07-1}, we found its alternative coming from the Efros theorem. Both these subordinations are connected to the various choice of ``internal'' timing, see Tab. \ref{tab-sub}. For instant, the HN relaxation function $n_{HN}(t)$ can be build for from the Debye relaxation in which we introduce internal time $\xi$. The relation between the physical time $t$ and internal time $\xi$ is given by the PDF of leading process, here $f_{\widehat{\Psi}(z)}(\xi, t)$. Another possibility to get $n_{HN}(t)$ is from the CD relaxation function in which the timing characterized by the PDF $f(\alpha; \xi, t)$.}
\begin{widetext}
\begin{center}
\begin{table}[!h]
\begin{tabular}{c | c | c | c | c}
$\widehat{\phi}_{HN}$ &  $\widehat{\phi}_{JWS}$ & $\widehat{\phi}_{CC}$ & $\widehat{\phi}_{CD}$ & $\widehat{\phi}_{MCD}$ \\ \hline \hline 
$[1+(\I\!\omega\tau)^{\alpha}]^{-\beta}$ & $1 - [1 + (\I\!\omega\tau)^{-\alpha}]^{-\beta}$ & $[1 + (\I\!\omega\tau)^{\alpha}]^{-1}$ & $(1 + \I\!\omega\tau)^{-\beta}$ & $1- [1 + (\I\!\omega\tau)^{-1}]^{-\beta}$ 
\end{tabular}
\caption{\label{tab-SFs} The spectral functions $\widehat{\phi}(\I\!\omega)$ of the various type of relaxation models, namely the HN, JWS, CC, CD, and postulated MCD models.}
%
\begin{tabular}{c | c | c }
& Mittag-Leffler function & Meijer G-function \\ \hline\hline
$\phi_{H\!N}$ & $\tau^{-1} \big(\frac{t}{\tau}\big)^{\alpha\beta - 1} E^{\beta}_{\alpha, \alpha\beta}\big(\!-\big(\frac{t}{\tau}\big)^{\alpha}\big)$ & $(2\pi)^{\ulamek{1+l}{2} - k} \ulamek{\sqrt{l} k^{\beta}}{t\Gamma(\beta)} G^{k, k}_{l+k, k}\left(\frac{l^{l}\tau^{l}}{t^{l}}\Big\vert{\Delta(k, 1-\beta), \Delta(l, 0) \atop \Delta(k, 0)}\right)$ \\ \hline
$\phi_{J\!W\!S}$ & $\delta(t) - \tau^{-1} \big(\ulamek{t}{\tau}\big)^{-1} E_{\alpha, 0}^{\beta}\big(-\big(\ulamek{t}{\tau}\big)^{\alpha}\big)$ & $\delta(t) - (2\pi)^{\ulamek{1+l}{2} - k} \ulamek{\sqrt{l} k^{\beta}}{t\Gamma(\beta)} G^{k, k}_{l+k, k}\left(\frac{l^{l}\tau^{l}}{t^{l}}\Big\vert{\Delta(k, 1), \Delta(l, 0) \atop \Delta(k, \beta)}\right)
$ \\ \hline
$\phi_{CC}$ & $\tau^{-1}(t/\tau)^{\alpha - 1} E_{\alpha, \alpha}[-(t/\tau)^{\alpha}]$ & $(2\pi)^{\ulamek{1+l}{2} - k}  \frac{\sqrt{l} k}{t} G^{k, k}_{l+k, k}\left(\frac{l^{l}\tau^{l}}{t^{l}}\Big\vert{\Delta(k, 0), \Delta(l, 0) \atop \Delta(k, 0)}\right)$ \\ \hline
$\phi_{C\!D}$ & $\tau^{-1} (t/\tau)^{\beta - 1} E_{1, \beta}^{\beta}[-(t/\tau)]$ & $(t/\tau)^{\beta - 1} \E^{-t/\tau}/ [\tau \Gamma(\beta)]$ \\ \hline
$\phi_{M\!C\!D}$ & $\delta(t) - \tau^{-1} \big(\ulamek{t}{\tau}\big)^{-1} E_{\alpha, 0}\big(-\big(\ulamek{t}{\tau}\big)^{\alpha}\big)$ & $\delta(t) + \beta \tau^{\beta} (\tau + t)^{-1-\beta}$ \\ 
\end{tabular}
\caption{\label{tab-RFs} The response functions $\phi(t)$ presented for the various types of relaxation models, namely HN, JWS, CC, CD, and MCD, in the language of Mittag-Leffler function and Meijer G-function.}

\begin{tabular}{c | c | c }
& Mittag-Leffler function & Meijer G-function \\ \hline\hline
$n_{H\!N}$ & $1 - \big(\ulamek{t}{\tau}\big)^{\alpha\beta} E_{\alpha, 1 + \alpha\beta}^{\beta}\big(\!-(\ulamek{t}{\tau})^{\alpha}\big)$ & $1 - (2\pi)^{\frac{1+l}{2} - k}\,\frac{k^{\beta}l^{-1/2}}{\Gamma(\beta)} G_{l+k, k}^{k, k}\left(\frac{l^{l} \tau^{l}}{t^{l}}\Big\vert {\Delta(k, 1-\beta), \Delta(l, 1) \atop \Delta(k, 0)}\right)$ \\ \hline
$n_{J\!W\!S}$ & $E_{\alpha, 1}^{\beta}\big(-(t/\tau)^{\alpha}\big)$ & $ (2\pi)^{\frac{1+l}{2} - k}\,\frac{k^{\beta}l^{-1/2}}{\Gamma(\beta)} G_{l+k, k}^{k, k}\left(\frac{l^{l} \tau^{l}}{t^{l}}\Big\vert {\Delta(k, 1), \Delta(l, 1) \atop \Delta(k, \beta)}\right)$ \\ \hline
$n_{CC}$ & $E_{\alpha}[-(t/\tau)^{\alpha}]$ & $(2\pi)^{\ulamek{1+l}{2} - k}  \frac{\sqrt{l} k}{t} G^{k, k}_{l+k, k}\left(\frac{l^{l}\tau^{l}}{t^{l}}\Big\vert{\Delta(k, 0), \Delta(l, 0) \atop \Delta(k, 0)}\right)$ \\ \hline
$n_{C\!D}$ & $1 - E^{\beta}_{1, 1 + \beta}(-t/\tau)$ & $\Gamma(\beta, t/\tau)/\Gamma(\beta)$  \\ \hline
$n_{M\!C\!D}$ & $E_{1, 1}^{\beta}(-t/\tau)$ & ${_{1}F_{1}}\left({\beta \atop 1}; - \frac{t}{\tau}\right)$  \end{tabular}
\caption{\label{tab-RelFs} It presents the relaxation functions $n(t)$ for the various types of relaxation models, namely HN, JWS, CC, CD, and MCD, in the language of Mittag-Leffler function and Meijer G-function.}
\end{table}
\end{center}
\end{widetext}

\begin{table}
\begin{center}
\begin{tabular}{c | c }
model & evolution equation \\ \hline\hline
HN & ${^{C\!}(D^{\alpha} + \tau^{-\alpha})^{\beta}} n_{H\!N}(\alpha, \beta; t) = - \tau^{-\alpha\beta}$ \\ \hline
JWS & $(D^{\alpha} + \tau^{-\alpha})^{\beta} n_{J\!W\!S}(\alpha, \beta; t) = t^{-\alpha\beta}/\Gamma(1-\alpha\beta)$ \\ \hline
CC & ${^{c}D^{\alpha}}n_{CC}(\alpha; t) = -\tau^{-\alpha} n_{CC}(\alpha; t)$ \\ \hline
CD & ${^{C\!}\big(\frac{\D}{\D t} + \frac{1}{\tau}\big)^{\!\beta}} n_{C\!D}(\beta; t) = - \tau^{-\beta}$ \\ \hline
MCD & $\big(\frac{\D}{\D t} + \frac{1}{\tau}\big)^{\!\beta} n_{M\!C\!D}(\beta; t) = t^{-\beta}/\Gamma(1-\beta)$
\end{tabular}
\caption{\label{tab-EEq} Table \ref{tab-EEq} presents the evolution equation for the various types of relaxation models, namely HN, JWS, CC, CD, and MCD. The relaxation function $n(t)$ at the initial time $t_{0} = 0$ is equal to 1. The pseudo-operators ${^{C\!}(D^{\alpha} + \tau^{-\alpha})^{\beta}}$ as well as $(D^{\alpha} + \tau^{-\alpha})^{\beta}$ are, respectively, defined in Eqs. \eqref{22/06-5} and \eqref{3/08-6}. \ \\}
\end{center}
\end{table}

\begin{table}[!h]
\begin{center}
\begin{tabular}{c | c  c}
model & $h(\xi)$ & $f(\xi, t)$ \\ \hline\hline
HN &  $n_{D}(\xi)$ & $f_{H\!N}(\alpha, \beta; \xi, t)$ given by Eq. \eqref{25/06-1} \\ 
 & $n_{C\!D}(\beta; \xi)$ & $f(\alpha; \xi, t)$ given by Eq. \eqref{25/06-1a} \\ \hline
JWS & $n_{D}(\xi)$ & $f_{J\!W\!S}(\alpha, \beta; \xi, t)$ given by Eq. \eqref{11/04-3} \\
 & $n_{M\!C\!D}(\beta; \xi)$ & $f(\alpha; \xi, t)$ \\ \hline
 CC & $n_{D}(\xi)$ & $f(\alpha; \xi, t)$ \\ 
  & $n_{CC}(\xi)$ & $\delta(t-\xi)$ \\ \hline
 CD & $n_{D}(\xi)$ & $f_{C\!D}(\beta; \xi, t) = f_{H\!N}(1, \beta; \xi, t)$ \\
  & $n_{C\!D}(\beta; \xi)$ & $\delta(t-\xi)$ \\ \hline
  MDC & $n_{D}(\xi)$ & $f_{M\!C\!D}(\beta; \xi, t) = f_{J\!W\!S}(1, \beta; \xi, t)$ \vspace{0.1cm} \\
  & $n_{M\!C\!D}(\beta; \xi)$ & $\delta(t-\xi)$
\end{tabular}
\caption{\label{tab-sub} The $h(\xi)$ and $f(\xi, t)$ appeared in Eq. \eqref{30/07-1} expressing subordination approach. Observe that for HN and JWS model we present two non-trivial types of subordinations.}
\end{center}
\end{table}

\section{Outlook}

{The standard relaxation models exemplified by the HN and JWS patterns depend on three material dependent parameters $\alpha$, $\beta$, and $\tau$, each time adjusted to the experimental data. These models satisfactorily describe relaxation phenomena characterized by one-peak (unimodal) behavior in the frequency domain. However there exist materials which for frequencies $\gtrsim10^{5}$ Hz exhibit either slower decay or multi-peak behavior of polarizability. This means that so far studied simple relaxation models do not  cover all possibilities thoroughly enough. As prospective challenger models used to describe experimentally more complex phenomena we mention the excess wing (EW) model \cite{KGorska21CNSNCb, RHilfer17, RRNigmatulin16} or models involving sums of standards relaxation patterns taken with different parameters \cite{Liu20}.  We remark that the EW model preserves complete monotonicity in the time domain  while the second approach may lead to the lack of this property and consequently to abandon mathematical methods which result in the complete monotonicity concepts and reflect seemingly imposing, but far from complete, physical interpretation of relaxation in terms of summing up the Debye decays.    

We also emphasize that the two-parameters KWW model mentioned in the Introduction (see Eq. \eqref{23/05_3}) must not be discarded as historical and old-fashioned. If we reduce it to the short-time, i.e., power-law, its asymptotics can be successfully used to describe discharge of atypical capacitors in modeling special electric circuits important for electrochemistry \cite{RTTGettens08, MHeari11} and biochemical processes \cite{EHernandez17, EHernandez20, EHernandez21}. It also serves to be a starting point comparison between exponential- and power-like  time behaviors of relaxation functions \cite{Alvarez91, Alvarez93, KGorska21c}.
}

\section*{Acknowledgments}
KG addresses her special thanks to the LPTMC, Sorbonne Universit\'{e} and personally to its director Prof. B. Delamotte for hospitality and help in arranging her stay in Paris. KG acknowledges the financial support provided to her under the Polish-French Programme  "Long-term research visits in Paris 2022" endowed by PAN (Poland) and CNRS (France). \\
KG and AH research was supported by the Polish National Research Centre (NCN) Research Grant OPUS-12 No. UMO-2016/23/B/ST3/01714; KG acknowledges also financial support provided to her by the NCN Grant Preludium Bis 2 No. UMO-2020/39/O/ST2/01563. \\
{ The authors are very grateful to anonymous referees for careful reading the manuscript, remarks, comments, and suggestions which essentially amended our paper and made it much more readable.}

\appendix
\section{Abbreviations}
{The following abbreviations are used in this manuscript:
\begin{align}\nonumber
\begin{split}
\text{KWW} & \quad \text{Kohlrausch-Williams-Watts } \\[-0.2\baselineskip]
 & \quad \text{(stretched exponential function)} \\[-0.2\baselineskip]
\text{CC} & \quad \text{Cole-Cole} \\[-0.2\baselineskip]
\text{CD} & \quad \text{Cole-Davidson} \\[-0.2\baselineskip]
\text{MCD} & \quad \text{mirror Code-Davidson}\\[-0.2\baselineskip]
\text{HN} & \quad \text{Havrilliak-Negami relaxation} \\[-0.2\baselineskip]
\text{JWS} & \quad \text{Jurlewicz-Weron-Stanislavsky} \\[-0.2\baselineskip]
\text{EW} & \quad \text{excess wings}  \\[0.2\baselineskip]
\text{CM} & \quad \text{completely monotonic} \\[-0.2\baselineskip]
\text{B} & \quad \text{Bernstein} \\[-0.2\baselineskip]
\text{CB} & \quad \text{completely Bernstein} \\[-0.2\baselineskip]
\text{S} & \quad \text{Stieltjes} \\[0.2\baselineskip]
\text{FP} & \quad \text{Fokker-Planck} \\[-0.2\baselineskip]
\text{PDF} & \quad \text{probability density function} \\[-0.2\baselineskip]
\text{CAS} & \quad \text{computer algebra systems} \\[0.2\baselineskip]
\text{LHS} & \quad \text{the left-hand side} \\[-0.2\baselineskip]
\text{RHS} & \quad \text{the right-hand side}
\end{split}
\end{align}
}

\section*{References}

  
\end{document}